\pgfplotsset{compat=1.14}
\newtheorem{theorem}{Theorem}
\newtheorem{lemma}{Lemma}
\newtheorem{corollary}{Corollary}
\theoremstyle{definition}
\newtheorem{definition}{Definition}
\newtheorem{claim}{Claim}
\newtheorem{remark}{Remark}
\newcommand{\mech}{\ensuremath{\mathcal{M}}}
\newcommand{\mechsp}{\ensuremath{\mathcal{SP}}}
\newcommand{\poa}{\mathrm{PoA}}
\newcommand{\pos}{\mathrm{PoS}}
\newcommand{\then}{\Longrightarrow}
\newcommand{\vecc}[1]{\ensuremath{\mathbf{#1}}}
\newcommand{\sset}[1]{\left\{ #1\right\}}
\newcommand{\ssets}[1]{\{ #1\}}
\newcommand{\fwh}[1]{\; \left| \; #1 \right.}
\newcommand{\fwhs}[1]{\; | \; #1 }
\newcommand{\opt}{\ensuremath{\mathrm{OPT}}}
\DeclareMathOperator*{\argmin}{argmin}
\newcommand{\inters}{\cap} 
\newcommand{\map}{\longrightarrow}
\renewcommand{\t}{\vecc{t}}
\renewcommand{\ss}{\vecc{s}}
\renewcommand{\AA}{\mathcal{A}}
\def \R{\mathbb R}
\def \Rgeq {{\R}_{\geq 0}}
\def \N{\mathbb N}
\author{
Aris Filos-Ratsikas\thanks{University of Liverpool. Email: \href{mailto:Aris.Filos-Ratsikas@liverpool.ac.uk}{\nolinkurl{Aris.Filos-Ratsikas@liverpool.ac.uk}}}
\and
Yiannis Giannakopoulos\thanks{TU Munich. Email: \href{mailto:yiannis.giannakopoulos@tum.de}{\nolinkurl{yiannis.giannakopoulos@tum.de}}}
\and
Philip Lazos\thanks{Sapienza University of Rome. Email: 
\href{mailto:plazos@gmail.com}{\nolinkurl{lazos@diag.uniroma1.it}}}
}
\title{The Pareto Frontier of Inefficiency in Mechanism Design\thanks{Supported by
ERC Advanced Grant 321171 (ALGAME), the Swiss National Science Foundation under
contract No.~200021\_165522 and the Alexander von Humboldt Foundation with funds
from the German Federal Ministry of Education and Research (BMBF). Also partially
supported by the ERC Advanced Grant 788893 AMDROMA ``Algorithmic and Mechanism
Design Research in Online Markets'' and MIUR PRIN project ALGADIMAR ``Algorithms,
Games, and Digital Markets''.
Y.\ Giannakopoulos is an associated researcher with the Research Training Group GRK 2201 ``Advanced Optimization in a Networked Economy'', funded by the German Research Foundation (DFG).
\newline\indent An extended abstract of this paper appeared in WINE'19~\cite{fgl2019}.
}}
\date{February 18, 2021}
\begin{document}

\maketitle
\begin{abstract}
	We study the trade-off between the Price of Anarchy (PoA) and the Price of
		Stability (PoS) in mechanism design, in the prototypical problem of
		unrelated machine scheduling. We give bounds on the space of feasible
		mechanisms with respect to the above metrics, and observe that two
		fundamental mechanisms, namely the First-Price (FP) and the Second-Price
		(SP), lie on the two opposite extrema of this boundary. Furthermore, for the
		natural class of anonymous task-independent mechanisms, we completely
		characterize the PoA/PoS Pareto frontier; we design a class of optimal
		mechanisms $\mechsp_\alpha$ that lie \emph{exactly} on this frontier. In
		particular, these mechanisms range smoothly, with respect to parameter
		$\alpha\geq 1$ across the frontier, between the First-Price ($\mechsp_1$)
		and Second-Price ($\mechsp_\infty$) mechanisms.

		En route to these results, we also provide a definitive answer to an
		important question related to the scheduling problem, namely whether
		non-truthful mechanisms can provide better makespan guarantees in the
		equilibrium, compared to truthful ones. We answer this question in the
		negative, by proving that the Price of Anarchy of \emph{all} scheduling
		mechanisms is at least $n$, where $n$ is the number of machines.
\end{abstract}

\section{Introduction}
\label{sec:intro}

The field of \emph{algorithmic mechanism design} was established in the seminal
paper of~\citet{Nisan:2001aa} and has ever since been at the centre
of research in the intersection of economics and computer science. The research
agenda put forward in \cite{Nisan:2001aa} advocates the study of approximate
solutions to interesting optimization problems, in settings where rational agents
are in control of the input parameters. More concretely, the authors of
\cite{Nisan:2001aa} proposed a framework in which, not unlike classical approaches
in approximation algorithms, algorithms that operate under certain limitations are
evaluated in terms of their approximation ratio. In particular, in algorithmic
mechanism design, this constraint comes from the requirement that agents should have
the right incentives to always report their inputs \emph{truthfully}. The
corresponding algorithms, paired with appropriately chosen payment functions, are
called \emph{mechanisms} \cite{Nisan07}.

Another pioneering line of work, initiated by~\citet{elias} and popularized further
by~\citet{Roughgarden2002a}, studies the \emph{inefficiency} of games through the
notion of the \emph{Price of Anarchy (PoA)}, which measures the deterioration of
some objective at the worst-case Nash equilibrium. A more optimistic version of the
same principle, where the inefficiency is measured at the \emph{best}
equilibrium~\citep{Schulz2003}, was introduced in~\cite{anshelevich2008price}, under
the name of \emph{Price of Stability (PoS)}.

Given the straightforward observation that mechanisms induce games between the
agents that control their inputs, as well as the fact that truthfulness is typically
a very demanding property, an alternative approach to the framework
of~\citet{Nisan:2001aa} is to design mechanisms that perform well \emph{in the
equilibrium}, i.e., they provide good PoA or PoS guarantees. This approach has been
adopted, among others, by central papers in the field (e.g., see
\cite{syrgkanis2013composable,roughgarden2017price} and references therein) and is
by now as much a part of algorithmic mechanism design as the original framework of
\citep{Nisan:2001aa}. An interesting question that has arisen in many settings is
whether non-truthful mechanisms (evaluated at the worst-case equilibrium, in terms
of their PoA) can actually outperform truthful ones (evaluated at the truth-telling,
dominant strategy equilibrium), for a given objective
\cite{K14,GKK16,christodoulou2016social}.

While the literature that studies the concepts of PoA and PoS is long and extensive,
there seems to be a lack of a \emph{systematic approach} investigating the trade-off
between the two notions \emph{simultaneously}. More concretely, given a problem in
algorithmic mechanism design, it seems quite natural to explore not only the best
mechanisms in terms of the two notions independently, but also the mechanisms that
achieve the best trade-off between the two. In a sense, this approach concerns a
``tighter'' optimality notion, as among a set of mechanisms with an ``acceptable''
Price of Anarchy guarantee, we would like to identify the ones that provide the best
possible Price of Stability. Our main contribution in the current paper is the proposal
of such a research agenda and its application on the canonical problem in the field,
introduced in the seminal work of~\citet{Nisan:2001aa}, that of scheduling on
unrelated machines.

\subsection{Our Contributions}
\label{sec:intro_results}

\paragraph{PoA/PoS trade-off:} We propose the \emph{research agenda of studying systematically the
trade-off between the Price of Anarchy and the Price of Stability in algorithmic
mechanism design}. Specifically, given a problem at hand and an objective function,
we are interested in the trade-off between the PoA and the PoS
of mechanisms for the given objective.
We apply this approach on the prototypical problem of algorithmic mechanism design
studied in \cite{Nisan:2001aa}, that of unrelated machine scheduling, where the
machines are self-interested agents.

First, in~\cref{sec:pareto_all}, for the class of \emph{all} possible mechanisms,
we prove that PoA guarantees imply corresponding PoS lower bounds and vice-versa
(\cref{thm:tradeoff}), which allows us to quantify the possible trade-off between
the two inefficiency notions in terms of a feasible region
(see~\cref{fig:pareto_general}); we refer to the boundary of this region as the
\emph{inefficiency boundary}. Interestingly, two well-known
mechanisms, namely the First-Price and the Second-Price mechanisms, turn out to lie on the
extreme points of this boundary.

Next, in~\cref{sec:task-independent}, for the well-studied class of task-independent
and anonymous mechanisms,\footnote{We remark that the best known mechanisms for
several variants of truthful scheduling are task-independent and anonymous. In
\cref{sec:discussion}, we provide a more detailed discussion, as well as an almost
matching trade-off bound for mechanisms that need not be anonymous
(see~\cref{thm:sqrt-tradeoff}).} we are able to show a tighter feasibility region
(\cref{thm:task_independent_tradeoff}). As a matter of fact, its inefficiency
boundary turns out to \emph{completely characterize} the achievable trade-off
between the PoA and the PoS: we design a class of mechanisms (\cref{sec:spa_mechs})
called $\mechsp_\alpha$, parameterized by a quantity $\alpha$, which are
\emph{optimal} in the sense that for any possible trade-off between the two
inefficiency notions, there exists a mechanism in the class (i.e., an appropriate
choice of $\alpha$) that exactly achieves this trade-off
(\cref{thm:sp_alpha_poa,thm:sp_alpha_pos}). In other words, we obtain an exact
description of the \emph{Pareto frontier of inefficiency}
(see~\cref{fig:pareto_task_indi}).

Our $\mechsp_\alpha$ mechanisms are simple and
intuitive and are based on the idea of setting reserve prices \emph{relatively} to
the declarations of the fastest machines. While this is clearly not truthful, we
prove that it induces the equilibria which are desirable for our results. More
precisely, the choice of $\alpha$ enables us to ``control'' the set of possible
equilibria in a way that allows us to achieve any trade-off on the boundary.

\paragraph{The Price of Anarchy of scheduling:} Our results also offer insights in
an other interesting direction. The inefficiency boundary result for general
mechanisms is based on a novel monotonicity lemma (\cref{lem:monotonicity}), which
is quite different from the well-known \emph{weak monotonicity}
property~\citep{saks2005weak} (see, e.g.,
\cite{Nisan:2001aa,christodoulou2009lower}). Interestingly, we also use this lemma
to prove a \emph{general lower bound of $n$} on the PoA of \emph{any} mechanism for
the scheduling problem (\cref{thm:lower}), where $n$ is the number of machines. This
result contributes to the intriguing
debate~\citep{K14,GKK16,christodoulou2016social} of whether general mechanisms (that
may be non-truthful, evaluated at the worst-case equilibrium) can outperform
truthful ones (evaluated at the truth-telling equilibrium). Given that the best
known truthful mechanism achieves an $n$-approximation, our results here provide a
definitive, negative answer to the aforementioned question (see
\cref{sec:truthful-vs-strategic} for a more detailed discussion). As a matter of
fact, in \cref{thm:PoAtruthful}, we actually show that when evaluated at their
worst-case equilibrium, truthful mechanisms are bound to perform even more poorly,
as their PoA is unbounded.  \medskip \medskip

Finally, in \cref{sec:discussion}, we conclude with a detailed discussion, where we
identify several intriguing directions for future work, both on a technical and a
conceptual level.

\subsection{Related Work}
\label{sec:intro_related}

\subsubsection{The Algorithmic Scheduling Problem}

The algorithmic version of the scheduling problem (without any consideration to
incentives) is one of the most fundamental problems in computer science, whose
origins can be traced back to the works of \citet{johnson1954optimal},
\citet{jackson1955scheduling} and \citet{graham1966bounds}. The problem is often
also generally referred to as the ``Job Shop Scheduling Problem''
\cite{garey1976complexity}, as it accurately models job assignment problems in
manufacturing systems. On top of this connection, the machine scheduling problem in
fact enjoys a plethora of applications, ranging from classical problems in
distributed computing, such as assigning computational tasks to parallel processors,
to newer applications in multi-agent systems, such as assigning vehicles to charging
stations. For more details and applications, we refer the reader to some of the
works on the algorithmic version~\citep{ibarra1977heuristic,Davis1981,LST90}, as
well as the surveys of \citet{HochbaumHall97}, \citet{potts2009fifty} and
\citet{lenstra1977complexity}, and the books of \citet{pinedo2012scheduling} and
\citet{kan2012machine}.

\subsubsection{The Selfish Scheduling Problem}

The scheduling problem on unrelated selfish machines is the prototypical problem
studied by~\citet{Nisan:2001aa} in 1999, when they introduced the field of
algorithmic mechanism design. The authors consider the worst-case performance of
truthful mechanisms on dominant strategy, truth-telling equilibria, and discover
that the well-known Second-Price auction\footnote{In the related literature, this
mechanism is often referred to as the Vickrey-Clarke-Groves (VCG)
mechanism~\citep{Vickrey1961a,Clarke1971a,Groves1973a}. The mechanism was originally
referred to as the ``minWork Mechanism'' in~\citep{Nisan:2001aa}.} has an
approximation ratio of $n$ for the problem, where $n$ is the number of machines.
Despite several attempts over the years, this is still the best-known truthful
mechanism. On the other hand, the succession of the best proven lower bounds started
with $2$ in \cite{Nisan:2001aa}, improved to $2.41$
by~\citet{christodoulou2009lower} and to $2.61$ by~\citet{KV13}, and
finally\footnote{During the preparation of our paper, a new manuscript
by~\citet{dobzinski2020improved} appeared online, further improving the lower bound
to $2.80$.} to $2.75$ in the very recent work of~\citet{giannakopoulos2020new}.
Interestingly, \citet{ashlagi2012optimal}
showed a matching lower bound of $n$ for \emph{anonymous} mechanisms (i.e.,
mechanisms that do not take the identities of the machines into account) and whether
there is a better mechanism that is not anonymous is still the most prominent open
problem in the area. In any case, anonymity is in general a desirable property which
is satisfied by most natural mechanisms (including the best known mechanisms for
scheduling~\cite{LST90}); we further discuss the role of this property in our
setting in~\cref{sec:discussion}.

Several other variants of the problem have also been considered over the years, such
as randomized mechanisms \cite{Nisan:2001aa,lu2008randomized,mu2007setting},
fractional scheduling \cite{christodoulou2010mechanism}, Bayesian scheduling
\cite{Chawla2013a,DW15,gkyr2015-wine} or restricted domains where the processing
times come from discrete sets \cite{lavi2009truthful}. Alongside the approximation
ratio results, there has also been work on structural properties and
characterizations
\cite{dobzinski2008characterizations,christodoulou2008characterization}. For a more
detailed exposition of some of these results, we refer the reader to the survey
of~\citet{christodoulou2009mechanism}.

\subsubsection{The Truthful Setting vs the Strategic Setting}\label{sec:truthful-vs-strategic}

As we mentioned earlier, given that truthfulness is a very demanding requirement
which imposes strict constraints on the allocation and payment functions, it is an
interesting direction to consider whether \emph{non-truthful} mechanisms could
perform better, when evaluated in the worst-case equilibrium. In other words, for a
given problem, one could ask the following question:
\begin{quote}\emph{``Do there exist (non-truthful) mechanisms whose Price of Anarchy
outperforms the approximation ratio guarantee of all truthful mechanisms?''}.
\end{quote} To differentiate, we will refer to the traditional approach
of~\citet{Nisan:2001aa} as the \emph{truthful setting} and to the setting where all
mechanisms are explored (with respect to their Nash equilibria) as the
\emph{strategic setting}.

\Citet{K14} studied the truthful setting for the problem of unrelated machine
scheduling \emph{without money} but he explicitly advocated the strategic setting as
a future direction. This was later pursued in \citet{GKK16} for the same problem,
where the authors answered the aforementioned question in the affirmative. The same
approach was taken in \cite{christodoulou2016social} following the results of
\cite{filos2014truthful} on the limitations of truthful mechanisms for indivisible
item allocation. In the literature of auctions, the strategic setting was studied
even in domains for which an optimal truthful mechanism (the VCG mechanism) exists,
motivated by the fact that non-truthful mechanisms are being employed in practice,
with the Generalized Second-Price auction used by Google for the Adwords allocation
being a prominent example \cite{caragiannis2015bounding}. We refer the reader to the
survey of~\citet{roughgarden2017price} for more details.

Somewhat surprisingly, although the exploration of different solution concepts
besides dominant strategy equilibria was already explicitly mentioned as a future
direction by~\citet{Nisan:2001aa}, the strategic setting for
the scheduling problem was not studied before our paper. As we mentioned earlier,
the answer to the highlighted question above here is negative, but the setting
proved out to be quite rich in terms of the achievable trade-off between the two
different inefficiency notions.

To the best of our knowledge, ours is the first paper that proposes the systematic
study of the trade-off between the Price of Anarchy and the Price of Stability.
While preparing our manuscript, we became aware that a trade-off between the two
notions was very recently considered also in \citet{ramaswamy2017impact}, though in
a fundamentally different setting: the authors of \cite{ramaswamy2017impact} study a
special case of covering games, originally introduced
by~\citet{gairing2009covering}, which is not inherently a mechanism design setup.
On the contrary, our interest is in explicitly studying this trade-off in the area
of algorithmic mechanism design, thus choosing the prototypical scheduling problem
as the starting point.

\section{Model and Notation}
\label{sec:prelims}

Let $\Rgeq=[0,\infty)$ denote the nonnegative reals and $\N=\sset{1,2,\dots}$ the
positive integers. For any $n\in\N$, let $[n]=\ssets{1,2,\dots,n}$. In the
\emph{strategic scheduling} problem (on unrelated machines), there is a set
$N=\{1,\ldots,n\}$ of \emph{machines} (or agents) and a set $J=\{1,\ldots,m\}$ of
\emph{tasks}. Each machine $i$ has a \emph{processing time} (or \emph{cost})
$t_{i,j}\geq 0$ for task $j$. The induced matrix $\t\in\Rgeq^{n\times m}$ is the
\emph{profile} of processing times. For convenience, we will denote by $\t_i =
(t_{i,1}, \ldots, t_{i,m})$ the vector of processing times of machine $i$ for the
tasks and by $\t^j = (t_{1,j},\ldots, t_{n,j})$ the vector of processing
times of the machines for task $j$, so that $\t = (\t_1, \ldots, \t_n) =
(\t^1,\ldots,
\t^m)^\top$.
The machines are
\emph{strategic} and therefore, when asked, they do not necessarily report their
true processing times $\t$ but they rather use \emph{strategies} $\ss\in\Rgeq^{n\times m}$.
To emphasize the distinction, we will often refer to $\t$ as the profile of
\emph{true} processing times. Adopting standard game-theoretic notation, we use
$\vecc t_{-i}$ and $\vecc s_{-i}$ to denote the profile of true or reported
processing times respectively, without the coordinates of the $i$'th machine.

A (deterministic, direct revelation) \emph{mechanism} $\mech=(\vecc x,\vecc p)$ gets
as input a strategy profile $\ss\in\R^{n\times m}$ reported by the machines and
outputs \emph{allocation} $\vecc x=\vecc x(\ss)\in\sset{0,1}^{n\times m}$ and
\emph{payment} $\vecc p=\vecc p(\ss)\in\Rgeq^{n}$: $x_{i,j}$ is an indicator
variable denoting whether or not task $j$ is allocated to machine $j$, and $p_i$ is
the payment with which $\mech$ compensates machine $i$ for taking part in the
mechanism. Thus, the allocation rule needs to satisfy $\sum_{i\in N}x_{i,j}(\vecc s)
= 1$ for all tasks $j$.

The \emph{utility} of machine $i$ under a mechanism $\mech=(\vecc x, \vecc p)$,
given true running times $\t_i$ and a reported profile $\ss$ by the machines, is
$$
u_i^{\mech}(\ss|\t_i)= p_i(\ss) - \sum_{j=1}^mx_{i,j}(\vecc s)t_{i,j},
$$
that is, the payment she receives from $\mech$ minus the total workload she has to
execute. This is exactly the reason why machines may lie about their true processing
times; they will change their report $\ss_i$ and deviate to another $\ss_{i}'$ if this
improves the above quantity. A stable solution with respect to such best-response
selfish behaviour is captured by the well-known notion of an equilibrium.
Given a mechanism $\mech$ and a strategy profile $\ss$, we will say that $\ss$ is a
\emph{(pure Nash) equilibrium}\footnote{We will be interested in pure Nash
equilibria in this paper, but we discuss different solution concepts
in~\cref{sec:solutionconcepts} as well as in~\cref{sec:discussion}.} of
$\mech$ (with respect to a true profile $\t$) if, for every machine $i$ and every
possible deviation $\ss_i'\in\Rgeq^m$,
$$
u_i^{\mech}(\ss|\t) \geq u_i^{\mech}(\ss_i',\ss_{-i}|\t).
$$
Let $\mathcal{Q}_\t^\mech$ denote the set of pure Nash equilibria of mechanism
$\mech$ with respect to true profile $\vecc t$. Following the related literature (see, 
e.g., \cite{Nisan:2001aa,christodoulou2016social,GKK16}), we
will consider mechanisms for which Nash equilibria exist for every profile of processing times, i.e., $\mathcal{Q}_\t^\mech\neq\emptyset$ for all $\vecc
t\in\Rgeq^{n\times m}$.

Our objective is to design mechanisms that minimize the \emph{makespan}
$$
C^\mech(\ss|\t)=\max_{i\in N} \sum_{j=1}^mx_{i,j}(\vecc s)t_{i,j},
$$
that is, the total completion time if our machines run in parallel. For a matrix
$\t$ of running times, let $\opt(\t)$ denote the optimum makespan, i.e., $\opt(\vecc
t)=\min_\vecc y \max_{i \in N} \sum_{j=1}^my_{i,j}t_{i,j}$ where $\vecc y$ ranges over all feasible
allocation of tasks to machines. It is a well-known phenomenon that equilibria can
result in suboptimal solutions, and the following, extensively studied, notions
where introduced to quantify exactly this discrepancy: the \emph{Price of Anarchy}
(PoA) and the \emph{Price of Stability} (PoS) of a scheduling mechanism $\mech$ on
$n$ machines are, respectively,
$$
\poa(\mech)=\sup_{m\in\N,\t\in\Rgeq^{n\times m}}\frac{\sup_{\vecc s\in\mathcal{Q}^\mech_\t} C^\mech(\ss|\t)}{\opt(\vecc t)}
\qquad
\pos(\mech)=\sup_{m\in\N,\t\in\Rgeq^{n\times m}}\frac{\inf_{\vecc s\in\mathcal{Q}^\mech_\t} C^\mech(\ss|\t)}{\opt(\vecc t)}.
$$

For simplicity, we will sometimes drop the $\mech$, $\t$ and $\ss$ in the notation
introduced in this section, whenever it is clear which mechanism and which true or
reported profile we are referring to.

\subsection{Task-Independent Mechanisms}

For a significant part of this paper, we will focus on the class of anonymous,
task-independent mechanisms. This is a rather natural class of mechanisms; as a
matter of fact, two of the arguably most well-studied and used mechanisms in
practice, namely the First-Price and Second-Price, lie within this class.

\begin{definition}[Task-independence]\label{def:task-ind}
  A mechanism $\mech=(\vecc x,\vecc p)$ is called \emph{task-independent} if
  each one of its tasks is allocated independently of the others.
  Formally, there exists a collection of single-task mechanisms
  $\{\AA_j\}_{j=1,\dots,m}$, $\AA_j=(\vecc y^j,\vecc q^j)$, such that,
  for any task $j$, any machine $i$, and for any strategy profile $\ss$,
	$$
	\vecc x^j(\vecc s)=\vecc y^j(\vecc s^j) \qquad \text{and}
        \qquad p_i(\vecc s) =\sum_{j=1}^m q^j_i(\vecc s^j).
	$$
\end{definition}
We will refer to the single-task mechanisms $\AA_j$ of the above
definition as the \emph{components} of $\mech$. It is important to
notice here that the definition does not require the mechanism to
necessarily use the same component for all the tasks.

Another standard property in the literature of the problem is anonymity. The
property can be defined generally (e.g., see \cite{K14,ashlagi2012optimal}), but
here we will define it for task-independent mechanisms. Since we are dealing with
potentially non-truthful mechanisms, our notion of anonymity needs to refer to the
equilibria of the mechanism.
\begin{definition}[Anonymity]\label{def:anonymity}
   A single-task mechanism $\AA=(\vecc x,\vecc p)$ is \emph{anonymous}
  if, for any true processing time profile $\t$ with no ties\footnote{That is, $t_i \neq t_{i'}$ for all $1 \le i\neq {i'} \le n$.}
  and any permutation\footnote{For any permutation
  $\pi:\ssets{1,\dots,n}\map\ssets{1,\dots,n}$ and $n$-dimensional vector $\vecc
  x=(x_1,\dots,x_n)$, the permutation of $\vecc x$ under $\pi$ is the vector
  $\pi(\vecc x)\equiv (x_{\pi(1)},\dots,x_{\pi(n)})$.} $\pi$, if there exists an
  equilibrium $\ss$ under $\t$, then there exists an equilibrium $\tilde \ss$ under
  true profile $\pi(\t)$ with allocation $\vecc{x}(\tilde \ss) =
  \pi(\vecc{x}(\ss))$.
  A task-independent mechanism $\mech$ is anonymous, if all its
   components are anonymous (single-task) mechanisms.
\end{definition}

\begin{remark}
\label{rem:anonymity}
Our notion of anonymity refers to the true profiles, and stipulates that after any permutation
of machine identities, a corresponding equilibrium exists. This is
the natura analogue of anonymity for the case of Nash equilibria; indeed, if one substitutes the notion of ``Nash equilibrium'' by ``dominant strategy equilibrium'' in \cref{def:anonymity}, then the standard notion employed by \citet{ashlagi2012optimal} for truthful mechanisms is recovered. Note that similarly to \cite{ashlagi2012optimal}, we only require this property to hold when the profiles of processing times do not exhibit ties.
\end{remark}

Perhaps the simplest and most natural mechanism that one can think of
is the following, which assigns the task to the fastest machine
(according to the declared processing times) and pays her her
declaration.
\begin{definition}[First-Price (FP) mechanism]
  \label{def:FPmech}
  Assign each task $j$ to the fastest machine $\iota(j)$ for it,
  i.e. $\iota(j) \in \arg\min_{i \in N} s_{i,j}$ (breaking ties
  arbitrarily), paying her her declared running time $s_{\iota(i),j}$;
  pay the remaining $N\setminus \ssets{\iota(j)}$ machines $0$ for task
  $j$.
\end{definition}

Second-Price mechanisms have also been extensively studied and applied in auction
theory, but also in strategic scheduling. As we mentioned in the introduction, the
following mechanism is usually referred to as the VCG mechanism in the literature of
the problem (see e.g., \cite{christodoulou2009mechanism}):
\begin{definition}[Second-Price (SP) mechanism]
\label{def:SPmech}
	Assign each task $j$ to the fastest machine $\iota(j)$ for it, i.e., $\iota(j)
	\in \arg\min_{i \in N} s_{i,j}$ (breaking ties arbitrarily), paying her the
	declared processing time of the second-fastest machine, i.e. $\min_{i \in N
	\setminus \ssets{\iota(j)}} s_{i,j}$; pay the remaining $N\setminus
	\ssets{\iota(j)}$ machines $0$ for task $j$.
\end{definition}

Notice that both FP and SP mechanisms are task-independent and anonymous.
Furthermore, SP is truthful. As a matter of fact, SP is the best known truthful
mechanism if one is interested only in dominant strategy equilibria~(see,
e.g.,~\cite{Nisan:2001aa,christodoulou2009lower} and \cref{sec:solutionconcepts}).

\subsection{Solution Concepts and Notions of Inefficiency}
\label{sec:solutionconcepts}

The solution concept that we consider in this paper is that of the pure Nash
equilibrium. In the literature of the truthful scheduling problem, the employed
solution concept is that of the \emph{dominant strategy} equilibrium, i.e., a
strategy profile in which no agent would have an incentive to deviate to any
other strategy, no matter the strategies of the remaining agents. More
precisely, the literature has been interested in \emph{truthful} mechanisms, i.e.,
mechanisms for which truth-telling is always (i.e., for any processing time profile
$\t$) a dominant strategy equilibrium. The goal is to find a mechanism with the best
\emph{approximation ratio}, which is defined as the worst-case (over all inputs)
ratio of the makespan of the mechanism over the optimal makespan, in the
truth-telling equilibrium.

For this objective, studying only the truth-telling dominant strategy equilibria is
without loss of generality, by the Revelation Principle (see, e.g., \cite{Nisan07}). We
remark however that, a priori, the fact that an allocation function can be implemented in
truth-telling dominant strategies (i.e., an appropriate payment function can be
found such that the resulting mechanism is truthful) does not have any
implications on the space of non-truthful mechanisms and their PoA/PoS guarantees.

There are however some inherent relations between the approximation ratio, the Price
of Anarchy and the Price of Stability which follow directly from their definitions.
Clearly, a Price of Anarchy guarantee is stronger than a Price of Stability
guarantee, since the former bounds the inefficiency of all equilibria while the
latter is only concerned with the best one. Since dominant strategy equilibria are
also Nash equilibria by definition, for truthful mechanisms, a Price of Anarchy
guarantee is also stronger than an approximation ratio guarantee, which, in turn, is
stronger than a Price of Stability guarantee. An illustration of the relation
between these different notions is given in \cref{fig:solutionconcepts}.

\begin{figure} 
	\centering    
	\begin{tikzpicture}[font=\footnotesize]
 \draw [draw=black, fill=gray, opacity=0.2] (0,0) ellipse (3cm and 2cm);

 \draw [draw=black, fill=blue, opacity=0.8] (0,0.3) circle (0.1cm and 0.1cm);
 \draw [draw=black, fill=blue, opacity=0.8] (0.2,1.5) circle (0.1cm and 0.1cm);
 \draw [draw=black, fill=blue, opacity=0.8] (-0.5,0.9) circle (0.1cm and 0.1cm);
 \draw [draw=black, fill=blue, opacity=0.8] (0.7,0.6) circle (0.1cm and 0.1cm);
 \draw [draw=black, fill=blue, opacity=0.8] (-0.8,0.15) circle (0.1cm and 0.1cm);
 \draw [draw=black, fill=blue, opacity=0.8] (-1.23,0.38) circle (0.1cm and 0.1cm);

 \draw [draw=black, fill=blue, opacity=0.8] (0.5,-0.1) circle (0.1cm and 0.1cm);
 \draw [draw=black, fill=blue, opacity=0.8] (0.7,-0.6) circle (0.1cm and 0.1cm);
 \draw [draw=black, fill=blue, opacity=0.8] (1.4,-0.2) circle (0.1cm and 0.1cm);
 \draw [draw=black, fill=blue, opacity=0.8] (-0.8,-0.4) circle (0.1cm and 0.1cm);
 \draw [draw=black, fill=blue, opacity=0.8] (1,-1.1) circle (0.1cm and 0.1cm);

 \draw[->] (-1.23,0.38) -- (-3.23,1);
 \node[above] at (-3.23,1) {Best equilibrium (PoS)};

\draw[->] (0,0.3) -- (3.0,1);
\node[above] at (3.5,1) {\begin{minipage}{3.5cm}
Truth-telling profile $\t$\\(Approximation Ratio)
\end{minipage}};

\draw[->] (1.4,-0.2) -- (2.9,-1.5);
\node[below] at (3.3,-1.4) {Worst equilibrium (PoA)};

\draw [dashed] plot [smooth cycle] coordinates {
(-1.4,-0.)
(-1.,-0.7)
(-0.3,-0.6)
(1.25,-1.4)
(1.73,-0.7)
(0.3,1.8)
(-1.05,1)
};

 \draw[->] (0.3,-0.9) -- (-0.9,-2.3);
\node[below] at (-0.9,-2.3) {All equilibria};

\draw[->] (-3.2,-3) -- (5,-3);
\draw[dashed] (-3,-3.0) -- (-3,0);
\node[below] at (-3.0,-3) {$1$};
\node[below] at (5.0,-3.0) {$\infty$};
\node[above] at (5.0,-3.0) {Inefficiency};
 
\end{tikzpicture}
	\caption{A pictorial representation of the relation between the different
	solution concepts and the notions of inefficiency.
	The blue nodes represent the set of equilibria of a mechanism $\mech$ for a
	fixed true underlying profile of processing times $\t$ (they are depicted as a
	finite set, for convenience, but this need not be the case).
	The wider grey area is the set of all feasible input strategy profiles $\ss$ of
	$\mech$.
	We have marked the best and worst (under $\t$) equilibria (assuming they are
	unique, for ease of exposition), as well as the truth-telling profile. Note that
	if $\mech$ is truthful, then the truth-telling profile is a (dominant strategy)
	equilibrium (but, in general, this profile might not even belong to the set of
	equilibria).
	The $\poa$ bounds the inefficiency of all the blue nodes, the $\pos$ bounds the
	inefficiency of the left-most node, and the approximation ratio bounds the
	inefficiency of the truth-telling node; all these bounds are computed, in the
	worst case, over all possible true profiles $\t$).}
	\label{fig:solutionconcepts}
\end{figure}
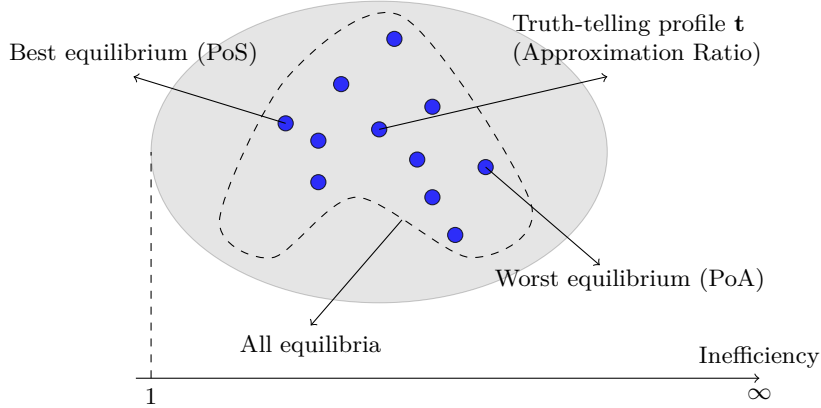

\section{The Inefficiency of \emph{All} Mechanisms}
\label{sec:pareto_all}

We start with a lower bound of $n$ for the Price of Anarchy of the scheduling
problem, which applies to \emph{all} mechanisms. The lower bound will be based on
the following monotonicity lemma. We note that this monotonicity property is
different from the weak monotonicity (WMON) used in the literature of truthful
machine scheduling (see e.g.,~\citep{Nisan:2001aa,christodoulou2009lower}), in the
sense that (a) it is global, whereas WMON is local and (b) it applies to the
relation between the true processing times and the equilibria of the mechanism,
rather than the actual allocations.

\begin{lemma}[Equilibrium Monotonicity]
\label{lem:monotonicity}
	Let ${\mech}$ be any mechanism for the scheduling problem. Let
	$\t$ be a profile of true processing times and let
	$\ss \in \mathcal{Q}_\t$ be an equilibrium under $\t$.
	Denote by $S_i$ the set of tasks assigned to machine $i$ by $\mech$ on input $\ss$.
	Consider any profile $\hat{\t}$
	such that for every machine $i$, $\hat{t}_{i,j} \leq t_{i,j}$ if $j \in S_i$ and
	$\hat{t}_{i,j} \geq t_{i,j}$ if $j \notin S_i$.
	Then $\ss\in\mathcal{Q}_{\hat{\t}}$, i.e., $\ss$ is an
	equilibrium under $\hat{\t}$ as well.
\end{lemma}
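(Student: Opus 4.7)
The plan is to verify directly that $\ss$ remains an equilibrium under $\hat{\t}$ by bounding, for an arbitrary single-machine deviation $\ss_i'$, the utility gap $u_i^\mech(\ss\fwhs{\hat{\t}_i}) - u_i^\mech(\ss_i',\ss_{-i}\fwhs{\hat{\t}_i})$ in terms of the corresponding gap under the original true profile $\t$, which we already know is nonnegative.

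Concretely, I would fix machine $i$ and an arbitrary deviation $\ss_i'$, let $S_i'$ denote the set of tasks that $\mech$ assigns to $i$ on input $(\ss_i',\ss_{-i})$, and write the two relevant utility differences as
\begin{align*}
\Delta(\t) &:= u_i^\mech(\ss\fwhs{\t_i}) - u_i^\mech(\ss_i',\ss_{-i}\fwhs{\t_i}) = \bigl[p_i(\ss)-p_i(\ss_i',\ss_{-i})\bigr] + \sum_{j\in S_i'} t_{i,j} - \sum_{j\in S_i} t_{i,j},\\
\Delta(\hat{\t}) &:= u_i^\mech(\ss\fwhs{\hat{\t}_i}) - u_i^\mech(\ss_i',\ss_{-i}\fwhs{\hat{\t}_i}) = \bigl[p_i(\ss)-p_i(\ss_i',\ss_{-i})\bigr] + \sum_{j\in S_i'} \hat{t}_{i,j} - \sum_{j\in S_i} \hat{t}_{i,j}.
\end{align*}
The payment term is identical (payments depend only on reports $\ss$, which do not change), so the whole argument reduces to showing that $\Delta(\hat{\t})\geq \Delta(\t)$.

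The main step is the algebraic identity obtained by subtracting and cancelling $S_i\cap S_i'$,
$$
\Delta(\hat{\t}) - \Delta(\t) = \sum_{j\in S_i'\setminus S_i} (\hat{t}_{i,j}-t_{i,j}) \;+\; \sum_{j\in S_i\setminus S_i'} (t_{i,j}-\hat{t}_{i,j}).
$$
Each summand is nonnegative by the hypotheses on $\hat{\t}$: for $j\in S_i'\setminus S_i$ we have $j\notin S_i$, hence $\hat{t}_{i,j}\geq t_{i,j}$; for $j\in S_i\setminus S_i'$ we have $j\in S_i$, hence $t_{i,j}\geq \hat{t}_{i,j}$. Therefore $\Delta(\hat{\t})\geq \Delta(\t)\geq 0$, where the last inequality is the equilibrium condition on $\ss$ under $\t$. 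Since the deviation $\ss_i'$ was arbitrary, $\ss\in\mathcal{Q}_{\hat{\t}}$.

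There is no genuinely hard step here; the whole argument is really a bookkeeping exercise, and the only thing to be careful about is that the set $S_i'$ produced by the deviation can differ arbitrarily from $S_i$, so one must decompose along the symmetric difference $S_i\triangle S_i'$ rather than try to compare the summations task-by-task. Once that decomposition is done, the sign conditions coming from the hypotheses on $\hat{\t}$ (decreases inside $S_i$, increases outside $S_i$) line up exactly with the signs needed to make both residual sums nonnegative, which is the content of the lemma.
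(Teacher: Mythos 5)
Your argument is correct and is essentially the paper's proof: the same comparison of the utility gap under $\t$ and $\hat{\t}$, decomposed over $S_i\setminus S_i'$ and $S_i'\setminus S_i$, with the sign conditions on $\hat{\t}$ giving the desired inequality. The only cosmetic difference is that you argue directly ($\Delta(\hat{\t})\geq\Delta(\t)\geq 0$) whereas the paper phrases it as a contradiction starting from a hypothetical profitable deviation under $\hat{\t}$.
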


\begin{proof}
	Assume by contradiction that $\ss \notin \mathcal{Q}_{\hat{\t}}$, which means
	that for the profile of processing times $\hat{\t}$, there exists some machine
	$i$ that has a beneficial deviation $\vecc s_i'$, i.e.,
	$u_i(\ss_i',\ss_{-i}|\hat{\t}) > u_i(\ss|\hat{\t})$.
	Let $S_i'$ be the set of tasks assigned to machine
	$i$ under report $\ss'=(\ss_i',\ss_{-i})$ (and underlying true reports $\hat{\t}$).
	The difference in utility for machine $i$
	between profiles $\ss'$ and $\ss$ is
	\[ \Delta u_i(\hat{\t})
	\equiv u_i(\ss'|\hat{\t}) - u_i(\ss|\hat{\t})
	= p_i(\ss') - p_i(\ss) + \sum_{j\in S_i \backslash S_i'} \hat{t}_{i,j} - \sum_{j \in S_i'\backslash S_i} \hat{t}_{i,j}.
	\]
	By the fact that $s_i'$ is a beneficial deviation, it holds that $\Delta u_i(\hat{\t}) >0$.

	Now consider the profile of processing times $\t$ and the same deviation $\vecc s_i'$ of machine $i$. The increase in utility is
	\[ \Delta u_i(\t)= p_i(\ss') - p_i(\ss) + \sum_{j\in S_i \backslash S_i'} t_{i,j} - \sum_{j \in S_i'\backslash S_i} t_{i,j}
		\geq p_i(\ss') - p_i(\ss) + \sum_{j\in S_i \backslash S_i'} \hat{t}_{i,j} - \sum_{j \in S_i'\backslash S_i} \hat{t}_{i,j}
		= \Delta u_i(\hat{\t}),
	\]
	which holds because $t_{i,j} \geq \hat{t}_{i,j}$, if $j \in S_i$ and $t_{i,j} \leq \hat{t}_{i,j}$, if $j \notin S_i$.
	This implies that $\Delta u_i(\t) >0$, which contradicts the fact that $\ss\in\mathcal{Q}_{\t}$.
\end{proof}
Using this lemma, we can prove our first lower bound:
\begin{theorem}
\label{thm:lower}
	For any scheduling mechanism $\mech$ for $n$ machines, it must be that $\poa(\mech) \geq n$.
\end{theorem}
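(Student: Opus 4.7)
The plan is to argue by contradiction---assume some mechanism $\mech$ satisfies $\poa(\mech)<n$, and use the Equilibrium Monotonicity Lemma (\cref{lem:monotonicity}) to produce a profile with an equilibrium of ratio at least $n$. The key idea is to combine the PoA hypothesis with a family of ``hard'' profiles in order to rigidly force certain equilibrium allocations, and then to use the lemma to transfer equilibria across profiles until a contradiction emerges.

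First I would consider, for each permutation $\sigma$ of $[n]$, the \emph{permutation profile} $\t^\sigma$ on $n$ tasks defined by $t^\sigma_{i,\sigma(i)}=1$ and $t^\sigma_{i,j}=L$ for $j\neq\sigma(i)$, with $L\geq n$. The optimum of $\t^\sigma$ is $1$, attained by the $\sigma^{-1}$-diagonal allocation. Under the contradiction assumption, every equilibrium on $\t^\sigma$ has makespan strictly less than $n$; since each off-diagonal entry costs $L\geq n$, this forces every equilibrium on $\t^\sigma$ to use exactly the $\sigma^{-1}$-diagonal allocation. Let $\ss^\sigma$ be any such equilibrium, with allocation sets $S^\sigma_i=\{\sigma(i)\}$.

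Next I would apply \cref{lem:monotonicity} to each $\ss^\sigma$: the equilibrium persists (realizing the same $\sigma^{-1}$-diagonal allocation) on the whole ``monotonicity cone'' of profiles $\hat{\t}$ satisfying $\hat{t}_{i,\sigma(i)}\leq 1$ and $\hat{t}_{i,j}\geq L$ for $j\neq\sigma(i)$. I would then try to locate a common profile $\t^\star$ in the monotonicity cones of two different permutations $\sigma\neq\sigma'$; this would exhibit two distinct equilibrium allocations on $\t^\star$, so at least one must be a factor $\geq n$ away from $\opt(\t^\star)$, contradicting the PoA bound. A variant is to transfer $\ss^\sigma$ to a profile where the $\sigma^{-1}$-diagonal is provably suboptimal by a factor of $n$---e.g., one whose optimum is attained by a derangement---yielding the contradiction directly.

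The main obstacle is precisely the one-sided direction of \cref{lem:monotonicity}: modifications are only allowed that reinforce the current allocation (lower $\hat{t}$ on $S_i$ and higher off $S_i$), so naively the optimum tends to track the mechanism's makespan, keeping the ratio at $1$. In particular, the cones of distinct permutations $\sigma\neq\sigma'$ are disjoint---whenever $\sigma(i)\neq\sigma'(i)$, the bounds $\hat{t}_{i,\sigma(i)}\leq 1$ and $\hat{t}_{i,\sigma(i)}\geq L$ conflict. Overcoming this will likely require a more delicate construction: iterating \cref{lem:monotonicity} along a chain of intermediate profiles, or pairing the forced-diagonal argument on $\t^\sigma$ with a ``dual'' profile (such as $t_{i,i}=L$, $t_{i,j}=1$) on which \emph{any} equilibrium allocation surviving the transfer is already off by a factor of $n$. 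Stitching the rigid equilibrium structure obtained from the PoA hypothesis into such a dual profile, and thereby upgrading the one-sided monotonicity lemma into a genuine $n$-lower bound, is the main technical challenge.
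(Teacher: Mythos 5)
Your proposal does not reach a proof, and you yourself flag the exact point of failure: the monotonicity cones of two distinct permutations are disjoint, and the lemma only permits one-sided modifications that reinforce the existing allocation, so the transferred optimum tends to track the equilibrium makespan. The "dual profile" you suggest (e.g.\ $t_{i,i}=L$, $t_{i,j}=1$) does not escape this: it raises entries on $S_i$, which is precisely the direction \cref{lem:monotonicity} forbids, so the equilibrium does not survive the transfer. Your plan stalls at the stitching step and never produces the contradiction.

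The paper's proof takes a different and cleaner route that you did not find. Rather than starting from rigid "permutation" instances where the allocation is forced, it starts from a \emph{uniform} profile $\t$ with $n^2$ tasks and $t_{i,j}=1$ everywhere. The mechanism has some equilibrium $\ss$ there (existence is assumed), and by pigeonhole some machine $k$ receives at least $n$ tasks; fix a set $T_k$ of exactly $n$ of them. Now use \cref{lem:monotonicity} in the \emph{allowed} direction: \emph{lower} the costs on assigned tasks. Specifically, set $\hat{t}_{i,j}=0$ for every $i\neq k$ and $j\in S_i$, and set $\hat{t}_{k,j}=0$ for $j\in S_k\setminus T_k$, leaving everything else (including all of $T_k$ on machine $k$, and all off-allocation entries) at $1$. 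The lemma guarantees $\ss$ remains an equilibrium of $\mech$ under $\hat{\t}$, with makespan at least $n$ (machine $k$ still runs its $n$ tasks from $T_k$ at cost $1$ each). But $\opt(\hat{\t})=1$: assign one task of $T_k$ to each machine at cost $1$, and all the zeroed tasks to whoever runs them for free. The crucial idea you were missing is that you should not try to manufacture a conflict between two forced equilibria; instead, keep one equilibrium fixed and use the lemma to \emph{collapse the optimum} by zeroing out everyone's own allocated load except the $n$ witness tasks on the overloaded machine. The number of tasks must also be $n^2$ (not $n$) so that pigeonhole yields the overloaded machine in the first place.
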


\begin{proof}
Let ${\mech}$ be any mechanism and consider a profile of true processing times
$\t$ with $n$ machines and $n^2$ tasks, where $t_{i,j}=1$ for all machines $i$
and all tasks $j$. Let $\ss=(s_1,s_2,\ldots,s_n)$ be a pure Nash equilibrium of
${\mech}$ under $\t$. For each machine $i$, let $S_i$ be the set of tasks
assigned to that machine and note that there exists some machine $k$ for which
$|S_k|\geq n$. Let $T_{k} \subseteq S_{k}$ be any subset of $S_k$ such that
$|T_k| =n$.

Now consider the following profile $\hat{t}$ of processing times:
 \begin{itemize}
 	\item For all $i \neq k$, $\hat{t}_{i,j} = 0$, for all $j \in S_i$ and $\hat{t}_{i,j}=t_{i,j}$, for all $j \notin S_i$.
 	\item $\hat{t}_{kj}=0$, for all $j \in S_k \backslash T_{k}$ and $\hat{t}_{kj}=t_{k,j}$, for all $j \notin S_k \backslash T_k$.
 \end{itemize}
By \cref{lem:monotonicity}, the profile $\ss=(s_1,s_2,\ldots,s_n)$ is a pure
Nash equilibrium under $\hat{\t}$ and the allocation is the same as before, for a
makespan of at least $n$, since machine $k$ is assigned all the tasks in $T_k$. The
optimal allocation will assign one task from $T_k$ to each machine, the tasks from
$S_i$ to machine $i$ for each $i \neq k$ and the tasks from $S_k \backslash T_k$ to
machine $k$, for a total makespan of $1$ and the Price of Anarchy bound follows.
\end{proof}

\subsection{PoA/PoS Trade-off}
\label{sec:main_theorem_all}

In this section, we prove our main theorem regarding the trade-off between the Price
of Anarchy and the Price of Stability. The theorem informally says that if the Price
of Anarchy of a mechanism is small, then its Price of Stability has to be high.

\begin{theorem}\label{thm:tradeoff}
	\label{th:PoS_PoA_tradeoff}
	For any scheduling mechanism $\mech$ for $n$ machines, and any positive real $\alpha$,
	$$
	\poa (\mech)< \alpha
	\quad \then \quad
	\pos (\mech) \geq \frac{n-1}{\alpha}+1.
	$$
\end{theorem}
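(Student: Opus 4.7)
The plan is to prove the contrapositive: assume $\poa(\mech)<\alpha$ and exhibit an instance on which every equilibrium has makespan at least $\beta:=(n-1)/\alpha+1$ times the optimum, so that $\pos(\mech)\geq\beta$ follows directly. The main tool will be the Equilibrium Monotonicity Lemma (\cref{lem:monotonicity}) combined with the PoA hypothesis applied to a suitably chosen instance.

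The first step is to fix an asymmetric ``benchmark'' instance $\hat\t$ on $n$ machines with $n$ identical tasks, where a distinguished ``cheap'' machine~$1$ has processing time~$1$ for every task and each of the other $n-1$ ``expensive'' machines has processing time $\alpha$ for every task. A standard balancing computation yields a (relaxed) optimum $\alpha n/(n-1+\alpha)$, and the extreme assignment that places all tasks on machine~$1$ attains makespan $n=\beta\cdot\alpha n/(n-1+\alpha)$. This identity is precisely how the trade-off factor $(n-1)/\alpha+1$ arises: we have $\alpha(\beta-1)=n-1$.

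The second step is to take any equilibrium $\hat\ss$ of $\hat\t$, write $k_i:=|\hat S_i|$, and translate the PoA hypothesis into structural constraints on the loads. Applied to $\hat\t$, PoA $<\alpha$ forces $\alpha\, k_i<\alpha\cdot\opt(\hat\t)$, i.e.\ $k_i<\opt(\hat\t)$ for every expensive machine $i\geq 2$; combined with $\sum_i k_i=n$ this lower-bounds~$k_1$. A case distinction then yields the claim: either (i) the equilibrium places all tasks on machine~$1$, in which case the makespan equals $n=\beta\cdot\opt(\hat\t)$; or (ii) some expensive machine carries at least one task, so the makespan is at least $\alpha$, which, combined with $\opt(\hat\t)=\alpha n/(n-1+\alpha)$, gives the ratio $(n-1+\alpha)/n$. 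Invoking \cref{thm:lower} (PoA $\ge n$) to confine the non-vacuous regime to $\alpha>n$ then shows $(n-1+\alpha)/n\geq\beta$.

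The main obstacle will be the integrality gap between the fractional/continuous optimum $\alpha n/(n-1+\alpha)$ used in the balancing argument and the integer optimum of $\hat\t$, since the bound $k_i<\opt(\hat\t)$ only helps when $\opt(\hat\t)$ is close to the continuous value. I expect to resolve this by passing to a scaled instance with $n$ machines and $Kn$ tasks for a suitably large multiplier $K$ (chosen so that $\alpha Kn$ is divisible by $n-1+\alpha$), which makes the integer and continuous optima coincide and sharpens the pigeonhole inequality above. Should a ``truly balanced'' equilibrium (matching OPT exactly) become feasible on such a scaled instance, the plan is to rule it out by invoking \cref{lem:monotonicity}: transferring the putative cheap equilibrium $\hat\ss$ from $\hat\t$ to a modified instance $\t'$ (obtained by lowering, on each expensive machine $i\geq 2$, the cost of the tasks that $\hat\ss$ assigns to $i$ from $\alpha$ down to $1$) preserves $\hat\ss$ as an equilibrium, while the OPT of $\t'$ drops to a balanced value that, via the PoA bound on $\t'$, forces the desired contradiction. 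Tying together the precise constants so that this contradiction yields exactly the factor $\beta=(n-1)/\alpha+1$ is the most delicate step.
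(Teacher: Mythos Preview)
There is a genuine gap. First a terminological point: what you call ``the contrapositive'' is actually the direct implication (you assume $\poa<\alpha$ and try to lower-bound $\pos$). More importantly, the argument does not close.

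On your scaled instance with $Kn$ tasks and $\opt(\hat\t)=\alpha Kn/(n-1+\alpha)$, case~(ii) does \emph{not} deliver ratio $\geq\beta$. Nothing you have assumed rules out an equilibrium $\hat\ss$ that matches $\opt$ exactly (machine~$1$ gets $\alpha Kn/(n-1+\alpha)$ tasks, each expensive machine gets $Kn/(n-1+\alpha)$); this gives ratio~$1$, and your lower bound in case~(ii), namely $\alpha/\opt(\hat\t)=(n-1+\alpha)/(Kn)$, is vacuous for large~$K$. Your proposed monotonicity fix --- lower the cost of $\hat S_i$ on machine~$i$ from $\alpha$ to $1$ --- does not yield a contradiction with $\poa<\alpha$ either: after the modification, the equilibrium's makespan is $k_1$, while $\opt(\t')$ can be balanced down to $\alpha k_1/(n-1+\alpha)$, so the ratio is exactly $\beta=(n-1+\alpha)/\alpha<\alpha$. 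I do not see any choice of monotone cost perturbation on this symmetric instance that forces ratio $\geq\alpha$; the ``most delicate step'' you flag is in fact a missing idea.

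The paper avoids this by proving the \emph{actual} contrapositive: assume $\pos<\rho$ (with $\rho=(n-1)/\alpha+1$) and derive $\poa\geq(n-1)/(\rho-1)$. The instance is asymmetric: machine~$1$ has cost $n-1$ for task~$1$ and $\rho-1$ for tasks $2,\dots,n$, while machine $i\geq 2$ has cost $n-1$ for task~$i$ and $\infty$ elsewhere. The $\pos$ hypothesis guarantees a \emph{good} equilibrium $\ss^\star$ in which at least one task $j^\star\geq 2$ is assigned to machine~$j^\star$. Then monotonicity is used in the opposite direction: zero out the costs $t_{i_j,j}$ along the equilibrium allocation for all $j\neq j^\star$, so that $\ss^\star$ remains an equilibrium but now has makespan $n-1$ against an optimum of $\rho-1$, yielding $\poa\geq(n-1)/(\rho-1)$. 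The asymmetry (the $\infty$ entries and the small $\rho-1$ entries on machine~$1$) is exactly what makes the constant come out right; your uniform $\{1,\alpha\}$ instance lacks this leverage.
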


\begin{proof}
By performing the transformation $\rho=\frac{n-1}{\alpha}+1$ and taking the
contrapositive, it is not difficult to see that we need to prove that
$$
	\pos (\mech) < \rho \quad \then \quad  \poa (\mech)\geq \frac{n-1}{\rho -1},
	$$
for any real $\rho >1$.

Consider an instance with $n$ agents and $n$ tasks. Assume a true $n\times n$
processing-times matrix $\vecc t$ with
$$
t_{1,j} =
\begin{cases}
n-1, &\text{if}\;\; j=1,\\
\rho -1, &\text{otherwise},
\end{cases}
$$
and
$$
t_{i,j} =
\begin{cases}
n-1, &\text{if}\;\; j=i,\\
\infty, &\text{otherwise},
\end{cases}
$$
for all $i=2,\dots,n$. Here $\infty$ denotes an arbitrarily large positive value,
and actually replacing it with any value $M \geq\rho (n-1)$ will work just fine for
our proof.\footnote{We will adopt a similar convention throughout the paper.}

First notice that by allocating each task $j$ to machine $j$ with running time
$t_{j,j}=n-1$, for all $j\in [n]$, we get an upper bound of $n-1$ on the optimal
makespan of $\vecc t$. Thus, since $\pos(\mech)< \rho$, there must exist a pure Nash
equilibrium profile $\vecc{s}^\star$ such that the allocation
$\mech(\vecc{s}^\star)$ results in a makespan less than $\rho (n-1)$ (with respect
to the underlying, true time matrix $\vecc t$). But then, due to the structure of
$\vecc t$, and in particular the large value of $M$, $\mech(\vecc{s}^\star)$ can
only allocate each task $j$ to either machine $1$ or machine $j$, for all $j\in [n]$.
In particular, task $1$ will necessarily have to be allocated to machine $1$.
Furthermore, from the remaining $n-1$ tasks, not all of them can be allocated to
machine $1$, because that would give rise to a running time of $
t_{1,1}+\sum_{j=2}^n t_{1,j} =n-1+(n-1)(\rho-1)=\rho (n-1)$ for machine $1$, which
violates the Price of Stability constraint assumed for $\vecc{s}^\star$. So, there
must exist at least one task $j\geq 2$, denote it by $j^\star$, such that
$\mech(\vecc{s}^\star)$ allocates $j$ to machine $j$.

For each task $j$, let $i_j$ denote the machine which task $j$ is allocated to by
$\mech(\vecc s^\star)$.  Now modify the original, true execution time matrix $\vecc
t$ by changing the running time $t_{i_j,j}$, for all $j\neq j^\star$, to
$t'_{i_j,j}=0$. Denote this new matrix by $\vecc t'$. Due to
\cref{lem:monotonicity}, $\vecc{s}^\star$ has to be a pure Nash equilibrium of
$\mech$ with respect to the modified true profile $\vecc t'$ as well. But now
$\mech(\vecc{s}^\star)$ results in a makespan of at least
$t'_{j^\star,j^\star}=t_{j^\star,j^\star}=n-1$ (since task $j^\star$ is allocated to
machine $j^\star$), while allocating $j^\star$ to machine $1$ (and leaving all other
assignments as they are, i.e.\ task $j\neq j^\star$ gets allocated to machine $i_j$)
results in machine $1$ having a total running cost of at most $(n-1)\cdot 0 +
t'_{1,j^\star}= \rho -1$, and all other machines
$0$. This gives a Price of Anarchy lower bound of
$\frac{n-1}{\rho -1}$.
\end{proof}

By allowing $\alpha$ in \cref{th:PoS_PoA_tradeoff} to grow arbitrarily large, we get the following:
\begin{corollary}\label{cor:pos1}
	\label{th:PoSopt_PoAinfinity}
	Even for just two machines, if a scheduling mechanism has an optimal Price of
	Stability of $1$, then its Price of Anarchy has to be unboundedly large.
\end{corollary}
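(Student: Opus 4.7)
The plan is to derive this corollary as an immediate contrapositive limiting case of \cref{th:PoS_PoA_tradeoff}, specialized to $n=2$. In that setting, the theorem asserts: for any positive real $\alpha$, if $\poa(\mech) < \alpha$, then $\pos(\mech) \geq \frac{1}{\alpha} + 1$. The key observation is that the right-hand side is \emph{strictly} greater than $1$ for every finite $\alpha$, so $\pos(\mech)=1$ rules out the premise $\poa(\mech)<\alpha$ for every such $\alpha$.

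Concretely, I would argue by contradiction. Suppose $\mech$ is a scheduling mechanism on $n=2$ machines with $\pos(\mech)=1$ and a finite Price of Anarchy, i.e., $\poa(\mech) \leq B$ for some $B<\infty$. Pick any $\alpha>B$; then $\poa(\mech) < \alpha$, and \cref{th:PoS_PoA_tradeoff} (with $n=2$) yields
\[
\pos(\mech) \;\geq\; \frac{n-1}{\alpha}+1 \;=\; 1+\frac{1}{\alpha} \;>\; 1,
\]
contradicting the assumption $\pos(\mech)=1$. Hence no finite $B$ can upper-bound $\poa(\mech)$, which is exactly the claim that its Price of Anarchy must be unboundedly large.

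There is essentially no obstacle to overcome here beyond making explicit the $\alpha\to\infty$ limit of the quantitative trade-off: the reason $\pos$ can equal its minimum possible value of $1$ is that the lower bound on $\pos$ from the theorem tends to (but never reaches) $1$ as $\poa$ grows. I would also remark that the proof transparently works for every $n\geq 2$, but the corollary is phrased for $n=2$ to stress that the dichotomy between optimal stability and unbounded anarchy is not an asymptotic-in-$n$ effect—it already arises in the smallest non-trivial instance of the problem.
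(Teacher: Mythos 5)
Your proposal is correct and is essentially the paper's own argument: the paper derives this corollary directly from \cref{th:PoS_PoA_tradeoff} by letting $\alpha$ grow arbitrarily large, which is exactly the contrapositive/contradiction argument you spell out. You simply make the limiting step explicit, which is fine.
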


From the results of this section, as well as the trivial fact that $\poa({\mech})
\geq \pos(\mech)$ for any mechanism $\mech$, we obtain a feasibility trade-off
between the PoA and the PoS of scheduling mechanisms, which is illustrated in
\cref{fig:pareto_general}. We refer to the boundary of the shaded feasible region as
the \emph{inefficiency boundary}; the shape of the boundary follows from
\cref{thm:tradeoff}, as well as \cref{thm:lower}, since for
$\pos(\mech)>2-\frac{1}{n}$ (or, in the language of~\cref{thm:tradeoff}, for
$\alpha<n$), the best (i.e.~largest) lower bound on the PoA is now given
by~\cref{thm:lower}.

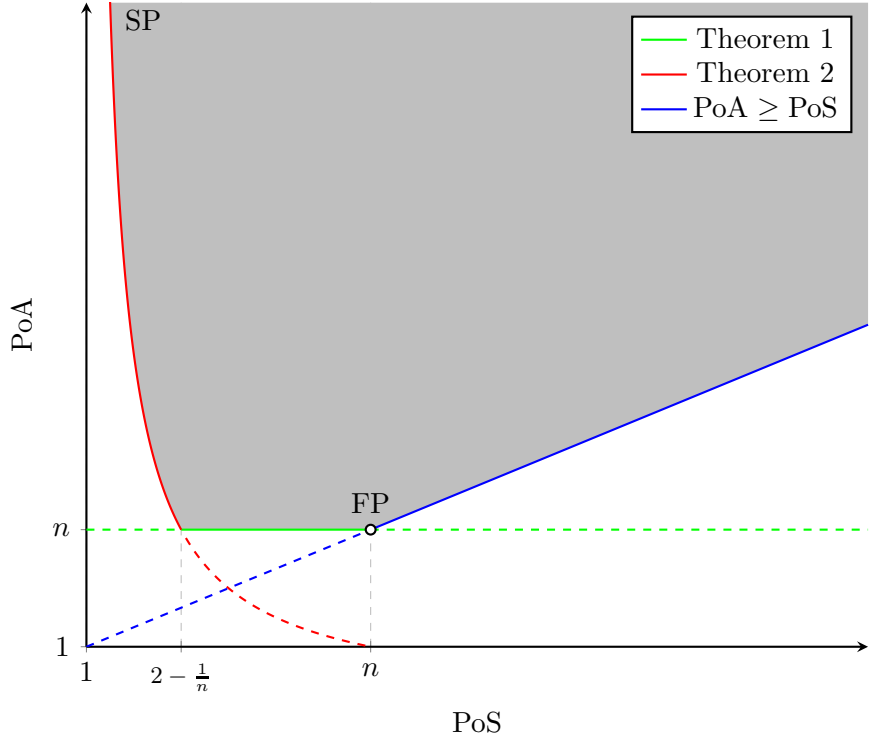
\begin{figure} 
\centering    
\begin{tikzpicture}
\begin{axis}[
 	scale=1.5,
	xmin=1, xmax=6.5,
	ymin=1, ymax=12,
    axis lines = left,
    ytick={1,3},
    yticklabels={$1$,$n$},
    xtick={1,1.6667,3},
    xticklabels={$1$,{\footnotesize$2-\frac{1}{n}$},$n$},
    xmajorgrids=true,
    grid style=dashed,
    xlabel = {$\pos$},
    ylabel = {$\poa$},
    thick
]
\path[fill=lightgray,smooth]
(1.16667,12.) -- (1.17667,11.3208) -- (1.18667,10.7143) -- (1.19667,10.1695) --
(1.20667,9.67742) -- (1.21667,9.23077) -- (1.22667,8.82353) -- (1.23667,8.4507) --
(1.24667,8.10811) -- (1.25667,7.79221) -- (1.26667,7.5) -- (1.27667,7.22892) --
(1.28667,6.97674) -- (1.29667,6.74157) -- (1.30667,6.52174) -- (1.31667,6.31579) --
(1.32667,6.12245) -- (1.33667,5.94059) -- (1.34667,5.76923) -- (1.35667,5.60748) --
(1.36667,5.45455) -- (1.37667,5.30973) -- (1.38667,5.17241) -- (1.39667,5.04202) --
(1.40667,4.91803) -- (1.41667,4.8) -- (1.42667,4.6875) -- (1.43667,4.58015) --
(1.44667,4.47761) -- (1.45667,4.37956) -- (1.46667,4.28571) -- (1.47667,4.1958) --
(1.48667,4.10959) -- (1.49667,4.02685) -- (1.50667,3.94737) -- (1.51667,3.87097) --
(1.52667,3.79747) -- (1.53667,3.72671) -- (1.54667,3.65854) -- (1.55667,3.59281) --
(1.56667,3.52941) -- (1.57667,3.46821) -- (1.58667,3.40909) -- (1.59667,3.35196) --
(1.60667,3.2967) -- (1.61667,3.24324) -- (1.62667,3.19149) -- (1.63667,3.14136) --
(1.64667,3.09278) -- (1.65667,3.04569) -- (1.66667,3.) -- (3,3) --
(\pgfkeysvalueof{/pgfplots/xmax},\pgfkeysvalueof{/pgfplots/xmax}) --
(\pgfkeysvalueof{/pgfplots/xmax},\pgfkeysvalueof{/pgfplots/ymax}) --
(1.667,\pgfkeysvalueof{/pgfplots/ymax}) -- cycle
;
\addplot [green,domain=1:1.667,dashed,samples=3]
{3};
\addplot [green,domain=1.667:3,samples=3]
{3};
\addplot [green,domain=3:6.5,dashed,samples=3]
{3};
\addplot [red,domain=1.15:1.667,samples=200]
{(3-1)/(x-1)};
\addplot [red,domain=1.7:3,dashed,samples=200]
{(3-1)/(x-1)};
\addplot [blue,domain=1:3,dashed,samples=3]
{x};
\addplot [blue,domain=3:6.5,samples=3]
{x};
\node[circle,draw,fill=white,label=above:{FP},inner sep=1.3pt] (fp) at (3,3) {};
\node (sp) at (1.4,11.7) {SP};
\legend{,\cref*{thm:lower},,\cref*{thm:tradeoff},,,$\poa\geq\pos$}
\end{axis}
\end{tikzpicture}
\caption{The inefficiency boundary for general mechanisms, given by
\cref{thm:tradeoff} (red line). Combined with the global PoA lower bound of
\cref{thm:lower} (green line) and the trivial fact that the PoS is at most the PoA
(blue line), we finally get the grey feasible region.}
\label{fig:pareto_general}
\end{figure}

\subsubsection{Mechanisms on the Extrema of the Inefficiency Boundary}
\label{sec:extema}
When looking for mechanisms on the Pareto frontier, the first ones that come to mind
are perhaps the First-Price (FP) and Second-Price (SP) mechanisms, defined in \cref{sec:prelims},
which are straightforward adaptations of the well-known First-Price auction and Second-Price auction
mechanisms from the auction literature.

It follows from known results in the literature for the First-Price auction (see,
e.g., \cite{feldman2016correlated}) that in every pure Nash equilibrium of the FP,
each task is allocated to the machine with the smallest \emph{true} processing time
for the task; we provide a simple proof below for completeness. In \cref{sec:spa_mechs}, 
we will define a class of task-independent
mechanisms ($\mechsp_\alpha$) that contain FP as a corner case ($\mechsp_1$).

\begin{theorem}\label{thm:FP}
The $\poa$ and the $\pos$ of the First-Price mechanism are both $n$.
\end{theorem}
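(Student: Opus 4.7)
The plan is to prove both bounds by first establishing a structural characterization of the pure Nash equilibria of FP, and then applying a direct greedy upper bound together with a matching lower-bound construction. Specifically, I would show that in \emph{every} PNE $\ss$ of FP under a true profile $\t$, each task $j$ is allocated to some machine $i^*(j) \in \argmin_{k\in N} t_{k,j}$. The argument has two short steps: first, the winner $i^*$ of task $j$ receives payment $s_{i^*,j}$ and must have non-negative net utility on that task (otherwise bidding some arbitrarily large value, thereby losing the task, would strictly improve her utility), so $s_{i^*,j}\geq t_{i^*,j}$; second, for any losing machine $i\neq i^*$, deviating to $s_{i^*,j}-\epsilon$ captures the task at net utility approaching $s_{i^*,j}-t_{i,j}$, so unprofitability for every $\epsilon>0$ forces $s_{i^*,j}\leq t_{i,j}$. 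Chaining gives $t_{i^*,j}\leq t_{i,j}$ for all $i\neq i^*$, as claimed.

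Given this characterization, the upper bound $\poa(\text{FP})\leq n$ follows from a standard greedy observation. The total work executed by FP on input $\t$ is exactly $\sum_{j=1}^{m}\min_{k\in N}t_{k,j}$, which is at most the total work of \emph{any} feasible allocation, including an optimal one. Since the total work of the optimum allocation is at most $n$ times its maximum load, this quantity is at most $n\cdot\opt(\t)$. Because the makespan of any allocation is bounded by its total work, every PNE of FP has makespan at most $n\cdot\opt(\t)$, yielding $\pos(\text{FP})\leq \poa(\text{FP})\leq n$.

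For the matching lower bound I would exhibit a concrete instance on $n$ machines and $n$ tasks. Set $t_{1,j}=1-\epsilon$ for every task $j$, and for each $i\geq 2$ set $t_{i,i}=1$ while $t_{i,j}$ is arbitrarily large for $j\neq i$. Machine $1$ is the unique fastest machine for every task, so by the characterization every PNE of FP assigns \emph{all} $n$ tasks to machine $1$, incurring makespan $n(1-\epsilon)$; meanwhile the allocation that places task $i$ on machine $i$ for $i\geq 2$ (and task $1$ on machine $1$) has makespan $1$. Letting $\epsilon\to 0$ gives $\pos(\text{FP})\geq n$, and combining with the previous paragraph yields $\poa(\text{FP})=\pos(\text{FP})=n$. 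The only subtle point in the whole argument is the equilibrium characterization and its potential sensitivity to the tie-breaking rule; the lower-bound construction has strict minima, so the conclusion is insensitive to this choice. Alternatively, the PoA half of the lower bound could simply be cited from \cref{thm:lower}, with the construction above serving only to establish the PoS lower bound.
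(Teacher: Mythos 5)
Your proof is correct and takes essentially the same route as the paper: both hinge on the same structural characterization (in every pure Nash equilibrium of FP, each task goes to a true-fastest machine, established by a non-negative-utility-for-winner plus underbid-by-loser argument) and on a lower-bound construction where one machine is uniquely fastest for all $n$ tasks while $n-1$ backup machines are each barely slower on one task. The one place you go further than the paper is the explicit total-work argument for $\poa(\text{FP})\leq n$ (total work of FP $=\sum_j\min_k t_{k,j}\leq$ total work of OPT $\leq n\cdot\opt$), which the paper leaves as an informal ``it is not hard to see''; your version is a clean and self-contained way to close that step.
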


\begin{proof}
First, we argue that in every equilibrium of $\mathrm{FP}$, for any task $j$ a
machine with the fastest true processing time for $j$ receives the task. Given any
profile of true processing times $\mathbf{t}$, let $J_a$ be the set of machines with
the fastest true processing time $t_f$ for task $j$. Assume by contradiction that
some machine $k \notin J_a$ receives the task at some equilibrium $\ss^j$. Since
$\ss^j$ is an equilibrium, it must be the case that $p_{k,j}=s_{k,j} \geq t_{k,j} > t_f$, as otherwise
machine $k$ would have negative utility. But then, by the continuity of the
strategy space, any machine $i \in J_a$ can report $s_{i,j}' \in (t_f,s_{k,j})$ and
win the task, obtaining positive utility. This contradicts the fact that $\ss^j$
is an equilibrium.

Given this, it is not hard to see that $\poa(\mathrm{FP}) = n$, as in the
worst-case, every task will go to the same machine, which will be the fastest
machine for all tasks. To show that $\pos(\mathrm{FP}) = n$, it is easy to construct
an example where there is a single machine $k$ that has the fastest processing time
$t_f$ for each task, and $n-1$ different machines $i_1, \ldots, i_{n-1}$, such that
machine $i_j$ has processing time $t_f+\varepsilon$ for task $j$, where $\varepsilon$ can
be arbitrarily small. In every equilibrium, all tasks go to machine $k$ for a
makespan of $n\cdot t_f$, whereas in the optimal schedule, machine $k$ receives task $1$ and
machine $i_j$ receives task $j$. The Price of Stability goes to $n$ as $\varepsilon\to
0$.
\end{proof}

For the Second-Price mechanism, again it follows from known observations in the literature
that while the mechanism is truthful, it has several other
pure Nash equilibria as well. More precisely, for a task $j$ and any machine $i$, there exists
an equilibrium for which task $j$ is allocated to machine $i$. Therefore, we have the following:
\begin{theorem}\label{thm:vcg}
The $\poa$ of the Second-Price mechanism is unbounded and its $\pos$ is $1$.
\end{theorem}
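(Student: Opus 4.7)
The plan is to exhibit, for any true profile, two explicit pure Nash equilibria of $\mathrm{SP}$: one realizing the optimum makespan (which will give $\pos(\mathrm{SP})\le 1$), and one realizing an arbitrarily bad makespan on a small instance (which will give unbounded $\poa(\mathrm{SP})$). The central idea is to exploit task-independence together with the following construction, which lets us ``force'' any machine $i^\star(j)$ to win any task $j$: have $i^\star(j)$ bid $0$ while every other machine $i\neq i^\star(j)$ bids some value $L_j$ with $L_j\geq \max_i t_{i,j}$. Under this profile $i^\star(j)$ is the unique lowest bidder, so it wins $j$ and is paid the second lowest bid $L_j$; its contribution to its utility from task $j$ is therefore $L_j-t_{i^\star(j),j}\geq 0$, while every other machine gets $0$ from $j$.

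For the $\pos$ bound, fix an optimal allocation $\vecc y^\star$ and let $i^\star(j)$ be the machine to which task $j$ is assigned under $\vecc y^\star$. Apply the construction above simultaneously for every task. The realized allocation equals $\vecc y^\star$, so the makespan equals $\opt(\t)$. To verify the equilibrium condition, note that because $\mathrm{SP}$ is task-independent, each machine's utility decomposes as a sum over tasks, and the only potentially profitable deviations are: (i) raising a bid on a ``won'' task, which forfeits a nonnegative contribution $L_j-t_{i^\star(j),j}$ in exchange for $0$, hence is not improving; and (ii) winning an additional task $j'$ on which the machine currently bids $L_{j'}$, which requires undercutting $s_{i^\star(j'),j'}=0$ and is infeasible since strategies are required to be nonnegative. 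This yields $\pos(\mathrm{SP})\le 1$, and the reverse inequality is trivial.

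For unboundedness of $\poa$, take $n=2$ machines and a single task with $t_1=1$, $t_2=\epsilon>0$, so that $\opt(\t)=\epsilon$. Apply the same construction with machine $1$ designated as the winner: $s_1=0$ and $s_2=L$ for any $L>1$. The task is allocated to machine $1$ and the makespan is $1$. Equilibrium verification mirrors the previous case: machine $1$'s utility is $L-1>0$ and raising its bid to $\ge L$ reduces it to $0$; machine $2$'s utility is $0$ and undercutting $0$ is infeasible. Therefore $\poa(\mathrm{SP})\ge 1/\epsilon\to\infty$ as $\epsilon\to 0$.

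The only mild subtlety is tie-breaking at the bid $0$, which we sidestep entirely by having every non-designated machine bid a strictly positive value $L_j$, guaranteeing a unique lowest bidder. No weak-monotonicity or characterization machinery is needed; the argument is essentially a direct verification, leveraging only task-independence and the fact that the strategy space is $\Rgeq$ (so undercutting $0$ is impossible).
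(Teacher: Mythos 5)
Your proof is correct, and it is more explicit than the paper's treatment of this statement. The paper does not give a self-contained argument for \cref{thm:vcg}: it states that the result ``follows from known observations in the literature'' about second-price auctions admitting many pure Nash equilibria, and then remarks that the theorem is obtainable as a corollary of the $\mechsp_\alpha$ analysis by taking $\alpha\to\infty$ (via \cref{lem:poa-spr,lem:pos-spr} and \cref{thm:sp_alpha_poa,thm:sp_alpha_pos}). Your construction --- designated winner bids $0$ on task $j$, everyone else bids some $L_j\geq\max_i t_{i,j}$ with $L_j>0$, together with the task-by-task decomposition of utilities --- is precisely the known observation the paper gestures at, so in substance you and the paper rely on the same underlying fact; the difference is that you verify it from first principles, which makes the argument self-contained and independent of the $\mechsp_\alpha$ machinery. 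One small imprecision to fix: in step (ii) you dismiss the deviation of winning an additional task $j'$ on the grounds that undercutting the bid $0$ is infeasible, but a non-designated machine \emph{can} tie at $0$, and under the paper's ``breaking ties arbitrarily'' convention it might then be awarded $j'$. This does not break the argument --- the second-lowest bid would still be $0$, so the deviator is paid $0$ while incurring cost $t_{i,j'}\geq 0$, a nonpositive change in utility --- but you should say that explicitly rather than declare the deviation infeasible.
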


\cref{thm:vcg} can be obtained as a corollary of our results in
\cref{sec:spa_mechs}, since SP is also a corner-case mechanism in our class, namely
$\mechsp_\infty$. Interestingly, as we identify in~\cref{thm:PoAtruthful} below,  it
turns out that the bad PoA bound is a inherent characteristic of all truthful
mechanisms. In other words, if one is interested in the set of \emph{all}
equilibria, they would have to reach out beyond truthful mechanisms.

\begin{theorem}
	\label{thm:PoAtruthful}
	The Price of Anarchy of any truthful mechanism is unbounded. 
\end{theorem}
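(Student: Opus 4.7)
The plan is to proceed by case analysis on the approximation ratio $A$ of $\mech$. If $A=\infty$, the statement is immediate, since truth-telling is a dominant-strategy equilibrium and hence a pure Nash equilibrium of $\mech$, which gives $\poa(\mech)\geq A=\infty$. Otherwise, assuming $A<\infty$, I construct a bad Nash equilibrium on a single-task instance ($m=1$) with $n\geq 2$ machines: take the true profile $\t=(\varepsilon,M,\infty,\ldots,\infty)$ and the report profile $\ss=(\ss_1^\star,0,\infty,\ldots,\infty)$ with $\varepsilon>0$ small and $\ss_1^\star>AM$. Viewing $\ss$ itself as truth-telling of the auxiliary profile $\t'=\ss$, the bounded approximation ratio forces $\mech(\ss)$ to allocate the task to machine~$2$ (since $\opt(\t')=0$ singles out the unique machine with zero true cost). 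Hence the makespan of $\ss$ under the true profile $\t$ equals $M$ while $\opt(\t)=\varepsilon$, so the PoA is at least $M/\varepsilon$, which diverges as $\varepsilon\to 0$.

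To confirm that $\ss$ is a Nash equilibrium under $\t$, I use that the restriction of $\mech$ to $m=1$ is a single-parameter truthful mechanism, so by Myerson each machine~$i$ has a monotone threshold $\tau_i(\ss_{-i})$, with machine~$i$ winning iff $\ss_i<\tau_i(\ss_{-i})$, paid $\tau_i(\ss_{-i})$ on winning and a payment depending only on $\ss_{-i}$ on losing. The bounded approximation ratio yields two structural facts: $(a)$ $\tau_i(\vecc{0}_{-i})=0$ for every $i$ (otherwise at a truth-telling profile $(0,\ldots,\delta,\ldots,0)$ with $\delta\in(0,\tau_i(\vecc{0}_{-i}))$, machine~$i$ wins with positive makespan against zero optimum), and $(b)$ $\tau_2(\ss_1^\star,\infty,\ldots,\infty)>M$ whenever $\ss_1^\star>AM$ (otherwise truth-telling at $(\ss_1^\star,M,\infty,\ldots,\infty)$ assigns the task to some machine $i\neq 2$, giving makespan at least $\ss_1^\star>AM$ against optimum $M$). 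Property~$(a)$ applied to every $i\neq 2$ shows that those machines cannot win against $\ss_2=0$ with any nonnegative bid, and their losing payments do not depend on their own bids, so no profitable deviation exists. Property~$(b)$ shows that machine~$2$'s utility at $\ss$ is $\tau_2(\ss_1^\star,\infty,\ldots)-M>0$, which by truthfulness and winning-bid independence of the payment is matched by every winning deviation and strictly exceeds the utility of any losing deviation.

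The main obstacle I anticipate is handling the normalization of losing payments and tie-breaking cleanly within the Myerson characterization. This can be sidestepped by avoiding Myerson altogether and working directly from the truthfulness inequalities of $\mech$, comparing the utility at the candidate profile $\ss$ with the utility at truth-telling of $\t$ and of the auxiliary profile $\t'$; the combination of these inequalities and the bounded approximation ratio at $\t'$ recovers the same no-deviation conclusions without invoking threshold payments.
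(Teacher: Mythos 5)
Your proof is correct in outline and takes essentially the same route as the paper's: fix a single‑task instance, construct a report profile where the ``wrong'' machine wins, and verify it is a Nash equilibrium via the Myerson characterization of single‑parameter truthfulness (threshold allocation, payment depending on the opponents' bids only). Your facts~(a) and~(b) play the same role as the paper's Claims~1 and~3 (establishing where the threshold lies on the relevant rays), and the final contradiction is identical: the constructed equilibrium has makespan bounded away from $0$ while the optimum goes to $0$.

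Two small points to tighten. First, fact~(a) as you state it ($\tau_i(\vecc 0_{-i})=0$) is about the opponent profile consisting entirely of zeros, but what you actually apply is $\tau_i(\ss_{-i})=0$ for the specific $\ss_{-i}=(\ss_1^\star,0,\infty,\ldots)$ (with the $i$-th coordinate removed); the needed statement is that the threshold is $0$ whenever \emph{some} opponent bids $0$, which follows by the identical approximation-ratio argument, so this is an easy repair but the claim should be phrased at that generality. Second, for machine~$2$ you need $\tau_2(\ss_{-2})\ge M$, not just $>M$, to be stated carefully around the tie-breaking convention at the threshold; your own Myerson-free backup via the truthfulness inequalities handles this cleanly, exactly as the paper does by computing the difference in $u_2$ for $s_2'\in[0,\varepsilon]$ versus $s_2'>\varepsilon$. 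Aside from these details, the structure, the role of each claim, and the way the bounded approximation ratio is used to pin down the allocation on auxiliary truth-telling profiles all match the paper's argument.
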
	
\begin{proof}
Let $\mech=(\vecc x,\vecc p)$ be a truthful mechanism on $n\geq 2$ machines. To
	arrive to a contradiction, assume that there exists a real $M\geq 1$ such that
	$\poa(\mech)\leq M$. We will consider single-task instances with only $n=2$
	machines. This is without loss of generality, since one can add arbitrarily many more machines
	with $0$ running time for the task, and the proof remains valid. In particular, we
	assume an underlying vector of true running times $\t=(0,\varepsilon)$, where
	$\varepsilon\in(0,M^{-1})$, and a vector $\ss=(1,0)$ of reported costs. We first
	show that $\mech$ allocates the task to machine $i=2$, even if she reports slightly
	slower running times.

	\begin{claim}
		\label{claim:helper1}
		For any $\delta\in [0,\varepsilon]$, mechanism $\mech$ always allocates the
		task to machine $i=2$ on any input $\ss'=(1,\delta)$.
	\end{claim}

	\begin{proof}
		\renewcommand\qedsymbol{$\blacksquare$}
		First notice that, due to truthfulness, if we consider as true underlying profile
		$\t'=\ss'=(1,\delta)$, then $\ss'$ has to be an equilibrium. Next, for a
		contradiction, assume that there exists a nonnegative $\delta\leq\varepsilon$ such
		that $x_{2}(\ss')=0$. Then $x_1(\vecc s')=1$, and thus the makespan under
		equilibrium $\ss'$ (and true profile $\t'$) would be $1$, while an optimal solution
		would have given the task to the faster machine, for a makespan of $\delta$. This
		results in a PoA of at least $\frac{1}{\delta}\geq \varepsilon^{-1}>M$ which is a
		contradiction. This completes the proof of the claim.
	\end{proof}

	Using the same argument, we can also show that $\mech$ keeps allocating the task
	to machine $i=2$ as long as the other machine has a strictly positive cost:
	\begin{claim}
		\label{claim:helper3}
		For any $\delta>0$, mechanism $\mech$ always allocates the task to machine $i=2$ on any input $\ss'=(\delta,0)$.
	\end{claim}

	We are now ready to prove that $\ss$ is actually an equilibrium:
	\begin{claim}
		\label{claim:helper2}
		Reporting $\ss=(1,0)$ is an equilibrium of $\mech$ (under true costs $\t$).
	\end{claim}

	The above claim is enough to complete the proof, since by combining it with the
	previous \cref{claim:helper1} (with $\delta=0$) we get that the makespan of $\mech$
	under equilibrium $\ss$ is $\varepsilon>0$ (since the task goes to machine $i=2$)
	while the optimal one is $0$. This contradicts the fact that $\poa(\mech)$ is
	bounded.

	\begin{proof}[Proof of~\cref{claim:helper2}]
		\renewcommand\qedsymbol{$\blacksquare$}
		First we remark that, due to well-known characterizations of truthfulness for
		single-dimensional domains~\cite{Myerson1981a} (which apply to our case, since we
		have a single task), there exist real functions $h_1,h_2$ such that the utilities of
		our machines (with respect to true costs $\t$) on any vector of reports $\ss'$ are
		given by\footnote{See~\cite[Theorem~4.2]{archer2001truthful}.}
		\begin{equation}
		\label{eq:utilities_myerson}
		u_i(\ss') = h_i(s_{-i}') + (s_i'-t_i)\cdot x_i(\ss')-\int_{0}^{s_i'}x_i(z,s_{-i}')\,dz.
		\end{equation}
		Furthermore, the allocation function $x_i(\ss')$ of each agent $i$ is
		monotonically nonincreasing with respect to her reported cost $s_i'$.

		Now we show that the first agent has no incentive to deviate from reporting $s_1=1$
		as long as the second agent is fixed at $s_2=0$. Indeed, currently, and as long as
		she reports any strictly positive cost $s_1'=z>0$, she will still lose the task
		(i.e., $x_1(z,s_2)=0$). That holds due to \cref{claim:helper3}. From
		\eqref{eq:utilities_myerson}, this results in a utility of $u_1(s_1',s_2)=h_1(s_2)$.
		For the only remaining case that she reports $s_1=0'$, again
		from~\eqref{eq:utilities_myerson} we get that $u_1(0,s_2)=h_1(s_2)-(0-0)\cdot
		x_1(0,s_2)=h_1(s_2)$. Thus, in no case the first agent can gain by unilaterally
		deviating.

		Finally, we need to show that the second agent has no incentive to deviate from
		reporting $s_2=0$ as well. Assuming the other agent fixed at $s_1=1$, the
		improvement in her utility by declaring a cost $s_2'\geq 0$ is (due
		to~\eqref{eq:utilities_myerson})
		$$
		u_2(1,s_2')-u_2(1,0)=(s_2'-\varepsilon)x_2(1,s_2') - \int_{0}^{s_2'}x_2(1,z)\,dz.
		$$
		Recall now from \cref{claim:helper1} that $x_1(1,z)=1$ for all
		$z\in[0,\varepsilon]$. Thus, for $s_2'\in[0,\varepsilon]$ the above difference in
		the second agent's utility becomes
		$
		(s_2'-\varepsilon) - s_2'= -\varepsilon < 0,
		$
		while for $s_2'>\varepsilon$ it is
		$$
		(s_2'-\varepsilon)x_2(1,s_2') - \int_{\varepsilon}^{s_2'}x_2(1,z)\,dz
		\leq
		(s_2'-\varepsilon)x_2(1,s_2') - (s_2'-\varepsilon)x_2(1,s_2') x_2(1,s_2')
		=0,
		$$
		the inequality holding due to the fact that $x_2(1,z)$ is nonincreasing with respect to $z$.
	\end{proof}
\end{proof}

From  \cref{thm:tradeoff}, \cref{thm:FP} and \cref{thm:vcg}, it is clear that both FP and SP
lie on the boundary of the PoA/PoS feasibility space (see~\cref{fig:pareto_general}).

\section{The Pareto Frontier of Task-Independent Mechanisms}
\label{sec:task-independent}

As we noted in the previous section, both the SP and FP mechanisms, which lie on the
inefficiency boundary (see \cref{fig:pareto_general}), are anonymous
task-independent mechanisms. In this section, we will construct a tighter boundary
on the PoA/PoS trade-off for the class of anonymous task-independent mechanisms.
Furthermore, we will show that this boundary is actually tight, by designing a class
of optimal mechanisms that lie exactly on it, meaning that for each point on the
boundary, there is a mechanism in our class that achieves the corresponding PoA/PoS
trade-off. Thus, this results in a \emph{complete characterization of the Pareto
frontier} between the PoA and the PoS.\footnote{To prevent any potential confusion, we use
the term ``inefficiency boundary'' to refer to the boundary of the feasible region
for the PoA/PoS trade-off, that is defined by some impossibility-type result such as
\cref{thm:task_independent_tradeoff} and we reserve the term ``Pareto frontier'' for
a boundary that can provably not be improved, since there are mechanisms that
achieve the corresponding trade-offs. Intuitively, in our terminology, the
inefficiency boundary is a ``bound'' on the achievable Pareto frontier.} For an
illustration, see \cref{fig:pareto_task_indi}.

\subsection{PoA/PoS Trade-off}
We start with the theorem that gives us the improved boundary on the space of
feasible task-independent and anonymous mechanisms. This is the red line
in~\cref{fig:pareto_task_indi}.
Intuitively, the proof of this theorem is based on the following idea: consider two alternative true cost matrices,
$$
\left(
\begin{array}{c c c c c} 1 & \infty & \cdots & \infty & \infty \\
  \infty & 1 & \ddots & \infty & \infty  \\
   \infty & \infty & \ddots & \ddots & \vdots  \\
  \vdots & \vdots & \ddots & 1 & \infty   \\
  \infty  & \infty  & \ddots & \infty & 1   \\
  \alpha^* & \alpha^* & \cdots & \alpha^* & \alpha^*
\end{array}
\right)
\qquad\text{and}\qquad
\left(
\begin{array}{c c c c c} \alpha & \infty & \cdots & \infty & \infty \\
  \infty & \alpha & \ddots & \infty & \infty  \\
   \infty & \infty & \ddots & \ddots & \vdots  \\
  \vdots & \vdots & \ddots & \alpha & \infty   \\
  \infty  & \infty  & \ddots & \infty & \alpha   \\
  1^* & 1^* & \cdots & 1^* & 1^*
\end{array}
\right),
$$
where $\alpha>1$.
Then, any anonymous mechanism would either (a) have \emph{some} equilibrium where all tasks get allocated to the ``slow'' machine (with running time $\alpha$) on the first instance, or (b) at \emph{all} equilibria it would need to
allocate all tasks to the ``fast'' machine (with running time $1$) on the second instance. Case (a) would result in a ``high'' PoA, while case (b) in a ``high'' PoS.

\begin{theorem}\label{thm:task_independent_tradeoff}
For any task-independent anonymous scheduling mechanism $\mech$ for $n$ machines, and any real $\alpha>1$,
$$
\poa (\mech) < (n-1)\alpha + 1 \quad \then \quad \pos (\mech) \geq \frac{(n-1)}{\alpha}+1.
$$
\end{theorem}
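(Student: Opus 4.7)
I will prove the contrapositive: assuming $\pos(\mech)<(n-1)/\alpha+1$, I deduce $\poa(\mech)\geq (n-1)\alpha+1$. The argument uses the two dual cost matrices from the intuition above, viewed as $n$-machine instances $I_1,I_2$ in which one ``generalist'' machine (say $n$) has per-task cost $\alpha$ in $I_1$ and $1$ in $I_2$, while the other $n-1$ ``specialists'' each have cost $1$ (respectively $\alpha$) on one designated task and a large value $M$ elsewhere, with tiny distinct perturbations to break ties so that \cref{def:anonymity} applies. To pin down the tight ``$+1$'' constants I also include an auxiliary task that only the generalist can execute, priced at $1$ in $I_1$ and at $\alpha$ in $I_2$. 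Then $\opt(I_1)=1$ and $\opt(I_2)=\alpha$, and the all-to-generalist schedule has makespan exactly $(n-1)\alpha+1$ in $I_1$ and $(n-1)+\alpha$ in $I_2$, matching the two target bounds.

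Next I exploit task-independence to decompose any equilibrium of $\mech$ into per-task equilibria of the anonymous single-task components $\AA_j$. For every standard task $j\in\{1,\dots,n-1\}$, the $I_1$-profile and the $I_2$-profile for task $j$ differ precisely by the transposition $\pi=(j\;n)$. Anonymity of $\AA_j$ (\cref{def:anonymity}) then gives a bijection between the equilibria of $\AA_j$ on the two profiles, mapping an equilibrium that assigns task $j$ to the specialist in $I_1$ to one that assigns it to the generalist in $I_2$, and vice versa. Choosing $M$ large enough so that no equilibrium allowed by the PoA bound can ship a task to a wrong specialist, the set of possible winners of $\AA_j$ reduces to $\{j,n\}$ in both instances.

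This sets up the dichotomy. For each standard task record whether $\AA_j$'s equilibria in $I_1$ include the ``to generalist'' option. If this holds for every $j$, task-independence lets me combine these per-task choices into a joint equilibrium of $\mech$ on $I_1$ that ships every standard task (plus the forced auxiliary task) to the generalist, yielding makespan $(n-1)\alpha+1$ and hence $\poa(\mech)\geq (n-1)\alpha+1$. Otherwise there is some task whose $\AA_j$ has all its $I_1$-equilibria ``to specialist''; by the anonymity correspondence, $\AA_j$'s equilibria in $I_2$ are all ``to generalist''. I then use the PoS hypothesis on $I_2$ together with \cref{lem:monotonicity} to propagate this forcing to every task: starting from a hypothetical best equilibrium of $\mech$ on $I_2$ in which some task $j'$ went to its specialist, a monotone modification of the cost matrix (decreasing the costs of assigned tasks and increasing those of unassigned tasks) keeps this equilibrium intact but exposes a new instance on which the generalist's load is low while $\opt$ shrinks, allowing me to transfer the same best-equilibrium allocation back to $I_2$ with the generalist still underloaded, contradicting the $I_2$ analysis. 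Consequently every standard task is forced to the generalist in every equilibrium of $I_2$; the best equilibrium then has generalist load $(n-1)+\alpha$, giving $\pos(\mech)\geq (n-1)/\alpha+1$ and contradicting the standing hypothesis.

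The main obstacle is closing the dichotomy globally. Because different components $\AA_j$ need not be the same mechanism, an a priori mixed pattern---some tasks freely choosable in $I_1$, others forced to specialist---is consistent with each $\AA_j$ being anonymous in isolation. It is the combined use of task-independence (which permits selecting per-task equilibria independently in $I_1$), the PoS assumption on $I_2$, and \cref{lem:monotonicity} (which transfers equilibria across profiles via monotone cost changes) that eliminates this mixed regime and forces the clean either/or structure. The auxiliary $n$-th task and the particular scaling of the specialist/generalist costs are what extract the exact ``$+1$'' constant in the bound.
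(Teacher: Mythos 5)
Your setup is sound: the two dual instances, the identification of their per-column difference as the transposition $(j\;n)$, the use of anonymity to flip winners between them, and the Case-1 combination of per-task equilibria via task-independence are all correct and match the spirit of the paper's construction. The problem is Case 2, which you correctly identify as the hard part but do not actually close.

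In Case 2 you establish that \emph{one} standard task $j$ is forced to the specialist in $I_1$ (hence to the generalist in $I_2$), but the PoS bound requires that \emph{all $n-1$} standard tasks be forced to the generalist in every equilibrium of $I_2$. Anonymity of $\AA_j$ is an intra-component property: it constrains how $\AA_j$ treats permuted profiles, but it says nothing about $\AA_{j'}$ for $j'\neq j$, which can be an entirely different single-task mechanism. So the "mixed regime" you mention is a genuine possibility, and \cref{lem:monotonicity} cannot rule it out: the lemma only preserves an equilibrium under a monotone cost change, which lets you lower the specialist load of any task that went to a specialist and hence makes the resulting instance \emph{look better}, not worse — there is no contradiction to extract. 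Concretely, if in $I_2$ the best equilibrium sends task $j$ (and the auxiliary task) to the generalist but some other task $j'$ to its specialist, the generalist load is $1+\alpha$ against $\opt(I_2)=\alpha$, giving only $\pos\geq 1+1/\alpha$, far short of $1+(n-1)/\alpha$, and nothing in your monotone modification pushes this up. The paper's proof avoids this exactly because it works with $m=\max\{2,2n-3\}$ tasks rather than $n$: it first shows (\cref{claim:fixed_well-behaved}, via a pigeonhole-style iterative argument over $(n-1)$-subsets, using the PoA hypothesis on a $\tilde{\vecc t}$ built from $n-1$ ``badly-behaved'' tasks) that at least $n-1$ \emph{distinct} tasks must be well-behaved, and only then builds $\hat{\vecc t}$ from those well-behaved tasks. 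That counting step, requiring roughly twice as many tasks as machines, is the missing ingredient; with only $n$ tasks you cannot guarantee a uniformly well-behaved set.
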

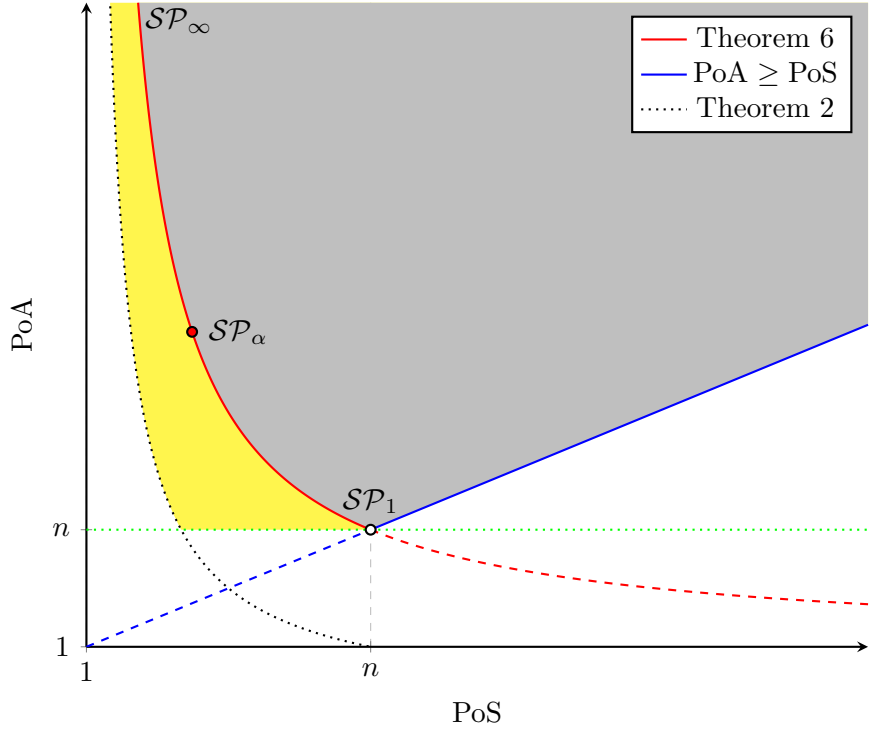
\begin{figure}[t] 
\centering    
\begin{tikzpicture} 
\begin{axis}[
 	scale=1.5,
	xmin=1, xmax=6.5,
	ymin=1, ymax=12,
    axis lines = left,
    ytick={1,3},
    yticklabels={$1$,$n$},
    xtick={1,3},
    xticklabels={$1$,$n$},
    xmajorgrids=true,
    grid style=dashed,
    xlabel = {$\pos$},
    ylabel = {$\poa$},
    thick
]
\path[fill=yellow,opacity=0.7,smooth]
(1.16667,12.) -- (1.17667,11.3208) -- (1.18667,10.7143) -- (1.19667,10.1695) --
(1.20667,9.67742) -- (1.21667,9.23077) -- (1.22667,8.82353) -- (1.23667,8.4507) --
(1.24667,8.10811) -- (1.25667,7.79221) -- (1.26667,7.5) -- (1.27667,7.22892) --
(1.28667,6.97674) -- (1.29667,6.74157) -- (1.30667,6.52174) -- (1.31667,6.31579) --
(1.32667,6.12245) -- (1.33667,5.94059) -- (1.34667,5.76923) -- (1.35667,5.60748) --
(1.36667,5.45455) -- (1.37667,5.30973) -- (1.38667,5.17241) -- (1.39667,5.04202) --
(1.40667,4.91803) -- (1.41667,4.8) -- (1.42667,4.6875) -- (1.43667,4.58015) --
(1.44667,4.47761) -- (1.45667,4.37956) -- (1.46667,4.28571) -- (1.47667,4.1958) --
(1.48667,4.10959) -- (1.49667,4.02685) -- (1.50667,3.94737) -- (1.51667,3.87097) --
(1.52667,3.79747) -- (1.53667,3.72671) -- (1.54667,3.65854) -- (1.55667,3.59281) --
(1.56667,3.52941) -- (1.57667,3.46821) -- (1.58667,3.40909) -- (1.59667,3.35196) --
(1.60667,3.2967) -- (1.61667,3.24324) -- (1.62667,3.19149) -- (1.63667,3.14136) --
(1.64667,3.09278) -- (1.65667,3.04569) -- (1.66667,3.) -- (3,3) --
(\pgfkeysvalueof{/pgfplots/xmax},\pgfkeysvalueof{/pgfplots/xmax}) --
(\pgfkeysvalueof{/pgfplots/xmax},\pgfkeysvalueof{/pgfplots/ymax}) --
(1.667,\pgfkeysvalueof{/pgfplots/ymax}) -- cycle
;
\path[fill=lightgray,opacity=1,smooth]
(1.36364,12.) -- (1.37364,11.7056) -- (1.38364,11.4265) -- (1.39364,11.1617) --
(1.40364,10.9099) -- (1.41364,10.6703) -- (1.42364,10.4421) -- (1.43364,10.2243) --
(1.44364,10.0164) -- (1.45364,9.81764) -- (1.46364,9.62745) -- (1.47364,9.4453) --
(1.48364,9.27068) -- (1.49364,9.10313) -- (1.50364,8.94224) -- (1.51364,8.78761) --
(1.52364,8.63889) -- (1.53364,8.49574) -- (1.54364,8.35786) -- (1.55364,8.22496) --
(1.56364,8.09677) -- (1.57364,7.97306) -- (1.58364,7.85358) -- (1.59364,7.73813) --
(1.60364,7.62651) -- (1.61364,7.51852) -- (1.62364,7.41399) -- (1.63364,7.31277) --
(1.64364,7.21469) -- (1.65364,7.11961) -- (1.66364,7.0274) -- (1.67364,6.93792) --
(1.68364,6.85106) -- (1.69364,6.76671) -- (1.70364,6.68475) -- (1.71364,6.6051) --
(1.72364,6.52764) -- (1.73364,6.45229) -- (1.74364,6.37897) -- (1.75364,6.3076) --
(1.76364,6.2381) -- (1.77364,6.17039) -- (1.78364,6.10441) -- (1.79364,6.04009) --
(1.80364,5.97738) -- (1.81364,5.9162) -- (1.82364,5.85651) -- (1.83364,5.79826) --
(1.84364,5.74138) -- (1.85364,5.68584) -- (1.86364,5.63158) -- (1.87364,5.57856) --
(1.88364,5.52675) -- (1.89364,5.47609) -- (1.90364,5.42656) -- (1.91364,5.37811) --
(1.92364,5.33071) -- (1.93364,5.28432) -- (1.94364,5.23892) -- (1.95364,5.19447) --
(1.96364,5.15094) -- (1.97364,5.10831) -- (1.98364,5.06654) -- (1.99364,5.02562) --
(2.00364,4.98551) -- (2.01364,4.94619) -- (2.02364,4.90764) -- (2.03364,4.86983) --
(2.04364,4.83275) -- (2.05364,4.79638) -- (2.06364,4.76068) -- (2.07364,4.72566) --
(2.08364,4.69128) -- (2.09364,4.65752) -- (2.10364,4.62438) -- (2.11364,4.59184) --
(2.12364,4.55987) -- (2.13364,4.52847) -- (2.14364,4.49762) -- (2.15364,4.4673) --
(2.16364,4.4375) -- (2.17364,4.40821) -- (2.18364,4.37942) -- (2.19364,4.3511) --
(2.20364,4.32326) -- (2.21364,4.29588) -- (2.22364,4.26895) -- (2.23364,4.24245) --
(2.24364,4.21637) -- (2.25364,4.19072) -- (2.26364,4.16547) -- (2.27364,4.14061) --
(2.28364,4.11615) -- (2.29364,4.09206) -- (2.30364,4.06834) -- (2.31364,4.04498) --
(2.32364,4.02198) -- (2.33364,3.99932) -- (2.34364,3.977) -- (2.35364,3.955) --
(2.36364,3.93333) -- (2.37364,3.91198) -- (2.38364,3.89093) -- (2.39364,3.87019) --
(2.40364,3.84974) -- (2.41364,3.82958) -- (2.42364,3.80971) -- (2.43364,3.79011) --
(2.44364,3.77078) -- (2.45364,3.75172) -- (2.46364,3.73292) -- (2.47364,3.71437) --
(2.48364,3.69608) -- (2.49364,3.67803) -- (2.50364,3.66022) -- (2.51364,3.64264) --
(2.52364,3.6253) -- (2.53364,3.60818) -- (2.54364,3.59128) -- (2.55364,3.57461) --
(2.56364,3.55814) -- (2.57364,3.54188) -- (2.58364,3.52583) -- (2.59364,3.50998) --
(2.60364,3.49433) -- (2.61364,3.47887) -- (2.62364,3.46361) -- (2.63364,3.44853) --
(2.64364,3.43363) -- (2.65364,3.41891) -- (2.66364,3.40437) -- (2.67364,3.39001) --
(2.68364,3.37581) -- (2.69364,3.36178) -- (2.70364,3.34792) -- (2.71364,3.33422) --
(2.72364,3.32068) -- (2.73364,3.30729) -- (2.74364,3.29406) -- (2.75364,3.28097) --
(2.76364,3.26804) -- (2.77364,3.25525) -- (2.78364,3.24261) -- (2.79364,3.23011) --
(2.80364,3.21774) -- (2.81364,3.20551) -- (2.82364,3.19342) -- (2.83364,3.18146) --
(2.84364,3.16963) -- (2.85364,3.15792) -- (2.86364,3.14634) -- (2.87364,3.13489) --
(2.88364,3.12355) -- (2.89364,3.11234) -- (2.90364,3.10124) -- (2.91364,3.09026) --
(2.92364,3.0794) -- (2.93364,3.06864) -- (2.94364,3.058) -- (2.95364,3.04746) --
(2.96364,3.03704) -- (2.97364,3.02672) -- (2.98364,3.0165) -- (2.99364,3.00638) --
(3,3) -- (6.5,6.5) -- (6.5,12) -- cycle
;
\addplot [red,domain=1.36364:3,samples=200]
{(3-1)^2/(x-1)+1};
\addplot [red,domain=3:6.5,dashed,samples=200]
{(3-1)^2/(x-1)+1};
\addplot [blue,domain=1:3,dashed,samples=3]
{x};
\addplot [blue,domain=3:6.5,samples=3]
{x};
\addplot [green,domain=1:6.5,dotted,samples=3]
{3};
\addplot [black,domain=1.15:3,dotted,samples=200]
{(3-1)/(x-1)};
\node[circle,draw,fill=white,label=above:{$\mechsp_1$},inner sep=1.3pt] (fp) at (3,3) {};
\node (sp) at (1.65,11.7) {$\mechsp_\infty$};
\node[circle,draw,fill=red,label=east:{$\mechsp_\alpha$},inner sep=1.3pt] (spa) at (1.74364,6.37897) {};
\legend{\cref*{thm:task_independent_tradeoff},,,$\poa\geq\pos$,,\cref*{thm:tradeoff}}
\end{axis}
\end{tikzpicture}
\caption{The inefficiency boundary, for anonymous task-independent mechanisms, given
by \cref{thm:task_independent_tradeoff} (red line). Combined with the global PoA
lower bound of \cref{thm:lower} (green line) and the trivial fact that the PoS is at
most the PoA (blue line), we finally get the grey feasible region. The family of
mechanisms $\mechsp_\alpha$ described in \cref{sec:spa_mechs} lies exactly on this
boundary (red line), thus completely characterizing the \emph{Pareto frontier} in a
smooth way with respect to parameter $\alpha\geq 1$: on its one end ($\alpha=1$) is
the First-Price mechanism $\text{FP}=\mechsp_{1}$ and at the other
($\alpha\to\infty$) the Second-Price mechanism $\text{SP}=\mechsp_{\infty}$. The yellow region
represents the possible trade-off between the Price of Anarchy and the Price of Stability for
mechanisms that are not anonymous or task-independent, as defined by the right boundary in
this figure, and the left boundary of \cref{fig:pareto_general} (shown as a red line there). 
We remark that the ``thickness'' of this region is mainly due to mechanisms which are not task-independent; as we show in \cref{thm:sqrt-tradeoff}, the corresponding boundary line for
mechanisms which are task-independent but not anonymous is much closer to the red line in the figure.}
\label{fig:pareto_task_indi}
\end{figure}
\begin{proof}
Fix a mechanism $\mech$ on $n\geq 2$ machines 
and $m = \max\{2,2n-3\}$ tasks, that allocates each task $j$
independently by running an anonymous single-task mechanism $\AA_j$. Each such
mechanism $\AA_j$ takes as input a declared cost vector $\vecc s^j= (s_{1,j},
\dots, s_{n,j})$  by the machines, where $s_{i,j}$ is the report of machine $i$ for
task $j$, $i=1,\dots,n$. Besides the reports, recall that there is also an
underlying \emph{true} cost vector $\vecc t^j=(t_{1,j},\dots,t_{n,j})$. Also, fix a
parameter $\alpha>1$ and assume that $\poa (\mech) < (n-1)\alpha + 1$.

We are particularly interested in true cost vectors that have a specific structure,
namely being permutations of $(1,\alpha,\infty,\dots,\infty)$. We will call such
cost vectors \emph{canonical} for the remainder of this proof. Formally, $\vecc t^j$
is canonical if:
\begin{itemize}
\item there is a unique machine $i'$ such that $t_{i',j}=1$ (machine $i'$ will be
  called \emph{fast} for task $j$),
\item there is a unique machine $i''\neq i'$ such that $t_{i'',j}=\alpha$ (machine
$n$ will be called \emph{slow} for task $j$),
\item all other machines $i\neq i',i''$ have arbitrarily high, pairwise distinct,
processing times for task $j$, that for simplicity we'll denote with
$t_{i,j}=\infty$ (these machines will be called \emph{dummy}).
\end{itemize} Notice that, since $\mech$ has a bounded PoA, none of its component
mechanisms $\AA_j$ can have an equilibrium,\footnote{In multiple points throughout
this proof we will silently be using the fact that a profile of reports $\vecc s$ is
an equilibrium of $\mech$ with respect to a true profile $\vecc t$, if and only if,
for all tasks $j$, $\vecc s^j$ is an equilibrium of $\AA_j$ with respect to $\vecc
t^j$; this is an immediate consequence of task-independence
(see~\cref{def:task-ind}).} on any true canonical cost vector $\vecc t^j$, that
allocates task $j$ to a dummy machine.

The following definition will be helpful for our exposition in the rest of the
proof:
\begin{definition}[Well-behaved tasks]\label{def:well-behaved} A task $j$ will be
called \emph{well-behaved} if, for all true canonical cost vectors $\vecc t^j$,
mechanism $\AA_j$ allocates task $j$ to the fast machine on \emph{all} equilibria
$\vecc s^j$.
\end{definition} It turns out that, due to anonymity, a much weaker condition is
actually enough in order to establish that a task is well-behaved:
\begin{claim}\label{claim:well-behaved_weaker} A task is well-behaved if there
exists a true canonical cost vector with respect to which all equilibria assign it
to the fast machine.
\end{claim}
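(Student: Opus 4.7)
The plan is to argue by contraposition, using anonymity (\cref{def:anonymity}) of the component mechanism $\AA_j$ as the main tool. Let $\vecc{\tilde t}^j$ be the witness canonical cost vector for which every equilibrium of $\AA_j$ assigns task $j$ to the fast machine of $\vecc{\tilde t}^j$, and suppose for contradiction that there exists another canonical $\vecc t^j$ together with an equilibrium $\vecc s^j$ of $\AA_j$ under $\vecc t^j$ whose allocation does \emph{not} place task $j$ on the fast machine of $\vecc t^j$.

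First, I would rule out the possibility that $\vecc s^j$ allocates task $j$ to a dummy machine: since dummies carry arbitrarily large true processing times, any such allocation would produce a makespan vastly exceeding $(n-1)\alpha + 1$ on this single-task instance, contradicting the standing assumption $\poa(\mech) < (n-1)\alpha + 1$. Hence $\vecc s^j$ must allocate task $j$ to the slow machine of $\vecc t^j$.

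Next, since the two canonical vectors share the same ``shape'' (one coordinate equal to $1$, one equal to $\alpha$, and $n-2$ pairwise distinct large values), there exists a permutation $\pi$ of $[n]$ with $\pi(\vecc t^j) = \vecc{\tilde t}^j$ that in particular sends the slow machine of $\vecc t^j$ to the slow machine of $\vecc{\tilde t}^j$. Noting that canonical vectors have no ties, I would apply \cref{def:anonymity} to the profile $\vecc t^j$ and the permutation $\pi$: the equilibrium $\vecc s^j$ then yields an equilibrium $\vecc{\tilde s}^j$ of $\AA_j$ under $\vecc{\tilde t}^j$ whose allocation is $\pi(\vecc x(\vecc s^j))$. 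A short check of the permutation conventions shows that $\vecc{\tilde s}^j$ places task $j$ precisely on $\pi^{-1}$ of the slow machine of $\vecc t^j$, which is exactly the slow machine of $\vecc{\tilde t}^j$, contradicting the witness property.

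The main (mild) obstacle is making the permutation $\pi$ well defined when the arbitrary ``infinity'' values at the dummy coordinates of $\vecc t^j$ and $\vecc{\tilde t}^j$ are chosen as different large numbers, so that the two cost vectors are not literally permutations of each other as multisets. I would address this with a brief appeal to \cref{lem:monotonicity}: since neither of the equilibria used in the argument ever allocates to a dummy, we can freely replace the dummy coordinates by a common set of sufficiently large, pairwise distinct values without introducing or destroying any of the relevant equilibria. After this alignment, both canonical vectors share the same multiset of values and the anonymity step goes through as described.
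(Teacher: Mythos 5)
Your proof takes essentially the same route as the paper's: argue by contrapositive, rule out dummy allocations via the standing $\poa$ bound, and then apply anonymity (\cref{def:anonymity}) to transport the slow-allocating equilibrium onto the witness canonical vector, contradicting the witness property.

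Your extra observation about the dummy coordinates is a fair catch --- the paper glosses over it by silently treating $\infty$ as a single symbol shared by all canonical vectors --- but the fix you propose via \cref{lem:monotonicity} does not quite close the gap. Monotonicity is one-directional: it preserves a \emph{given} equilibrium $\vecc s^j$ when you \emph{increase} the costs of unassigned (dummy) coordinates, so the bad equilibrium of $\vecc t^j$ indeed survives if you push its dummy values up. But the witness property is a universal statement (``\emph{all} equilibria of $\vecc{\tilde t}^j$ assign to fast''), and raising dummy costs can in principle \emph{create} new equilibria of the modified profile that monotonicity says nothing about; decreasing dummy costs to match $\vecc t^j$ is not permitted by the lemma at all. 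So ``without introducing or destroying any of the relevant equilibria'' overclaims what monotonicity delivers. The cleanest patch is simply to fix a single set of pairwise-distinct large values and use it in every canonical vector throughout the proof of \cref{thm:task_independent_tradeoff}; then any two canonical vectors are genuine permutations of one another as multisets, and the anonymity step applies verbatim, as the paper implicitly assumes. This restriction costs nothing, since the claim is only ever invoked on canonical columns constructed from the same profile and its permutations.
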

\begin{proof} Assume that there exists a canonical cost vector $\vecc t^j$ such
that, for all equilibria $\vecc s^j$ of $\vecc t^j$, mechanism $\AA_j$ assigns task
$j$ to its fast machine. For a contradiction, assume that there also exists a canonical cost
vector $\tilde{\vecc t}^j$ and an equilibrium $\tilde{\vecc s}^j$ under which
$\AA_j$ assigns the task to the slow machine (recall that the task cannot be
assigned to a dummy machine). Then, since  $\vecc t^j$ is a permutation of
$\tilde{\vecc t}^j$ and canonical vectors do not have ties, due to anonymity
(see~\cref{def:anonymity}) there has to exist an equilibrium $\hat{\vecc s}^j$ with
respect to $\vecc t^j$, under which the task is given to the slow machine; this is a
contradiction.
\end{proof}

\begin{claim}\label{claim:fixed_well-behaved} Mechanism $\mech$ has at least $n-1$
well-behaved tasks.
\end{claim}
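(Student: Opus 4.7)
The plan is to argue by contradiction: suppose $\mech$ has at most $n-2$ well-behaved tasks, so that at least $m-(n-2)\geq n-1$ tasks are not well-behaved (using $m=\max\{2,2n-3\}\geq n$). I will engineer a single profile of true processing times $\vecc t$ under which $\mech$ admits an equilibrium where machine $n$ carries load $(n-1)\alpha+1$, while the optimal schedule has makespan $1$, contradicting the hypothesis $\poa(\mech)<(n-1)\alpha+1$.

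Pick $n-1$ non-well-behaved tasks $j_1,\dots,j_{n-1}$, and one further task $j^\star$ distinct from them (available since $m\geq n$). Define $\vecc t$ coordinate-wise: for each $k\in [n-1]$, let $\vecc t^{j_k}$ be the canonical vector placing machine $k$ in the fast role (cost $1$) and machine $n$ in the slow role (cost $\alpha$); set $t_{n,j^\star}=1$ and $t_{i,j^\star}=\infty$ for $i\neq n$; and give every remaining task cost $0$ on every machine.

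The bad equilibrium is assembled component-by-component using task-independence. Because $j_k$ is not well-behaved, the contrapositive of \cref{claim:well-behaved_weaker} guarantees that under every canonical vector---including the one we chose---the component $\AA_{j_k}$ admits an equilibrium in which $j_k$ is not allocated to its fast machine; a dummy allocation would violate the PoA bound, so the task must land on the slow machine $n$, contributing $\alpha$ to its load. For $j^\star$, any equilibrium of $\AA_{j^\star}$ under $\vecc t^{j^\star}$ must allocate the task to machine $n$: otherwise, via task-independence, one could build a full instance with makespan $\infty$ against optimum $1$, again contradicting the PoA bound. This contributes an extra $1$ to machine $n$'s load, and task-independence glues these per-task choices into a single equilibrium $\ss$ of $\mech$ under $\vecc t$ (the remaining tasks are allocated arbitrarily since their costs are $0$).

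At $\ss$, machine $n$ carries load $(n-1)\alpha+1$, while the schedule sending $j_k$ to machine $k$ and $j^\star$ to machine $n$ has makespan $1$, yielding a ratio that contradicts $\poa(\mech)<(n-1)\alpha+1$ and completes the proof. The main subtlety lies in being allowed to fix machine $n$ in the slow role for every one of the $n-1$ canonical vectors above: the non-well-behaved property is stated directly only for \emph{some} canonical vector, and it is anonymity---encapsulated in \cref{claim:well-behaved_weaker}---that upgrades the guarantee to \emph{every} canonical vector, which is precisely the leverage needed to concentrate the slow-machine load on a single agent.
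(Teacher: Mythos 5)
Your proof is correct and takes essentially the same route as the paper, arriving at the same true profile (with machine $n$ cast in the slow role for all $n-1$ canonical tasks and an $n$-th task accessible only to machine $n$) and deriving the same PoA contradiction. The only organizational difference is that you invoke the contrapositive of \cref{claim:well-behaved_weaker} directly for the canonical vectors you need, folding all the anonymity bookkeeping into a single step, whereas the paper first extracts $n-1$ tasks with bad equilibria from a fixed initial profile $\vecc{t}$ via a sequence construction and then applies anonymity a second time to permute them into the desired positions.
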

\begin{proof} Let $\vecc t$ be a true instance whose task cost vectors are all
canonical. For example, let
\[
\vecc t =
\underbrace{\begin{pmatrix} 1 & 1 & \cdots & 1 \\
\alpha & \alpha & \cdots & \alpha \\
\infty & \infty & \cdots & \infty \\
\vdots & \vdots &  & \vdots \\
\infty & \infty & \cdots & \infty \\
\end{pmatrix}}_{2n-3}.
\] 
Due to~\cref{claim:well-behaved_weaker}, it is enough to show that there exists a
fixed set of (at least) $n-1$ tasks which, under \emph{all} equilibria of $\vecc t$,
they all get allocated to their fast machines. Formally, to get to a contradiction,
assume that for any subset of tasks $\bar{J}\subseteq[m]$ with $|\bar{J}|=n-1$, there exists a
task $j=\mathfrak{j}(\bar{J})\in \bar{J}$ and an equilibrium $\vecc s^j$ of $\vecc t^j$ on which
$\AA_j$ allocates task $j$ to the slow machine. Then, since instance $\vecc t$ has
at least $m\geq 2n-3$ tasks, we can apply this property repeatedly in order to get a
sequence of $n-1$ tasks
$$
j_1=\mathfrak{j}\left([n-1]\right),\qquad 
j_2=\mathfrak{j}\left([n]\setminus j_1\right),\qquad \dots, \qquad
j_{n-1}= \mathfrak{j}\left([2n-3]\setminus\{j_1, j_2, \dots, j_{n-2}\}\right)
$$
and corresponding equilibria $\vecc s^{j_1},\vecc s^{j_2},\dots,\vecc s^{j_{n-1}}$
(with respect to true canonical cost vectors $\vecc t^{j_1},\vecc
t^{j_2},\dots,\vecc t^{j_{n-1}}$, respectively) with the property that, for all
$\ell=1,\dots,n-1$, mechanism $\AA_{j_\ell}$ allocates task $j_{\ell}$ to the slow
machine.

Without loss, assuming for the rest of this claim's proof that
$\sset{j_1,j_2,\dots,j_{n-1}}=[n-1]$, we consider the following new profile of true
costs
$$
\tilde{\vecc t}=
 \left(
\begin{array}{c c c c c | c} 1 & \infty & \infty & \cdots & \infty &
\multirow{5}{*}{\text{\smash{\raisebox{-1.2ex}{\huge 0}}}}\\
  \infty & 1 & \infty & \cdots & \infty & \\
  \vdots & \ddots & \ddots & \ddots & \vdots &  \\
  \infty  & \cdots  & \infty & 1 & \infty &  \\
  \alpha & \alpha & \cdots & \alpha & 1 &
\end{array}
\right),
$$
where tasks $j=1,\dots,n-1$ have canonical cost vectors with the fast machine at the
diagonal and the slow one always being $i=n$; task $j=n$ is arbitrarily slow on all
machines except $i=n$, for which it has a cost of $1$; and all remaining tasks
$j\geq n+1$ have been rendered essentially irrelevant by setting their running times
to $0$ for all machines.

Since all canonical vectors $\tilde{\vecc t}^j$, $j=1,\dots,n-1$, of the new true
profile $\tilde{\vecc t}$ are permutations of the ones in the original true profile
$\vecc{t}$, and additionally we established that there exist equilibria $\vecc
s^{j}$ with respect to $\vecc t$ that assign all these tasks to their slow machines,
then due to anonymity (see~\cref{def:anonymity}) there must also exist an
equilibrium $\tilde{\vecc s}$ of mechanism $\mech$ (with respect to $\tilde{\vecc
t}$) on which all tasks $j=1,\dots,n-1$ are given to the machine with cost $\alpha$.
Furthermore, clearly task $j=n$ needs to be allocated to machine $i=n$ as well on
$\tilde{\vecc s}$, since she is the only one with bounded running time (recall that
$\mech$ has a bounded PoA).

Summarizing, equilibrium $\tilde{\vecc{s}}$ assigns all tasks to the last machine, for a
makespan of $(n-1)\cdot \alpha+1$. On the other hand, the diagonal allocation on
$\tilde{\vecc t}$, i.e.\ giving each task to the machine with cost $1$ would have
given a makespan of $1$. This results in a PoA bound of at least
$
\poa(\mech)\geq (n-1)\alpha+1,
$
which contradicts our initial assumptions about mechanism $\mech$.
\end{proof}

In light of~\cref{claim:fixed_well-behaved}, without loss let's assume from now on
that the first $n-1$ tasks are well-behaved, and consider the following profile of
true costs
$$
\hat{\vecc t}=
 \left(
\begin{array}{c c c c c | c}
\alpha & \infty & \infty & \cdots & \infty &
\multirow{5}{*}{\text{\smash{\raisebox{-1.2ex}{\huge 0}}}}\\
  \infty & \alpha & \infty & \cdots & \infty & \\
  \vdots & \ddots & \ddots & \ddots & \vdots &  \\
  \infty  & \cdots  & \infty & \alpha & \infty &  \\ 1 & 1 & \cdots & 1 & \alpha &
\end{array}
\right),
$$
which, in a similar way to that in the proof of~\cref{claim:fixed_well-behaved} for
profile $\tilde{\vecc{t}}$, we derived by permuting accordingly the canonical vectors
of the first $n-1$ tasks of $\vecc t$, setting the cost of task $j=n$ to be bounded
only for the last machine and, finally, rendering all tasks $j>n$ irrelevant.

Fix now \emph{any} equilibrium $\hat{\vecc{s}}=(\hat{\vecc{s}}_1,\dots,\hat{\vecc
s}_{n-1},\hat{\vecc{s}}_n)$ of $\mech$ under $\hat{\vecc t}$. 
Since tasks $j=1,\dots,n-1$ are well-behaved (see~\cref{def:well-behaved}), under
$\hat{\vecc{s}}$ mechanism $\mech$ has to assign them all to the machine with
running time $1$. Task $j=n$ needs to be allocated to machine $i=n$ as well (since
$\poa(\mech)$ is bounded). Thus, equilibrium $\hat{\vecc{s}}$ results in a makespan
of $(n-1)\cdot 1+\alpha$, while the diagonal allocation on $\hat{\vecc t}$ has a
makespan of $\alpha$. This results in a bound of
$
\pos(\mech)\geq \frac{n-1}{\alpha}+1,
$
concluding the proof of the theorem.
\end{proof}

\subsection{Optimal Mechanisms on the Pareto Frontier}
\label{sec:spa_mechs}

Next, we will design a class of mechanisms, parameterized by a quantity $\alpha$
that will populate, in a smooth way, the boundary given
by~\cref{thm:task_independent_tradeoff}. Thus, these mechanisms achieve trade-offs
that lie on the Pareto frontier of inefficiency for the class of task-independent
and anonymous mechanisms.

\begin{definition}[Second-Price mechanism with $\alpha$-relative reserve price ($\mechsp_\alpha$)]
\label{def:spa}
For $\alpha\geq 1$, $\mechsp_\alpha$ is the task-independent mechanism that, for
each task $j$: finds a machine $k \in \arg\min_{i \in N} s_{i,j}$ and sets a
\emph{reserve price} at $r=\alpha \cdot s_{k,j}$; assigns the task to the fastest
machine $\iota(j)\in \argmin_{i \in N} s_{i,j}$ (breaking ties-arbitrarily); pays
machine $\iota(j)$ the amount $\min \{\min_{i \in N \setminus
\{\iota(j)\}}s_{i,j},r\}$; pays nothing to the remaining machines $N\setminus
\iota(j)$.
\end{definition}
Informally, for each task $j$, the mechanism sets a reserve price which is $\alpha$
times larger than the smallest declared processing time, allocates the task to the
fastest machine (according to the declarations) and pays the machine the minimum of
the second-smallest declared processing time and the reserve price. What this
mechanism achieves in terms of the equilibria that it induces is the following:
assume that we create a \emph{bucket} of tasks with \emph{true} processing
times at most $\alpha$ times larger than the smallest \emph{true} processing time.
Then, in every equilibrium of the mechanism, task $j$ is allocated to some machine
in the bucket and moreover, for any machine in the bucket, there exists some
equilibrium under which $\mechsp_\alpha$ allocates the task to that machine. This is captured
formally by the following two lemmas. Referencing our discussion in
\cref{sec:extema}, we remark that in the case of $\text{FP}=\mechsp_1$, the tasks can 
only be assigned to the fastest machine(s), and in the case of
$\text{SP}=\mechsp_\infty$, the bucket contains the whole set of machines.
\begin{lemma}[``Nothing outside the bucket'']
\label{lem:poa-spr}
In any equilibrium of $\mechsp_\alpha$, any task $j$ can only be assigned to a
machine with processing time at most $\alpha \cdot \min_{i \in N} t_{i,j}$.
\end{lemma}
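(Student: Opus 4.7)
The plan is to extract two inequalities on the winning declaration $s_{i,j}$ by examining unilateral deviations of two carefully chosen machines: the winner itself, and the truly fastest machine $i^\star\in\argmin_{i'} t_{i',j}$. Since $\mechsp_\alpha$ is task-independent, I can restrict attention to the single-task component $\AA_j$, treating each machine's deviation on task $j$ in isolation from the other tasks.

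Fix an equilibrium $\ss$ and suppose task $j$ is assigned to machine $i$; if $i=i^\star$ the claim is immediate, so assume $i\neq i^\star$. Because $i$ wins the task, $s_{i,j}=\min_{i'}s_{i',j}$, and her payment from the $j$-th component equals $\min\bigl\{\min_{i'\neq i}s_{i',j},\,\alpha s_{i,j}\bigr\}\leq \alpha s_{i,j}$. Since $i$ can always deviate to a sufficiently high bid to lose task $j$ (obtaining single-task utility $0$), the equilibrium condition forces her current single-task utility to be nonnegative, giving the first key inequality
\[
\alpha s_{i,j}\;\geq\; t_{i,j}.
\]

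For the second inequality, I consider deviations of $i^\star$ from her current (losing) bid to some $s'_{i^\star,j}$ strictly below $s_{i,j}$. Then $i^\star$ becomes the unique minimizer and wins, and since the new second-lowest bid is exactly $s_{i,j}$, her payment is $\min\{s_{i,j},\,\alpha s'_{i^\star,j}\}$; by choosing $s'_{i^\star,j}\in[s_{i,j}/\alpha,\,s_{i,j})$ (or, for $\alpha=1$, letting $s'_{i^\star,j}\to s_{i,j}^-$), this payment equals $s_{i,j}$, yielding deviation utility (arbitrarily close to) $s_{i,j}-t_{i^\star,j}$. Equilibrium then forces $s_{i,j}\leq t_{i^\star,j}$, and chaining with the first bound gives $t_{i,j}\leq \alpha s_{i,j}\leq \alpha t_{i^\star,j}=\alpha\min_{i'}t_{i',j}$, exactly as required. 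The only residual case is $s_{i,j}=0$, in which the first inequality already forces $t_{i,j}=0$ and the claim is trivial; I do not anticipate any serious obstacle beyond the minor bookkeeping of tracking which of the two terms in the payment rule's $\min$ is active under the chosen deviation.
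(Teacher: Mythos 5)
Your proof is correct and rests on the same two key deviations as the paper's proof — the winner's option to drop out and the true fastest machine's option to undercut — but you organize them differently. The paper argues by contradiction with a case split on whether the winning bid $s_{\ell,j}$ exceeds $t_{1,j}$: in one case the winner's utility is negative (she should have told the truth), in the other the true fastest machine has a profitable deviation (she should have told the truth). You instead extract two unconditional inequalities, $\alpha s_{i,j}\geq t_{i,j}$ from the winner's participation constraint and $s_{i,j}\leq t_{i^\star,j}$ from the no-profitable-undercut condition for the true fastest machine, and chain them. This reorganization is minor but slightly cleaner, and your prescription for the undercutting deviation — bid in $[s_{i,j}/\alpha,\,s_{i,j})$ so that the reserve-price term $\alpha s'$ in the payment $\min$ does not bind, rather than ``deviate to truth'' — handles the corner cases more carefully: the paper's deviation-to-truth yields only weakly positive utility when $\alpha=1$ or when $t_{i^\star,j}=0$, whereas your choice (together with the limiting argument for $\alpha=1$ and the separate treatment of $s_{i,j}=0$) gives the strict inequality uniformly. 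The substance of the argument is the same.
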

\begin{proof}
Since $\mechsp_\alpha$ is task-independent, it suffices to consider the equilibria
of a single component $\mechsp_\alpha^j$, corresponding to task $j$. Let
${\ss}^j$ be such an equilibrium and without loss of generality, assume that
$t_{1,j} \in \arg\min_{i \in N} t_{i,j}$, i.e., machine $1$ is the fastest machine
according to the true processing times. Assume by contradiction that in
${\ss}^j$, some task $\ell$ with real processing time $t_{\ell, j} > \alpha
\cdot t_{1,j}$ is allocated task $j$ and let $s_{\ell, j}$ be its report. Since
machine $\ell$ receives the task, it obviously holds that $s_{\ell, j} \in
\arg\min_{i \in N} s_{i,j}$. We will consider two cases.

  \underline{Case 1:} $s_{\ell, j} \leq t_{1,j}$. In this case, the reserve price is set at
  $r = \alpha \cdot s_{\ell, j} \leq \alpha \cdot t_{1,j}$ and machine $\ell$ receives
  a payment of at most $\alpha \cdot t_{1,j}$. By assumption however, its true
  processing time is larger than $\alpha \cdot t_{i,j}$ and therefore the machine has
  negative utility. By deviating to telling the truth, the machine can obtain a
  nonnegative utility, contradicting the fact that ${\ss}^j$ is an
  equilibrium.

  \underline{Case 2:} $s_{\ell, j} > t_{1,j}$. In this case, machine $1$ has $0$ utility,
  since she is not allocated the task and she is not paid anything. However, if
  machine $1$ deviates to telling the truth, (i.e., if she deviates to
  $s'_{i,j}=t_{1,j}$), then, since $s_{\ell, j} = \min_{i \in N} s_{i,j}$ by assumption,
  the machine will now win the task (i.e., $x_{1j}=1$) and will receive a payment of
  $s_{\ell, j} > t_{1,j}$, obtaining strictly positive utility. Again, this contradicts
  the fact that ${\ss}^j$ is an equilibrium.

  In any case, ${\ss}^j$ can not be an equilibrium in which task $j$ is
  allocated to some machine with processing time larger than $\alpha \cdot \min_{i
  \in N} t_{i,j}$.
\end{proof}

\begin{lemma}[``Everything inside the bucket'']
\label{lem:pos-spr}
For every input profile ${\t}$ and any $\alpha >1$, there exist equilibria of $\mechsp_\alpha$
such that for every task $j$, every machine with processing time at most $\alpha
\cdot \min_{i \in N} t_{i,j}$ can be allocated task $j$.
\end{lemma}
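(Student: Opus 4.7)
By task-independence (\cref{def:task-ind}), the equilibria of $\mechsp_\alpha$ decompose into per-task equilibria of the single-task components $\mechsp_\alpha^j$: a profile $\vecc s$ is an equilibrium of $\mechsp_\alpha$ under $\vecc t$ iff each $\vecc s^j$ is an equilibrium of $\mechsp_\alpha^j$ under $\vecc t^j$. Hence it suffices to fix an arbitrary task $j$, set $f := \min_{i\in N} t_{i,j}$, and exhibit---for each eligible machine $k$ (with $t_{k,j}\le \alpha f$)---a pure Nash equilibrium of the single-task mechanism $\mechsp_\alpha^j$ that allocates task $j$ to $k$. Concatenating any such per-task choice yields the equilibria claimed by the lemma.

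I would propose the explicit single-task strategy profile
\[
s_{k,j} = f \qquad \text{and} \qquad s_{i,j} = \alpha f \quad\text{for all } i\neq k.
\]
(The degenerate case $f=0$ is handled by the all-zero profile with tie-breaking in $k$'s favour.) Since $\alpha>1$ and $f>0$, machine $k$ is the unique minimum declarer, so the reserve is $r=\alpha f$ and the second-lowest declaration is also $\alpha f$; thus $k$ wins and is paid $\min\{\alpha f,\alpha f\}=\alpha f$. Since $t_{k,j}\le \alpha f$ by eligibility, her utility $\alpha f - t_{k,j}$ is nonnegative.

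The bulk of the argument is then verifying that no machine has a profitable unilateral deviation. For any $i\neq k$: to change the allocation she must become the minimum bidder by reporting some $s'_i\le f$, but then $k$'s declaration $f$ becomes the second-lowest, so the payment is at most $\min\{f,\alpha s'_i\}\le f\le t_{i,j}$ (the last inequality because $t_{i,j}\ge f$), giving her nonpositive utility. For $k$ herself: lowering the bid below $f$ drops the reserve strictly below $\alpha f$ and hence lowers the payment; raising the bid above $\alpha f$ loses the task (utility $0$); and any intermediate bid $s'\in(f,\alpha f)$ keeps the second-lowest declaration pinned at $\alpha f$, so the payment remains capped at $\alpha f$ and her utility is unchanged.

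The one subtle point, which I expect to be the crux, is this last observation for $k$: a naive reading of the mechanism suggests that $k$ can raise her bid to simultaneously pump up the reserve $r=\alpha s_{k,j}$ and her payment---but the payment is the \emph{minimum} of the reserve and the second-lowest bid, and we have precisely calibrated the other machines' bids to equal the very value $\alpha f$ of the reserve induced by $s_{k,j}=f$. This tight pairing between reserve and competing bids simultaneously extracts the full payment $\alpha f$ for $k$ and neutralises her incentive to raise her bid, while keeping undercutting unprofitable for every other machine $i$ (who could collect at most $f\le t_{i,j}$). The remaining details are routine case analysis.
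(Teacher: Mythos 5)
Your proof is correct and follows the same overall plan as the paper's—reduce to a single task via task-independence and then exhibit an explicit supporting equilibrium profile where machine $k$ wins—but your choice of profile is different and, arguably, cleaner. The paper splits into two cases: when $t_{k,j}>t_f$ it sets $s_{i,j}=t_{k,j}$ for $i\neq k$ (so $k$ is paid exactly its true cost and earns zero utility), and when $t_{k,j}=t_f$ it resorts to an $\varepsilon$-perturbed profile $s_{i,j}=t_f+\varepsilon$ with $t_f+\varepsilon<\alpha t_f$. Your uniform choice $s_{i,j}=\alpha f$ pegs the competing bids exactly at the reserve $\alpha s_{k,j}$, which simultaneously hands $k$ the maximal payment $\alpha f$ and, as you observe, neutralises any incentive for $k$ to raise the reserve (the payment stays capped by the second-lowest bid); this avoids the paper's case split entirely and gives a single, tidy verification. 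One wrinkle shared by both proofs: the degenerate case $f=\min_i t_{i,j}=0$ is not genuinely handled. Your appeal to ``tie-breaking in $k$'s favour'' is not something the mechanism grants—ties are resolved by a fixed arbitrary rule, so you cannot choose the winner—while the paper's Case 2 likewise needs $t_f>0$ to pick $\varepsilon$ with $t_f+\varepsilon<\alpha t_f$. This is harmless for the lemma's intended downstream use (when $f=0$ every eligible machine has zero cost for the task, so the allocation is makespan-irrelevant), but it would be cleaner to state the argument under the hypothesis $\min_i t_{i,j}>0$ rather than wave at tie-breaking.
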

\begin{proof}
Again, since $\mechsp_\alpha$ is task-independent, it suffices to consider the
equilibria of a single component $\mechsp_\alpha^j$, corresponding to task $j$.
Given an input profile ${\t}$, let $J_\alpha$ be the set of machines $i'$ such
that $t_{i',j} \leq \alpha \cdot \min_{i \in N} t_{i,j}$ and let $t_f = \min_{i \in N}
t_{i,j}$ be the processing time of the fastest machine for task $j$. Let $k$ be any
machine in $J_\alpha$; we will consider two strategy profiles (or rather their restrictions
to the $j$-th component), depending on whether $t_{k,j} > t_f$ or $t_{k,j} = t_f$.

  \underline{Case 1:} $t_{k,j} > t_f$. Then, consider the strategy profile ${\ss}^j$
  such that $s_{k,j} = t_f$ and $s_{i,j} = t_{k,j}$ for any machine $i \in J
  \setminus \{k\}$. We will prove that ${\ss}^j$ is an equilibrium, by considering
  possible deviations of machine $k$, and the remaining machines in $J \setminus
  \{k\}$ separately.
  \begin{enumerate}
  	\item[a)] The current utility of machine $k$ is $0$, since it receives the task
	for which the true processing time is $t_{k,j}$ and receives a payment of
	$t_{k,j}$. Note that the reserve price is set to at most $t_{k,j}$, since $t_{k,j}
	\leq \alpha \cdot t_f$ by assumption, and $s_{k,j} = t_f$. Since there exist
	machines with reported processing times at $t_{k,j}$, machine $k$ can not obtain
	positive utility by any deviation, even if it increases the reserve price
	while still winning the task.
	\item[b)] Consider any machine $i \in J \setminus \{k\}$. Since machine $i$ is
	not winning the task, its utility is $0$, and the only way to possibly obtain a
	positive utility is by forcing an allocation in which it wins the task. For this
	to be possible, it has to deviate to $s'_{i,j} \leq t_f = s_{k,j}$, as otherwise
	machine $k$ would still be the winner. In that case however, the payment of
	machine $i$ will be at most $t_f$ and by assumption, we know that $t_{i,j} \geq
	t_f$ for any task $j \in J$. Therefore, the deviation results in a utility of at
  	most $0$ for machine $i$ and it is not a beneficial deviation.
  \end{enumerate}
  
 \underline{Case 2:} $t_{k,j} = t_f$. Then, consider the strategy profile ${\ss}^j$
 such that $s_{k,j} = t_f$ and $s_{i,j} = t_f + \varepsilon$, where $t_f+\varepsilon <
 \alpha \cdot t_f$; this is possible since $\alpha >1$ and the continuity of the
 strategy space. Again, we consider possible deviations of machine $k$, and the
 remaining machines separately.
\begin{enumerate}
	\item[a)] The current utility of machine $k$ is $\varepsilon$, since it receives
	the task for which the true processing time is $t_{k,j}=t_f$ and receives a
	payment of $t_f+\varepsilon$, the second smallest reported processing time (which
	is also smaller than the reserve price by the choice of $\varepsilon$). In order
	for machine $k$ to still receive the task, it has to use some strategy $s_{k,j}'
	\leq t_f + \varepsilon$, but any such strategy can not affect the payment that it
	receives. Therefore the machine does not have a beneficial deviation.
	\item[b)] The argument in this case is identical to Case 1b above.
\end{enumerate}
\end{proof}

\begin{theorem}\label{thm:sp_alpha_poa}
The Price of Anarchy of $\mechsp_\alpha$ on $n$ machines is at most $(n-1) \alpha +1$.
\end{theorem}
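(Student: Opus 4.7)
The plan is to exploit \cref{lem:poa-spr} (``Nothing outside the bucket'') through a charging argument that compares the equilibrium allocation against the optimal one, machine by machine.

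Fix any true profile $\t$, any equilibrium $\ss\in\mathcal{Q}_\t^{\mechsp_\alpha}$, and any optimal allocation $\vecc y^\star$ with makespan $\opt=\opt(\t)$. For each machine $i$, let $S_i$ be the set of tasks allocated to $i$ under $\ss$ and $T_i$ the set of tasks allocated to $i$ by $\vecc y^\star$. Let $k$ be the machine realizing the equilibrium makespan $C^{\mechsp_\alpha}(\ss|\t) = \sum_{j\in S_k}t_{k,j}$. The key is to split
\[
\sum_{j\in S_k} t_{k,j} = \sum_{j\in S_k\cap T_k} t_{k,j} \;+\; \sum_{j\in S_k\setminus T_k} t_{k,j}.
\]

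For the first sum, since $S_k\cap T_k\subseteq T_k$, this is bounded by the load of machine $k$ in the optimal schedule, hence by $\opt$. For the second sum I would invoke \cref{lem:poa-spr}: since $\ss$ is an equilibrium and task $j$ is assigned to $k$, we have $t_{k,j}\leq \alpha\cdot\min_{i\in N}t_{i,j}\leq \alpha\cdot t_{\sigma(j),j}$, where $\sigma(j)$ denotes the machine that receives task $j$ in the optimal allocation. Grouping the tasks of $S_k\setminus T_k$ according to their optimal recipient $\sigma(j)\neq k$ then gives
\[
\sum_{j\in S_k\setminus T_k} t_{k,j} \;\leq\; \alpha\sum_{i\neq k}\;\sum_{\substack{j\in S_k\setminus T_k\\ \sigma(j)=i}} t_{i,j} \;\leq\; \alpha\sum_{i\neq k}\opt \;=\; (n-1)\alpha\cdot\opt,
\]
the second inequality holding because, for each fixed $i\neq k$, the tasks summed are a subset of $T_i$ whose total processing time on machine $i$ is at most $\opt$.

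Combining the two estimates yields $C^{\mechsp_\alpha}(\ss|\t) \leq \left((n-1)\alpha+1\right)\opt$, and taking suprema over $\t$ and over equilibria gives the claimed bound on the Price of Anarchy. There is no real obstacle here beyond being careful that \cref{lem:poa-spr} is applied per task (which is legitimate because $\mechsp_\alpha$ is task-independent, so the restriction of an equilibrium to each component is itself an equilibrium of that component); the rest is the standard ``optimal vs equilibrium'' double-counting trick.
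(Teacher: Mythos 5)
Your proof is correct and follows essentially the same approach as the paper's: both split the equilibrium load of the busiest machine into the tasks it also receives under OPT and the tasks it does not, bound the first piece by OPT, and bound each task in the second piece by $\alpha$ times its processing time under OPT via \cref{lem:poa-spr}, regrouping by optimal recipient. The only cosmetic difference is the final accounting: you bound each of the $n$ pieces directly by $\opt$, whereas the paper keeps the quantities $K_1^*$ and $\max_{i\ge 2}K_i^*$ symbolic and invokes a small technical lemma (\cref{lemma:tech_1}) to divide by $\max\{K_1^*,\max_{i\ge 2}K_i^*\}$ — both yield the same $(n-1)\alpha+1$ bound.
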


\begin{proof}
Fix some underlying $n\times m$ true cost matrix $\vecc t$ and a parameter $\alpha
>1$. Fix also an optimal (makespan-minimizing) allocation $\opt$ of $\vecc t$ and a
(pure) Nash equilibrium $\vecc s$ of $\mechsp_\alpha$ under true costs $\vecc t$.
For any task $j=1,\dots,m$, let $\iota^{*} (j)$ and $\iota(j)$ denote the machine
that gets task $j$ at $\opt$ and $\mechsp_\alpha(\vecc s)$, respectively. Also, let
$K_i^{*}$, $K_i$ denote the corresponding machine loads and $J_i^{*}$, $J_i$ the
sets of assigned tasks; that is, for $i=1,\dots,n$, we define
$$
K_i^{*} \equiv \sum_{j\in J_i^{*}} t_{i,j}
\qquad\text{and}\qquad
K_i \equiv \sum_{j\in J_i} t_{i,j},
$$
where
$$
J_i^{*}=\ssets{j\in J\fwh{\iota^{*}(j)=i}}
\qquad\text{and}\qquad
J_i=\ssets{j\in J\fwh{\iota(j)=i}}.
$$
Finally, it is without loss of generality to assume that $K_1\geq K_2 \geq \dots
\geq K_{n}$, so that the makespan of $\mechsp_\alpha$ on $\vecc s$ is
\begin{align*}
K_1=\sum_{j\in J_1} t_{1,j}
=\sum_{j\in J_1\inters J_1^{*}} t_{1,j} + \sum_{j\in J_1\setminus J_1^{*}} t_{1,j}
\leq \sum_{j\in J_1^{*}} t_{1,j} + \alpha\sum_{j\in J\setminus J_1^{*}} t_{\iota^{*}(j),j},
\end{align*}
the last inequality holding since, for any task $j$, $t_{\iota(j),j}\leq \alpha
\cdot t_{\iota^{*}(j),j}$ (due to \cref{lem:poa-spr}). Thus, we can bound our
mechanism's makespan by
$$
K_1\leq K_1^{*}+\alpha\sum_{i=2}^{n} K_i^{*} \leq K_1^{*}+\alpha (n-1)\max_{i=2,\dots,n}K_i^{*}.
$$
Putting everything together, and denoting for simplicity $x=K_1^{*}$ and
$y=\max_{i=2,\dots,n}K_i^{*}$, the PoA of $\mechsp_\alpha$ is finally upper bounded
by
\begin{equation*}
\poa(\mechsp_\alpha)
\leq \frac{x+\alpha (n-1)y}{\max_{i=1,\dots,n} K_i^{*}}
= \frac{x+\alpha (n-1)y}{\max\sset{x,y}}
\leq \alpha (n-1)+1,
\end{equation*}
the last step coming from applying \cref{lemma:tech_1} (see appendix) with $\beta=\alpha (n-1)$ and $\gamma =1$.
\end{proof}

\begin{theorem}\label{thm:sp_alpha_pos}
The Price of Stability of $\mechsp_\alpha$ on $n$ machines is at most $\frac{n-1}{\alpha}+1$.
\end{theorem}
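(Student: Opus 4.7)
The plan is to produce, for any true cost matrix $\vecc t$, an equilibrium of $\mechsp_\alpha$ whose makespan is at most $\left(\frac{n-1}{\alpha}+1\right)\opt(\vecc t)$. The case $\alpha=1$ is already handled by \cref{thm:FP}, so I assume $\alpha>1$. The crucial ingredient is \cref{lem:pos-spr}: together with task-independence, it lets us \emph{freely choose} any allocation in which every task $j$ is placed on some machine whose true cost is at most $\alpha\cdot\min_i t_{i,j}$ (the ``bucket'' of $j$), and be guaranteed that this allocation is supported by some equilibrium.

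With this in hand, the construction is natural: start from an optimal allocation $\opt$ with assignments $\iota^{*}(j)$ and loads $K_i^{*}\leq \opt(\vecc t)$, and \emph{repair} it task by task. For each task $j$, let $f(j)\in\arg\min_i t_{i,j}$; keep $j$ on $\iota^{*}(j)$ if $\iota^{*}(j)$ already lies in $j$'s bucket (i.e.\ $t_{\iota^{*}(j),j}\leq \alpha\cdot t_{f(j),j}$), and otherwise reassign $j$ to $f(j)$. Every task now sits inside its bucket, so \cref{lem:pos-spr} supplies an equilibrium $\vecc s$ realizing this allocation, and it remains to bound its makespan.

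The key accounting step is to split the load $L_i$ of any machine $i$ under this equilibrium into two parts: tasks kept from $\opt$ (contributing at most $K_i^{*}\leq\opt(\vecc t)$), and tasks reassigned to $i$ from some other OPT-machine $\iota^{*}(j)\neq i$. For reassigned tasks we have $t_{i,j}=t_{f(j),j}<\tfrac{1}{\alpha}t_{\iota^{*}(j),j}$ by the reassignment criterion, so
\[
\sum_{j\in J_i\setminus J_i^{*}} t_{i,j} \;<\; \frac{1}{\alpha}\sum_{j\in J_i\setminus J_i^{*}} t_{\iota^{*}(j),j} \;\leq\; \frac{1}{\alpha}\sum_{i'\neq i} K_{i'}^{*} \;\leq\; \frac{n-1}{\alpha}\opt(\vecc t),
\]
where the second inequality uses that each reassigned task $j\in J_i\setminus J_i^{*}$ contributes $t_{\iota^{*}(j),j}$ exactly once to an OPT load $K_{i'}^{*}$ of a machine $i'\neq i$. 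Combining the two pieces yields $L_i\leq\left(1+\tfrac{n-1}{\alpha}\right)\opt(\vecc t)$ for every $i$, which gives the desired PoS bound.

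There is no real obstacle: all the mechanism-design content is absorbed into \cref{lem:pos-spr}, and the rest is a load-balancing amortization. The only subtlety worth flagging explicitly is that the extra cost dumped on a single machine by the reassignment step is charged to the \emph{other} $n-1$ OPT loads (because $\iota^{*}(j)\neq i$ for every reassigned $j\in J_i$), which is precisely where the factor $(n-1)/\alpha$ — and hence the matching of the inefficiency boundary in \cref{thm:task_independent_tradeoff} — comes from.
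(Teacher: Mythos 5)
Your proof is correct and essentially matches the paper's own argument: you build the same equilibrium (keep each task on its OPT-machine if that machine lies inside the task's $\alpha$-bucket, otherwise send it to the fastest machine), and you justify its existence via the same key \cref{lem:pos-spr}. The only difference is the final bookkeeping: you bound each machine's equilibrium load directly by amortizing the reassigned cost across the other $n-1$ OPT loads, whereas the paper orders the loads, passes through $\max_i L_i\le\sum_i L_i$, and invokes the auxiliary \cref{lemma:tech_1}; both are valid, and your per-machine charging is if anything slightly more elementary.
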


\begin{proof}
Fix some underlying $n\times m$ true cost matrix $\vecc t$,
an optimal (makespan-minimizing) allocation $\opt$ of $\vecc t$
and a parameter $\alpha >1$.
Also, let $\iota^{*} (j)$ denote the machine that gets task $j$ at $\opt$.

We partition the set of tasks $J=\ssets{1,2,\dots,m}$ into two sets
$J_{\text{small}}$ and $J_{\text{large}}$, based on their processing time under
$\opt$. More specifically, we define:
$$
J_{\text{small}} \equiv \ssets{j\in J \fwhs{ t_{\iota^{*}(j),j} \leq \alpha\cdot\min_{i\in N} t_{i,j}}}
\quad
\text{and}
\quad
J_{\text{large}} \equiv J\setminus J_{\text{small}}= \ssets{j\in J \fwhs{ t_{\iota^{*}(j),j} > \alpha\cdot\min_{i\in N} t_{i,j}}}.
$$
Intuitively, in light of \cref{lem:poa-spr,lem:pos-spr}, we can think of
$J_{\text{small}}$ as containing all the tasks that $\mechsp_\alpha$ can allocate to
the same machine as $\opt$ in some equilibrium and $J_\text{large}$ containing the
tasks for which this is not possible, but which nevertheless end up at a machine
that runs them faster than in $\opt$.

Now consider the pure Nash equilibrium $\ss$ of $\mechsp_\alpha$ (with respect to
true running times $\vecc{t}$) that allocates all small tasks at the same machine as
$\opt$, and all large tasks myopically to the fastest machine for that task. More
precisely, if $\iota(j)$ denotes the machine that gets task $j$ under
$\mechsp_\alpha(\ss)$, we set $\iota(j) = \iota^{*}(j)$ for all $j\in
J_{\text{small}}$ and $\iota(j)\in \argmin_{i\in N} t_{i,j}$ for $j\in
J_{\text{large}}$. The fact that such an equilibrium indeed exists, is a consequence
of \cref{lem:pos-spr}.

Let $K_i$ denote the load of machine $i$ after allocating only the small tasks
according to $\opt$ (and thus, also according to the equilibrium $\ss$ of
$\mechsp_\alpha$). Let $L_i^{*}$, $L_i$ denote the load of machine $i$ after
allocating only the large tasks according to $\opt$ and $\ss$, respectively. That
is, we formally define:
$$
K_i \equiv \negthickspace\negthickspace\sum_{j\in J_{\text{small}} :\iota(j)=i}\negthickspace\negthickspace t_{i,j},
\qquad
L_i^{*} \equiv \negthickspace\negthickspace \sum_{j\in J_{\text{large}} :\iota^{*}(j)=i} \negthickspace\negthickspace t_{i,j}
\qquad\text{and}\qquad
L_i \equiv \negthickspace\negthickspace \sum_{j\in J_{\text{large}} :\iota(j)=i} \negthickspace\negthickspace t_{i,j}.
$$
Without loss, let's assume that $K_1\geq K_2\geq\dots\geq K_n$.
Then, the makespan of $\opt$ is
$$
\max_{i\in N} (K_i+L_i^{*})
\geq
\max\sset{K_1+L_1^{*},\max_{i=2,\dots,n} L_i^{*}}
\geq
\max\sset{K_1+L_1^{*},\frac{1}{n-1}\sum_{i=2,\dots,n} L_i^{*}},
$$
while that of $\ss$ can be upper bounded by
$$
\max_{i\in N}(K_i+L_i)
\leq
\max_{i\in N} K_i + \max_{i\in N} L_i
\leq K_1+\sum_{i=1}^n L_i
\leq K_1+\frac{1}{\alpha}\sum_{i=1}^n L_i^{*}
\leq K_1+L_1^{*}+\frac{1}{\alpha}\sum_{i=2}^n L_i^{*},
$$
where the second to last inequality holds due to the fact that for large tasks
$$
\sum_{i=1}^n L_i
=\sum_{j\in J_{\text{large}}} t_{\iota(j),j}
=\sum_{j\in J_{\text{large}}} \min_{i\in N} t_{i,j}
\leq \sum_{j\in J_{\text{large}}} \frac{1}{\alpha} t_{\iota^{*}(j),j}
= \frac{1}{\alpha}\sum_{i=1}^n L_i^{*},
$$
and the last one holds due to $\alpha \geq 1$.

Putting everything together, and denoting for simplicity $x=K_1+L_1^{*}$ and $y=\sum_{i=2}^n L_i^{*}$, we have that
$$
\pos(\mechsp_{\alpha}) \leq \frac{x+\frac{1}{\alpha} y}{\max\sset{x,\frac{1}{n-1}y}} \leq \frac{n-1}{\alpha}+1,
$$
the last inequality holding by applying \cref{lemma:tech_1} with $\beta=\frac{1}{\alpha}$ and $\gamma=\frac{1}{n-1}$.
\end{proof}

\section{Discussion and Future Directions}
\label{sec:discussion}

In this section, we discuss some implications of our approach, as well as directions
for future work. On a general level, one could follow our agenda of studying the
inefficiency trade-off between the Price of Anarchy and the Price of Stability for
many other problems in algorithmic mechanism design, such as auctions
\cite{Lucier2013,syrgkanis2013composable}, machine scheduling without money
\cite{K14,GKK16}, or resource allocation \cite{christodoulou2016social}, to name a
few, for which the two inefficiency notions have already been studied separately.

In terms of the strategic scheduling setting, our work gives rise to a plethora of
intriguing questions for future work, both on a technical and a conceptual level,
which we highlight below in more detail.

\subsection{General Mechanisms} 
\label{sec:discussion_general_mechs}

Why did we focus on task-independent mechanisms for our tight frontier result, since
they are seemingly not a good fit for makespan minimization? First of all, the
latter statement is true for the \emph{algorithmic} version of the problem, but not
necessarily true for the \emph{strategic} version. It is not only conceivable that
task-independent mechanisms, despite their ``naive'' allocation rules, can induce
equilibria in which the allocation is actually quite efficient, but this actually
happens, as evidenced by the Price of Stability of the Second Price Mechanism. Also,
task-independent mechanisms are much more amenable to an equilibrium analysis,
because each task induces a separate games between the machines. Showing the
limitations of this class is quite important, because it is not a priori clear that
they could not achieve the best possible trade-offs suggested by
\cref{thm:tradeoff}.

The major open question associated with our work is whether there exists a mechanism
that achieves a better trade-off than that of \cref{thm:task_independent_tradeoff},
or in other words,
\begin{quote}
\emph{``Is the yellow region of \cref{fig:pareto_task_indi} empty or not?''}
\end{quote} 
If such a mechanism exists, it will most probably \emph{not} be
task-independent\footnote{See also the discussion in \cref{sec:discussion_anonymity}.}; this is
somewhat reminiscent of the state-of-the-art results in truthful machine scheduling,
where the best possible mechanisms (with respect to the approximation ratio, see
\cref{sec:solutionconcepts}) for several variants of the problem are in fact
task-independent \cite{NR07,christodoulou2010mechanism} and whether a better
mechanism that is not task-independent exists is a prominent open question.

While in the case of truthful mechanisms, the general consensus\footnote{This is
	because of the result of \citet{ashlagi2012optimal} for anonymous mechanisms and
	due to personal communications with authors of central papers in the field.}
	seems to be that the best achievable mechanisms will in fact eventually proven
	to be task-independent, the situation in the strategic version of the problem
	might be quite different. This is because we have \emph{all} possible mechanisms
	at our disposal and it is more conceivable that some allocation rule, tied with
	some appropriate payment function could potentially outperform the trade-off
	bounds of \cref{thm:task_independent_tradeoff}. This seems, however, like a
	quite challenging task; to offer some intuition, we remark the following about
	the design of general mechanisms for the problem.

The most natural idea is perhaps to use a known algorithm for unrelated machine
scheduling~\citep{ibarra1977heuristic,Davis1981,LST90} such as the greedy allocation
algorithm\footnote{The algorithm is referred to as ``Algorithm D'' in
\cite{ibarra1977heuristic}.} of \citet{ibarra1977heuristic}, or even the
makespan-optimal algorithm and couple them with a ``get-paid-your-load'' payment
function, where each machine receives a monetary compensation equal to the sum of
the reported processing times for the tasks that she gets assigned. This is
essentially the generalization of the payment rule of the First-Price mechanism for
more general allocation rules. The hope is that by virtue of having a more efficient
allocation rule, the resulting mechanism will always have equilibria with small
makespan (good PoS) while never having equilibria with very large makespan (good
PoA). Unfortunately, one can show that for a large class of such allocation
algorithms (which includes all the aforementioned algorithms that have been proposed
in the literature for the classical unrelated machine scheduling problem), the Price
of Anarchy of the resulting mechanism will be unbounded.

Overall, for a mechanism to lie in the yellow region of~\cref{fig:pareto_task_indi},
it seems imperative that it will need to employ some more complicated payment
function, which will ``guide'' the agents towards the desired equilibria, rather
than simply attempt to implement a better allocation rule with known payment
structures.

\subsection{The Role of Anonymity}
\label{sec:discussion_anonymity}

In the proof of~\cref{thm:task_independent_tradeoff}, we used the fact that the
mechanism in question is anonymous. In many cases in the general related literature,
this assumption is without loss of generality, as the best possible mechanisms with
respect to an objective are anonymous. In the literature of the unrelated machine
scheduling problem in algorithmic mechanism design however, understanding the role
of anonymity is a long-standing open problem. In particular, while all the known
mechanisms for several variants of the problem are anonymous, the best-known lower
bounds for anonymous and non-anonymous mechanisms are strikingly different, from,
specifically $n$ in the former case (given by \cite{ashlagi2012optimal} and matching
the best-known upper bound of \cite{NR07}) and $2.75$ in the latter case (given by
\cite{giannakopoulos2020new}). These challenges are inherited in the strategic setting
as well. 
Nevertheless, for general (not necessarily anonymous) task-independent mechanisms, we can still
show the following inefficiency trade-off. Its proof can be found in
\cref{sec:app-sqrt2}.

\begin{theorem}\label{thm:sqrt-tradeoff}
	For any task-independent scheduling mechanism $\mech$ for $n$ machines, and real
	$\alpha>1$,
	$$
	\poa (\mech) < (n-1)\frac{\alpha}{\sqrt{2}}+1 \quad \then \quad \pos (\mech)
	\geq \frac{(n-1)}{\alpha \sqrt{2}}+1.
	$$
\end{theorem}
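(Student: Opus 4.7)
The plan is to follow the blueprint of \cref{thm:task_independent_tradeoff}, reusing its setup of canonical cost vectors (one machine at cost $1$, one at cost $\alpha$, all others at $\infty$), but substituting its symmetry step---which crucially used anonymity to conclude global ``well-behavedness'' from a single configuration---with a purely combinatorial counting argument. Concretely, for each task $j$ and each ordered pair of distinct machines $(f,s)$, I would define the indicator $M_j[f,s]\in\{0,1\}$ to be $1$ iff the single-task component $\AA_j$ admits an equilibrium, on the canonical vector with fast machine $f$ and slow machine $s$, that allocates task $j$ to $s$. Since $\poa(\mech)$ is assumed finite, $\AA_j$ never allocates task $j$ to a dummy machine on any canonical input (otherwise the PoA would be unbounded), so the only choice is fast or slow.

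Next, for any distinguished machine $k\in [n]$ and injection $\sigma:[n-1]\to [n]\setminus\{k\}$, I would build two parametric families of instances. Config~$A_{k,\sigma}$ has $n-1$ primary tasks, where task $j$ uses the canonical vector with fast machine $\sigma(j)$ and common slow machine $k$, plus an auxiliary task (in the spirit of $\hat{\vecc t}$ of the anonymous proof) that pins $\opt=1$. By task-independence, the $N_A(k,\sigma):=\sum_j M_j[\sigma(j),k]$ ``bad'' tasks can simultaneously be sent to $k$ in a product equilibrium, giving $\poa(\mech)\geq \alpha\cdot N_A(k,\sigma)+1$. Symmetrically, Config~$B_{k,\sigma}$ swaps the roles: task $j$ uses the canonical vector with fast machine $k$ and slow machine $\sigma(j)$, plus an auxiliary task that pins $\opt=\alpha$. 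In \emph{every} equilibrium, the $N_B(k,\sigma):=\sum_j\bigl(1-M_j[k,\sigma(j)]\bigr)$ ``well-behaved'' tasks are forced to $k$, giving $\pos(\mech)\geq N_B(k,\sigma)/\alpha+1$.

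To derive the theorem I would then need to exhibit $(k,\sigma)$ with $\max\{N_A(k,\sigma),N_B(k,\sigma)\}\geq(n-1)/\sqrt{2}$; this is the main obstacle. A naive averaging of $N_A(k,\sigma)+N_B(k,\sigma)$ over uniformly random $k$ and $\sigma$ (using the telescoping identity $\sum_k\sum_{i\neq k}M_j[i,k]=\sum_k\sum_{i\neq k}M_j[k,i]$ for each $j$) yields only the bound $n-1$, hence $\max\geq (n-1)/2$---a factor of $\sqrt{2}$ short. The gap would be closed by replacing this additive estimate with a multiplicative one of the form $N_A(k,\sigma)\cdot N_B(k,\sigma)\geq(n-1)^2/2$ and then invoking $\max(a,b)\geq\sqrt{ab}$; this should be attainable by exploiting the bijection constraint on $\sigma$ (which couples the values of $M_j[\sigma(j),\cdot]$ across tasks) together with a single hybrid instance, combining a Config~$A$-like and Config~$B$-like block, in which both $N_A$ and $N_B$ contribute to the load on the distinguished machine $k$ simultaneously. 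Once this combinatorial step is in place, the contradiction with the theorem's hypothesis follows exactly as in the anonymous proof.
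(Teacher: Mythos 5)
Your plan correctly identifies the two-sided structure (a ``Config~A'' instance producing a PoA constraint, a ``Config~B'' instance producing a PoS constraint), and you even flag yourself where the argument stalls: naive averaging only yields $\max\{N_A,N_B\}\geq(n-1)/2$, a factor $\sqrt{2}$ short. The proposed repair, however, does not work, and the failure points to the genuine missing idea.

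The multiplicative bound $N_A(k,\sigma)\cdot N_B(k,\sigma)\geq(n-1)^2/2$ is simply false under the constraints you have. After reducing to a single shared component mechanism and taking $\sigma$ a bijection (which your setup permits), $N_A(k)=\sum_{f\neq k}M[f,k]$ is the $k$-th column sum of the binary matrix $M$ and $N_B(k)=(n-1)-\sum_{s\neq k}M[k,s]$ is $(n-1)$ minus its $k$-th row sum. The only constraint the PoA hypothesis gives is that every column sum is $<(n-1)/\sqrt{2}$. A regular directed graph with all in- and out-degrees equal to $(n-1)/2$ satisfies that constraint yet has $N_A(k)=N_B(k)=(n-1)/2$ for every $k$, so $N_AN_B=(n-1)^2/4<(n-1)^2/2$ and $\max\{N_A,N_B\}<(n-1)/\sqrt{2}$. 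No choice of $k$, $\sigma$, or hybrid instance built from these $0/1$ indicators can beat $1/2$; the bijection constraint you hope to exploit is already absorbed (it is precisely what makes $N_A$ a column sum and $N_B$ a row-sum complement).

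The missing idea is that you must not discretize to a yes/no question at a single slow cost $\alpha$. The paper instead records, for each ordered pair of machines $(i,j)$, the continuous threshold $a_{i,j}$: the largest slow cost at which the slow machine can still win in some equilibrium. The PoA side then gives $\sum_i a_{i,j} < (n-1)\alpha/\sqrt{2}$ (a constraint on \emph{column sums of thresholds}, not on counts of indicators), while the PoS side becomes an optimization over subsets, $\pos\geq\max_{I}\,|I|/\max_{j\in I}\bar{a}_{i,j}+1$, which is strictly stronger than counting how many thresholds exceed $\alpha$. The $\sqrt{2}$ then comes out of a purely combinatorial lemma (Lemma~2 in the appendix): if all column sums of the threshold matrix are small, then some row must have its $k$-th smallest entry growing slower than $\alpha(k-1)\sqrt{2}/(n-1)$; the key step is the quadratic sum $\sum_{k=2}^{n}(k-1)=n(n-1)/2$, which is exactly where the $\sqrt{2}$ originates. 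Your binary matrix has no such quadratic structure, so this factor is unrecoverable. As a secondary point, you would also still need an argument (the paper's Lemma~3) to justify using a single shared component mechanism, since task-independence alone permits a different single-task mechanism per task.
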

The above is a strict improvement with respect to the general boundary given by
\cref{thm:tradeoff}, but still does not quite match the Pareto frontier that we
proved for anonymous mechanisms. In other words, the inefficiency boundary given by
\cref{thm:sqrt-tradeoff} lies strictly within the yellow area in
\cref{fig:pareto_task_indi}. Obtaining tight bounds is an interesting open problem.

\subsection{Equilibrium Notion Considerations}
\label{sec:disc-equilibrium}
In this paper, we study the set of all possible equilibria of mechanisms for the
problem, which may include equilibria which are \emph{weakly dominated} (e.g.\
see~\citep[Sec.~1.8]{Myerson1997a}), i.e., the agents could use a different strategy
instead of their equilibrium strategy and obtain the same utilities, regardless of
the reports of the other agents. These type of equilibria are known to exist in the
Second-Price mechanism and our class of mechanisms $\mechsp_\alpha$ also exhibits
such equilibria. In order to quantify these type of equilibria  in terms of the
agents' aversion to risk, \citet{babaioff2014efficiency} defined the notion of
\emph{exposure factor}, which measures the amount of risk that an agent is willing
to expose herself to, when best-responding. In the terminology of our setting, the
exposure $\gamma$ of strategy $s_i$ is such that
\[
p_i (s_i,s_{-i}) \geq (1+\gamma) \sum_{i \in S_i} t_i,
\]
where $s_{-i}$ is any vector of strategies of the other machines and $S_i$ is the
set of tasks assigned to machine $i$ under $\vecc{x}(s_i, s_{-i})$. In simple
words, $\gamma$ is used to quantify how much extra cost agent $i$ could possibly
experience, if all other agents coordinated to a strategy that is the worst possible
for the agent. Then, \citet{babaioff2014efficiency} proceed to define the set
$\mathcal{Q}_t^\gamma$ as the set of all equilibria that consist only of strategies
with exposure factor at most $\gamma$ and the corresponding notion of the Price of
Anarchy with the respect to this equilibrium set. Given a parameter $\alpha$,
mechanism $\mechsp_\alpha$ can be seen as achieving a PoS guarantee even with
respect to the set $Q_t^{\alpha-1}$ of equilibria consisting of strategies of
exposure at most $\alpha-1$. Conceptually, even if one is only willing to accept a
certain level of risk exposure, an
appropriate $\beta$ can be chosen and the corresponding mechanisms $\mechsp_\alpha$
for $\alpha \leq \beta$, will lie on the inefficiency boundary, even if the solution
concept is the $\gamma$-exposure Nash equilibrium.

Whether there exists a mechanism that can match,
in \emph{undominated} Nash equilibria, the guarantees of $\mechsp_\alpha$, is an interesting
open question; we believe though that this is rather unlikely. That being said,
proving impossibility results for general classes of mechanisms seems quite
challenging, as the property of not being dominated does not convey much information
from a technical standpoint and in particular has different implications for
different mechanisms. This is in contrast with the more standard approach in
auctions, where \emph{specific} mechanisms have been studied, for which undominated
strategies imply a very handy non-overbidding property, e.g., see
\cite{feldman2016correlated,markakis2015uniform}.

\subsection{Computational Considerations}
\label{sec:discussion_computation}

As we mentioned earlier, the mechanisms that we construct in this paper
(see~\cref{def:spa}) are rather simple and run in polynomial time, and this is
actually the case for all known mechanisms for the truthful scheduling problem as
well. It would be interesting to investigate whether adding \emph{computational
efficiency} as a desirable property of the mechanisms in question can have any
implications on the inefficiency boundary. For truthful scheduling, this is unlikely
to be an issue, since the constraint of truthfulness itself typically leads to
rather simple mechanisms which are easily seen to be efficient. As we hinted
in~\cref{sec:discussion_general_mechs} however, it is conceivable that from the
space of all possible mechanisms that we can use, the best one might employ an
allocation algorithm that is computationally intractable. Concretely, it could be
possible that the makespan-optimal algorithm (which does not run in polynomial time,
unless P=NP~\citep{LST90}) can be coupled with an appropriate payment function to
achieve a better trade-off guarantee. From this discussion, we deduce the following,
very interesting question:
\begin{quote}
	\emph{``If we add computational efficiency as a constraint, can we prove a
	stronger inefficiency boundary than that of~\cref{fig:pareto_general}?.''}
\end{quote}
Conceptually, the question above regards whether there is a fundamental connection
between the running time of the allocation rule and the PoA/PoS trade-off that can
be explored via a corresponding inefficiency boundary.

\paragraph*{Acknowledgements} 
We are grateful to the anonymous reviewers of the journal version of this paper for
their careful reading of our manuscript and for their valuable comments that helped
us to simplify and improve the presentation at key points of our work. We also
thank Elias Koutsoupias, Maria Kyropoulou and Diogo Poças for useful discussions.

\bibliographystyle{abbrvnat}
\bibliography{scheduling}

\begin{thebibliography}{55}
\providecommand{\natexlab}[1]{#1}
\providecommand{\url}[1]{\texttt{#1}}
\expandafter\ifx\csname urlstyle\endcsname\relax
  \providecommand{\doi}[1]{doi: #1}\else
  \providecommand{\doi}{doi: \begingroup \urlstyle{rm}\Url}\fi

\bibitem[Anshelevich et~al.(2008)Anshelevich, Dasgupta, Kleinberg, Tardos,
  Wexler, and Roughgarden]{anshelevich2008price}
E.~Anshelevich, A.~Dasgupta, J.~Kleinberg, {\'{E}}.~Tardos, T.~Wexler, and
  T.~Roughgarden.
\newblock The price of stability for network design with fair cost allocation.
\newblock \emph{{SIAM} Journal on Computing}, 38\penalty0 (4):\penalty0
  1602--1623, 2008.
\newblock \doi{10.1137/07068009}.

\bibitem[Archer and Tardos(2001)]{archer2001truthful}
A.~Archer and {\'E}.~Tardos.
\newblock Truthful mechanisms for one-parameter agents.
\newblock In \emph{Proceedings of the 42nd IEEE symposium on Foundations of
  Computer Science (FOCS)}, pages 482--491, 2001.
\newblock \doi{10.1109/sfcs.2001.959924}.

\bibitem[Ashlagi et~al.(2012)Ashlagi, Dobzinski, and Lavi]{ashlagi2012optimal}
I.~Ashlagi, S.~Dobzinski, and R.~Lavi.
\newblock Optimal lower bounds for anonymous scheduling mechanisms.
\newblock \emph{Mathematics of Operations Research}, 37\penalty0 (2):\penalty0
  244--258, 2012.
\newblock \doi{10.1287/moor.1110.0534}.

\bibitem[Babaioff et~al.(2014)Babaioff, Lucier, Nisan, and
  Paes~Leme]{babaioff2014efficiency}
M.~Babaioff, B.~Lucier, N.~Nisan, and R.~Paes~Leme.
\newblock On the efficiency of the {Walrasian} mechanism.
\newblock In \emph{Proceedings of the 15th ACM Conference on Economics and
  Computation (EC)}, pages 783--800, 2014.
\newblock \doi{10.1145/2600057.2602850}.

\bibitem[Caragiannis et~al.(2015)Caragiannis, Kaklamanis, Kanellopoulos,
  Kyropoulou, Lucier, Leme, and Tardos]{caragiannis2015bounding}
I.~Caragiannis, C.~Kaklamanis, P.~Kanellopoulos, M.~Kyropoulou, B.~Lucier,
  R.~P. Leme, and {\'E}.~Tardos.
\newblock Bounding the inefficiency of outcomes in generalized second price
  auctions.
\newblock \emph{Journal of Economic Theory}, 156:\penalty0 343--388, 2015.
\newblock \doi{10.1016/j.jet.2014.04.010}.

\bibitem[Chawla et~al.(2013)Chawla, Hartline, Malec, and Sivan]{Chawla2013a}
S.~Chawla, J.~D. Hartline, D.~Malec, and B.~Sivan.
\newblock Prior-independent mechanisms for scheduling.
\newblock In \emph{Proceedings of the 45th Annual ACM Symposium on Theory of
  Computing (STOC)}, pages 51--60, 2013.
\newblock \doi{10.1145/2488608.2488616}.

\bibitem[Christodoulou and Koutsoupias(2009)]{christodoulou2009mechanism}
G.~Christodoulou and E.~Koutsoupias.
\newblock Mechanism design for scheduling.
\newblock \emph{Bulletin of the EATCS}, 97:\penalty0 40--59, 2009.

\bibitem[Christodoulou et~al.(2008)Christodoulou, Koutsoupias, and
  Vidali]{christodoulou2008characterization}
G.~Christodoulou, E.~Koutsoupias, and A.~Vidali.
\newblock A characterization of 2-player mechanisms for scheduling.
\newblock In \emph{16th Annual European Symposium on Algorithms (ESA)}, pages
  297--307, 2008.
\newblock \doi{10.1007/978-3-540-87744-8}.

\bibitem[Christodoulou et~al.(2009)Christodoulou, Koutsoupias, and
  Vidali]{christodoulou2009lower}
G.~Christodoulou, E.~Koutsoupias, and A.~Vidali.
\newblock A lower bound for scheduling mechanisms.
\newblock \emph{Algorithmica}, 55\penalty0 (4):\penalty0 729--740, 2009.
\newblock \doi{10.1007/s00453-008-9165-3}.

\bibitem[Christodoulou et~al.(2010)Christodoulou, Koutsoupias, and
  Kov\'{a}cs]{christodoulou2010mechanism}
G.~Christodoulou, E.~Koutsoupias, and A.~Kov\'{a}cs.
\newblock Mechanism design for fractional scheduling on unrelated machines.
\newblock \emph{ACM Transactions on Algorithms}, 6\penalty0 (2):\penalty0
  38:1--38:18, 2010.
\newblock \doi{10.1145/1721837.1721854}.

\bibitem[Christodoulou et~al.(2016)Christodoulou, Filos-Ratsikas, Frederiksen,
  Goldberg, Zhang, and Zhang]{christodoulou2016social}
G.~Christodoulou, A.~Filos-Ratsikas, S.~K.~S. Frederiksen, P.~W. Goldberg,
  J.~Zhang, and J.~Zhang.
\newblock Social welfare in one-sided matching mechanisms (extended abstract).
\newblock In \emph{Proceedings of the International Conference on Autonomous
  Agents and Multiagent Systems (AAMAS)}, pages 1297--1298, 2016.
\newblock \doi{10.1007/978-3-319-46882-2_3}.

\bibitem[Clarke(1971)]{Clarke1971a}
E.~H. Clarke.
\newblock Multipart pricing of public goods.
\newblock \emph{Public Choice}, 11\penalty0 (1):\penalty0 17--33, 1971.
\newblock \doi{10.1007/bf01726210}.

\bibitem[Daskalakis and Weinberg(2015)]{DW15}
C.~Daskalakis and S.~M. Weinberg.
\newblock Bayesian truthful mechanisms for job scheduling from bi-criterion
  approximation algorithms.
\newblock In \emph{Proceedings of the 26th annual ACM-SIAM Symposium on
  Discrete Algorithms (SODA)}, pages 1934--1952, 2015.
\newblock \doi{10.1137/1.9781611973730.130}.

\bibitem[Davis and Jaffe(1981)]{Davis1981}
E.~Davis and J.~M. Jaffe.
\newblock Algorithms for scheduling tasks on unrelated processors.
\newblock \emph{Journal of the {ACM}}, 28\penalty0 (4):\penalty0 721--736,
  1981.
\newblock \doi{10.1145/322276.322284}.

\bibitem[Dobzinski and Shaulker(2020)]{dobzinski2020improved}
S.~Dobzinski and A.~Shaulker.
\newblock Improved lower bounds for truthful scheduling.
\newblock \emph{ArXiv}, abs/2007.04362, 2020.
\newblock URL \url{https://arxiv.org/abs/2007.04362}.

\bibitem[Dobzinski and Sundararajan(2008)]{dobzinski2008characterizations}
S.~Dobzinski and M.~Sundararajan.
\newblock On characterizations of truthful mechanisms for combinatorial
  auctions and scheduling.
\newblock In \emph{Proceedings of the 9th ACM Conference on Electronic Commerce
  (EC)}, pages 38--47, 2008.
\newblock \doi{10.1145/1386790.1386798}.

\bibitem[Feldman et~al.(2016)Feldman, Lucier, and Nisan]{feldman2016correlated}
M.~Feldman, B.~Lucier, and N.~Nisan.
\newblock Correlated and coarse equilibria of single-item auctions.
\newblock In \emph{Proceedings of the 12th International Conference on Web and
  Internet Economics (WINE)}, pages 131--144, 2016.
\newblock \doi{10.1007/978-3-662-54110-4_10}.

\bibitem[Filos-Ratsikas et~al.(2014)Filos-Ratsikas, Frederiksen, and
  Zhang]{filos2014truthful}
A.~Filos-Ratsikas, S.~K.~S. Frederiksen, and J.~Zhang.
\newblock Social welfare in one-sided matchings: Random priority and beyond.
\newblock In \emph{Proceedings of the 7th International Symposium on
  Algorithmic Game Theory (SAGT)}, pages 1--12, 2014.
\newblock \doi{10.1007/978-3-662-44803-8_1}.

\bibitem[Filos-Ratsikas et~al.(2019)Filos-Ratsikas, Giannakopoulos, and
  Lazos]{fgl2019}
A.~Filos-Ratsikas, Y.~Giannakopoulos, and P.~Lazos.
\newblock The {Pareto} frontier of inefficiency in mechanism design.
\newblock In \emph{Proceedings of the 15th Conference on Web and Internet
  Economics (WINE)}, pages 186--199, 2019.
\newblock \doi{10.1007/978-3-030-35389-6_14}.

\bibitem[Gairing(2009)]{gairing2009covering}
M.~Gairing.
\newblock Covering games: Approximation through non-cooperation.
\newblock In \emph{Proceedings of the 5th International Workshop on Internet
  and Network Economics (WINE)}, pages 184--195, 2009.
\newblock \doi{10.1007/978-3-642-10841-9_18}.

\bibitem[Garey et~al.(1976)Garey, Johnson, and Sethi]{garey1976complexity}
M.~R. Garey, D.~S. Johnson, and R.~Sethi.
\newblock The complexity of flowshop and jobshop scheduling.
\newblock \emph{Mathematics of Operations Research}, 1\penalty0 (2):\penalty0
  117--129, 1976.
\newblock URL \url{http://www.jstor.org/stable/3689278}.

\bibitem[Giannakopoulos and Kyropoulou(2017)]{gkyr2015-wine}
Y.~Giannakopoulos and M.~Kyropoulou.
\newblock The {VCG} mechanism for bayesian scheduling.
\newblock \emph{ACM Trans. Econ. Comput.}, 5\penalty0 (4):\penalty0
  19:1--19:16, 2017.
\newblock \doi{10.1145/3105968}.

\bibitem[Giannakopoulos et~al.(2019)Giannakopoulos, Koutsoupias, and
  Kyropoulou]{GKK16}
Y.~Giannakopoulos, E.~Koutsoupias, and M.~Kyropoulou.
\newblock The anarchy of scheduling without money.
\newblock \emph{Theoretical Computer Science}, 778:\penalty0 19--32, 2019.
\newblock \doi{10.1016/j.tcs.2019.01.022}.

\bibitem[Giannakopoulos et~al.(2020)Giannakopoulos, Hammerl, and
  Po{\c{c}}as]{giannakopoulos2020new}
Y.~Giannakopoulos, A.~Hammerl, and D.~Po{\c{c}}as.
\newblock A new lower bound for deterministic truthful scheduling.
\newblock In \emph{Proceedings of the 13th Symposium on Algorithmic Game Theory
  (SAGT)}, 2020.
\newblock \doi{10.1007/978-3-030-57980-7_15}.

\bibitem[Graham(1966)]{graham1966bounds}
R.~L. Graham.
\newblock Bounds for certain multiprocessing anomalies.
\newblock \emph{Bell System Technical Journal}, 45\penalty0 (9):\penalty0
  1563--1581, nov 1966.
\newblock \doi{10.1002/j.1538-7305.1966.tb01709.x}.

\bibitem[Groves(1973)]{Groves1973a}
T.~Groves.
\newblock Incentives in teams.
\newblock \emph{Econometrica}, 41\penalty0 (4):\penalty0 617--631, 1973.
\newblock \doi{10.2307/1914085}.

\bibitem[Hall(1997)]{HochbaumHall97}
L.~A. Hall.
\newblock Approximation algorithms for scheduling.
\newblock In D.~S. Hochbaum, editor, \emph{Approximation Algorithms for NP-hard
  Problems}, pages 1--45. PWS Publishing Company, 1997.

\bibitem[Ibarra and Kim(1977)]{ibarra1977heuristic}
O.~H. Ibarra and C.~E. Kim.
\newblock Heuristic algorithms for scheduling independent tasks on nonidentical
  processors.
\newblock \emph{Journal of the {ACM}}, 24\penalty0 (2):\penalty0 280--289,
  1977.
\newblock \doi{10.1145/322003.322011}.

\bibitem[Jackson(1955)]{jackson1955scheduling}
J.~R. Jackson.
\newblock Scheduling a production line to minimize maximum tardiness.
\newblock Research Report 43, Mgmt. Sci. Research Project, University of
  California, Los Angeles, 1955.

\bibitem[Johnson(1954)]{johnson1954optimal}
S.~M. Johnson.
\newblock Optimal two-and three-stage production schedules with setup times
  included.
\newblock \emph{Naval research logistics quarterly}, 1\penalty0 (1):\penalty0
  61--68, 1954.
\newblock \doi{10.1002/nav.3800010110}.

\bibitem[Koutsoupias(2014)]{K14}
E.~Koutsoupias.
\newblock Scheduling without payments.
\newblock \emph{Theory of Computing Systems}, 54\penalty0 (3):\penalty0
  375--387, 2014.
\newblock \doi{10.1007/s00224-013-9473-0}.

\bibitem[Koutsoupias and Papadimitriou(2009)]{elias}
E.~Koutsoupias and C.~Papadimitriou.
\newblock Worst-case equilibria.
\newblock \emph{Computer Science Review}, 3\penalty0 (2):\penalty0 65--69,
  2009.
\newblock \doi{10.1016/j.cosrev.2009.04.003}.

\bibitem[Koutsoupias and Vidali(2013)]{KV13}
E.~Koutsoupias and A.~Vidali.
\newblock A lower bound of 1+$\phi$ for truthful scheduling mechanisms.
\newblock \emph{Algorithmica}, 66\penalty0 (1):\penalty0 211--223, 2013.
\newblock \doi{10.1007/s00453-012-9634-6}.

\bibitem[Lavi and Swamy(2009)]{lavi2009truthful}
R.~Lavi and C.~Swamy.
\newblock Truthful mechanism design for multidimensional scheduling via cycle
  monotonicity.
\newblock \emph{Games and Economic Behavior}, 67\penalty0 (1):\penalty0
  99--124, 2009.
\newblock \doi{10.1016/j.geb.2008.08.001}.

\bibitem[Lenstra et~al.(1977)Lenstra, {Rinnooy Kan}, and
  Brucker]{lenstra1977complexity}
J.~Lenstra, A.~{Rinnooy Kan}, and P.~Brucker.
\newblock Complexity of machine scheduling problems.
\newblock In P.~Hammer, E.~Johnson, B.~Korte, and G.~Nemhauser, editors,
  \emph{Studies in Integer Programming}, volume~1 of \emph{Annals of Discrete
  Mathematics}, pages 343--362. 1977.
\newblock \doi{10.1016/S0167-5060(08)70743-X}.

\bibitem[Lenstra et~al.(1990)Lenstra, Shmoys, and Tardos]{LST90}
J.~K. Lenstra, D.~B. Shmoys, and {\'{E}}.~Tardos.
\newblock Approximation algorithms for scheduling unrelated parallel machines.
\newblock \emph{Mathematical Programming}, 46\penalty0 (1):\penalty0 259--271,
  1990.
\newblock \doi{10.1007/bf01585745}.

\bibitem[Lu and Yu(2008)]{lu2008randomized}
P.~Lu and C.~Yu.
\newblock Randomized truthful mechanisms for scheduling unrelated machines.
\newblock In \emph{Proceedings of the 4th International Workshop on Internet
  and Network Economics (WINE)}, pages 402--413, 2008.
\newblock \doi{10.1007/978-3-540-92185-1_46}.

\bibitem[Lucier et~al.(2013)Lucier, Singer, Syrgkanis, and Tardos]{Lucier2013}
B.~Lucier, Y.~Singer, V.~Syrgkanis, and E.~Tardos.
\newblock Equilibrium in combinatorial public projects.
\newblock In \emph{Proceedings of the 9th International Conference on Web and
  Internet Economics (WINE)}, pages 347--360, 2013.
\newblock \doi{10.1007/978-3-642-45046-4{\_}28}.

\bibitem[Markakis and Telelis(2015)]{markakis2015uniform}
E.~Markakis and O.~Telelis.
\newblock Uniform price auctions: Equilibria and efficiency.
\newblock \emph{Theor. Comp. Sys.}, 57\penalty0 (3):\penalty0 549–--575,
  2015.
\newblock \doi{10.1007/s00224-014-9537-9}.

\bibitem[Mu'alem and Schapira(2018)]{mu2007setting}
A.~Mu'alem and M.~Schapira.
\newblock Setting lower bounds on truthfulness.
\newblock \emph{Games and Economic Behavior}, 110:\penalty0 174--193, 2018.
\newblock \doi{10.1016/j.geb.2018.02.001}.

\bibitem[Myerson(1981)]{Myerson1981a}
R.~B. Myerson.
\newblock Optimal auction design.
\newblock \emph{Mathematics of Operations Research}, 6\penalty0 (1):\penalty0
  58--73, 1981.
\newblock \doi{10.1287/moor.6.1.58}.

\bibitem[Myerson(1997)]{Myerson1997a}
R.~B. Myerson.
\newblock \emph{Game Theory: Analysis of Conflict}.
\newblock Harvard University Press, 1997.

\bibitem[Nisan(2007)]{Nisan07}
N.~Nisan.
\newblock Introduction to mechanism design (for computer scientists).
\newblock In N.~Nisan, T.~Roughgarden, {\'{E}}.~Tardos, and V.~Vazirani,
  editors, \emph{Algorithmic Game Theory}, chapter~9. Cambridge University
  Press, 2007.

\bibitem[Nisan and Ronen(2001)]{Nisan:2001aa}
N.~Nisan and A.~Ronen.
\newblock Algorithmic mechanism design.
\newblock \emph{Games and Economic Behavior}, 35\penalty0 (1/2):\penalty0
  166--196, 2001.
\newblock \doi{10.1006/game.1999.0790}.

\bibitem[Nisan and Ronen(2007)]{NR07}
N.~Nisan and A.~Ronen.
\newblock Computationally feasible {VCG} mechanisms.
\newblock \emph{Journal of Artificial Intelligence Research}, 29\penalty0
  (1):\penalty0 19--47, 2007.
\newblock \doi{10.1613/jair.2046}.

\bibitem[Pinedo(2012)]{pinedo2012scheduling}
M.~L. Pinedo.
\newblock \emph{Scheduling}.
\newblock Springer, 2012.
\newblock \doi{10.1007/978-1-4614-2361-4}.

\bibitem[Potts and Strusevich(2009)]{potts2009fifty}
C.~N. Potts and V.~A. Strusevich.
\newblock Fifty years of scheduling: a survey of milestones.
\newblock \emph{Journal of the Operational Research Society}, 60\penalty0
  (1):\penalty0 S41--S68, 2009.
\newblock \doi{10.1057/jors.2009.2}.

\bibitem[Ramaswamy et~al.(2018)Ramaswamy, Paccagnan, and
  Marden]{ramaswamy2017impact}
V.~Ramaswamy, D.~Paccagnan, and J.~R. Marden.
\newblock Multiagent coverage problems: The trade-off between anarchy and
  stability.
\newblock \emph{CoRR}, abs/1710.01409, July 2018.
\newblock URL \url{http://arxiv.org/abs/1710.01409}.

\bibitem[{Rinnooy Kan}(1976)]{kan2012machine}
A.~H.~G. {Rinnooy Kan}.
\newblock \emph{Machine Scheduling Problems}.
\newblock Springer, 1976.
\newblock \doi{10.1007/978-1-4613-4383-7}.

\bibitem[Roughgarden and Tardos(2002)]{Roughgarden2002a}
T.~Roughgarden and {\'{E}}.~Tardos.
\newblock How bad is selfish routing?
\newblock \emph{J. ACM}, 49\penalty0 (2):\penalty0 236--259, 2002.
\newblock \doi{10.1145/506147.506153}.

\bibitem[Roughgarden et~al.(2017)Roughgarden, Syrgkanis, and
  Tardos]{roughgarden2017price}
T.~Roughgarden, V.~Syrgkanis, and E.~Tardos.
\newblock The price of anarchy in auctions.
\newblock \emph{Journal of Artificial Intelligence Research}, 59:\penalty0
  59--101, 2017.
\newblock \doi{10.1613/jair.5272}.

\bibitem[Saks and Yu(2005)]{saks2005weak}
M.~Saks and L.~Yu.
\newblock Weak monotonicity suffices for truthfulness on convex domains.
\newblock In \emph{Proceedings of the 6th ACM Conference on Electronic Commerce
  (EC)}, pages 286--293, 2005.
\newblock \doi{10.1145/1064009.1064040}.

\bibitem[Schulz and Moses(2003)]{Schulz2003}
A.~S. Schulz and N.~S. Moses.
\newblock On the performance of user equilibria in traffic networks.
\newblock In \emph{Proceedings of the 14th Annual ACM-SIAM Symposium on
  Discrete Algorithms (SODA)}, pages 86--87, 2003.

\bibitem[Syrgkanis and Tardos(2013)]{syrgkanis2013composable}
V.~Syrgkanis and E.~Tardos.
\newblock Composable and efficient mechanisms.
\newblock In \emph{Proceedings of the 45th Annual ACM Symposium on Theory of
  Computing (STOC)}, pages 211--220, 2013.
\newblock \doi{10.1145/2488608.2488635}.

\bibitem[Vickrey(1961)]{Vickrey1961a}
W.~Vickrey.
\newblock Counterspeculation, auctions and competitive sealed tenders.
\newblock \emph{Journal of Finance}, 16\penalty0 (1):\penalty0 8--37, 1961.
\newblock \doi{10.1111/j.1540-6261.1961.tb02789.x}.

\end{thebibliography}

\appendix

\section*{Appendix}

\section{Technical Lemmas}

\begin{lemma}
\label{lemma:tech_1}
For any nonnegative reals $x,y$ with $xy\neq 0$ and all positive reals $\beta,\gamma$:
$$
\frac{x+\beta y}{\max\sset{x,\gamma y}} \leq \frac{\beta}{\gamma}+1
$$
\end{lemma}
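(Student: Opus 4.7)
The plan is to prove the inequality by a straightforward case analysis on which term attains the maximum in the denominator. Since $x, y \geq 0$ and $xy \neq 0$, both $x$ and $y$ are strictly positive, so all divisions below are well-defined.

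First I would split into two cases. In the case $x \geq \gamma y$, the denominator equals $x$, and the expression becomes
\[
\frac{x + \beta y}{x} = 1 + \frac{\beta y}{x} \leq 1 + \frac{\beta y}{\gamma y} = 1 + \frac{\beta}{\gamma},
\]
where the inequality uses $x \geq \gamma y > 0$, so $1/x \leq 1/(\gamma y)$.

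In the complementary case $x < \gamma y$, the denominator equals $\gamma y$, and
\[
\frac{x + \beta y}{\gamma y} = \frac{x}{\gamma y} + \frac{\beta}{\gamma} \leq 1 + \frac{\beta}{\gamma},
\]
where the inequality uses $x < \gamma y$, i.e.\ $x/(\gamma y) < 1$.

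There is no real obstacle here; the statement is a short algebraic fact, and the only mild care needed is to ensure that the hypothesis $xy \neq 0$ (combined with nonnegativity) guarantees $x, y > 0$ so that the denominators in the two cases are positive and the case split is exhaustive.
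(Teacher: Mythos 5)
Your proof is correct and follows essentially the same two-case split as the paper (on whether $x$ or $\gamma y$ attains the maximum), with the same elementary estimates in each branch. The only cosmetic difference is that you take the second case to be strict ($x<\gamma y$) while the paper uses $\gamma y \geq x$; both choices are exhaustive and harmless.
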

\begin{proof}
There are two cases to consider. First, if $x\geq \gamma y$, then
$$
\frac{x+\beta y}{\max\sset{x,\gamma y}}
=\frac{x+\beta y}{x}
=\beta\frac{y}{x}+1
\leq \beta\frac{1}{\gamma}+1.
$$
Secondly, if $\gamma y \geq x$, then
$$
\frac{x+\beta y}{\max\sset{x,\gamma y}}
=\frac{x+\beta y}{\gamma y}
=\frac{\beta}{\gamma}+\frac{x}{\gamma y}
\leq \frac{\beta}{\gamma}+1.
$$
\end{proof}

	\begin{lemma}\label{lemma:combi}
	For $i,j = 1, \ldots, n$, let reals $\alpha>0$, $a_{i,j} >
	0$ (for $i\neq j$) and $a_{i,i} =
	0$ such that for all $j$:
	$$\sum_{i=1}^n a_{i,j} < \frac{(n-1)\cdot \alpha}{\sqrt{2}}.$$
	Then, for any positive $ \varepsilon \leq \frac{\alpha}{(n-1)\sqrt{2}}$ there exists some $i$ such that:
	$$\max_{\emptyset \neq I \subseteq [n]\setminus \sset{i}} \frac{|I|}{\max_{j \in I} a_{i,j} +
		\varepsilon}
	> \frac{n-1}{\alpha \sqrt{2}}.$$
	The above inequality cannot be further improved.
\end{lemma}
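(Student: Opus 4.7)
Write $T=\frac{n-1}{\alpha\sqrt{2}}$ for the target ratio appearing on the right-hand side. The plan is to argue by contradiction: assume that for every row $i$ and every nonempty $I\subseteq [n]\setminus\{i\}$ the ratio $\frac{|I|}{\max_{j\in I} a_{i,j}+\varepsilon}$ is at most $T$, then show this is incompatible with the column-sum hypothesis via a double-counting argument that compares a summed row-wise lower bound with the column-sum upper bound.

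\textbf{From the contradiction hypothesis to per-row bounds.} The assumption rearranges to $\max_{j\in I} a_{i,j} \ge |I|/T - \varepsilon$. For each fixed row $i$, sort the off-diagonal entries as $a_{i,(1)}\le a_{i,(2)}\le\dots\le a_{i,(n-1)}$, and for every $k=1,\dots,n-1$ apply the inequality to the specific set $I$ of indices realising the $k$ smallest values; this gives the clean pointwise bound $a_{i,(k)}\ge k/T-\varepsilon$. A sanity check confirms that this is strictly positive already at $k=1$ under the standing hypothesis $\varepsilon\le\frac{\alpha}{(n-1)\sqrt{2}}$, so it is consistent with $a_{i,j}>0$.

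\textbf{Double counting and contradiction.} Summing $a_{i,(k)}\ge k/T-\varepsilon$ over $k=1,\dots,n-1$ yields a lower bound on the sum of the $i$-th row of the form $\frac{n\alpha}{\sqrt{2}}-(n-1)\varepsilon$, and summing once more over $i$ bounds the total $\sum_{i,j}a_{i,j}$ from below by $\frac{n^2\alpha}{\sqrt{2}}-n(n-1)\varepsilon$. On the other hand, the column-sum hypothesis gives the strict upper bound $\sum_{i,j}a_{i,j}<\frac{n(n-1)\alpha}{\sqrt{2}}$. Comparing the two and cancelling a common factor $n(n-1)$ collapses to $\varepsilon>\frac{\alpha}{(n-1)\sqrt{2}}$, which contradicts the hypothesis on $\varepsilon$.

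\textbf{Sharpness and where the real work sits.} For the ``cannot be further improved'' clause, I would exhibit a circulant construction: set $a_{i,j}=(1-\delta)(k/T-\varepsilon)$ for $k=(j-i)\bmod n\neq 0$ (and $a_{i,i}=0$), where $\delta>0$ is a tiny rescaling that enforces strict column-sum inequality. Each row and column then has the same sorted multiset of off-diagonal entries, and a direct computation shows that $\max_I \frac{|I|}{\max_{j\in I}a_{i,j}+\varepsilon}$ tends to $T$ as $\delta\to 0$, showing that the constant $T$ on the right-hand side cannot be increased. The algebra in the main argument is entirely routine; the one step I view as the ``trick'' is identifying the right family of test sets $I$ in the contradiction step---namely the nested family consisting of the $k$ smallest entries of each row---since this is precisely what makes the pointwise inequalities aggregate into a row-sum bound that exactly matches the column-sum hypothesis up to the $\varepsilon$-slack.
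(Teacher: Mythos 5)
Your proof is correct and follows essentially the same route as the paper's: assume the ratio is at most $\frac{n-1}{\alpha\sqrt 2}$ for all rows and test sets, apply this to the nested prefix sets of the sorted row entries to get $a_{i,(k)}\ge k/T-\varepsilon$, sum over $k$ and $i$, and contradict the column-sum hypothesis. The only cosmetic difference is that you compare grand totals directly while the paper pigeonholes to a single overloaded column; your circulant sharpness example is likewise a minor variant of the one in the paper.
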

\begin{proof}
	We proceed using a proof by contradiction. For any fixed $i$, we use the
	index $i_k$ to refer to $a_{i,i_k}$, the $k$-th smallest among the $a_{i,j}$.
	By
	contradiction, we have that for all $i$ and nonempty $I \subseteq [n]
	\setminus \sset{i}$:
	$$ \frac{|I|}{\max_{j \in I} a_{i,j}+\varepsilon} \le \frac{n-1}{\alpha \sqrt{2}}.$$
	In particular, for all $I_k = \sset{2,\ldots, k}$ (note that $a_{i,i_1} = a_{i,i} =
	0$) we get:
	$$
	\frac{k-1}{a_{i,{i_k}}+\varepsilon} \le
	\frac{n-1}{\alpha \sqrt{2}}
	\quad\then\quad a_{i,{i_k}} \ge \alpha\frac{(k-1)\sqrt{2}}{n-1} - \varepsilon.
	$$
	Summing over all values of $i$ and $j$:
	\begin{align*}
	\sum_{i=1}^n \sum_{j=1}^n a_{i,j}
	=\sum_{i=1}^n \sum_{k=2}^n a_{i,{i_k}}
	&\ge \sum_{i=1}^n \sum_{k=2}^n \alpha\frac{(k-1)\sqrt{2}}{n-1}-\varepsilon
	\\
	&= \frac{\alpha\sqrt{2}}{n-1} \sum_{i=1}^n \sum_{k=2}^n k-1-\varepsilon \\
	&= \frac{\alpha\sqrt{2}}{n-1} \cdot n \cdot \frac{n(n-1)}{2} -
	n(n-1)\varepsilon \\
	&= n \frac{n\cdot \alpha}{\sqrt{2}} - n(n-1)\varepsilon.
	\end{align*}
	Since the $a_{ij}$ are partitioned by the $n$ distinct values of $j$, there
	must be some $j$ for which:
	$$
	\sum_{i=1}^n a_{i,j} \ge \frac{n\cdot \alpha}{\sqrt{2}} - (n-1)\varepsilon \ge
	\frac{(n-1)\cdot \alpha}{\sqrt{2}},
	$$
	leading to a contradiction.

	\bigskip
	This result is essentially tight. For any $\delta > 0$ consider the
	matrix:
	$$
	\alpha \frac{\sqrt{2} - \delta}{n-1} \cdot
	\begin{pmatrix}
	0 & 1 & 2 & \cdots & n-2 & n-1 \\
	n-1 & 0 & 1 & \cdots & n-3 & n-2 \\
	\vdots & \vdots & \vdots & \vdots & \vdots & \vdots\\
	1  & 2 & 3  & \cdots & n-1 & 0  \\
	\end{pmatrix},
	$$
	where every column sum is $< (n-1)\alpha/\sqrt{2}$ and for every row $i$
	and nonempty $I \subseteq [n] \setminus \sset{i}$:
	$$
	\frac{|I|}{\max_{j \in I} a_{i,j}} \le \frac{|I|}{a_{i,{i_{|I|+1}}}}
	= \frac{|I|}{\alpha(\sqrt{2} -\delta)\frac{|I|}{n-1}}
	= \frac{n-1}{\alpha(\sqrt{2} - \delta)}.
	$$
\end{proof}

\section{Proof of \texorpdfstring{\cref{thm:sqrt-tradeoff}}{Theorem~9}}
\label{sec:app-sqrt2}
	\newcommand{\h}{\vecc{h}}
	Without loss of generality, we assume that mechanism $\mech$ allocates
	each
	task independently by running the \emph{same} single-task mechanism for
	\emph{every} task. The reason the analysis carries over is that we will only
	use tasks drawn from a finite pool. Restricted to these profiles of true 
	processing times,
	there are only finitely many \emph{essentially different} single-task
	mechanisms,
	when the difference is measured from the perspective of allocations.
	Therefore,
	even if the task-independent mechanism $\mech$ used a different mechanism 
	for
	every
	task, we could always find $n$ single-component mechanisms operating the
	same way. For a rigorous treatment, we refer the reader to
	\cref{lemma:multi2single}.

	We aim to identify some ``weakness'' of
	$\mech$ by discovering, for canonical cost vectors that are permutations of 
	the true
	processing times $(1,x,
	\infty, \infty, \ldots)$, just how much larger $x$ can get, so that there exists
	some equilibrium under which the  task is allocated to the slow machine. We
	refer to the machine with processing time $1$ as the \emph{fast} machine and to
	the other, which does not have a processing time of $\infty$, as the \emph{slow}
	machine. As the mechanism is \emph{not} anonymous, there might be a different
	$x$ for every permutation.

	Formally, fix $\frac{\alpha}{(n-1)\sqrt{2}} \ge \varepsilon > 0$ and for $i,j =
	1,\ldots,n$
	and $i \neq j$, let
	$$
	a_{i,j} = \max_{k \in \mathbb{N}} \sset{k\cdot \varepsilon \fwh{\text{for }
			{\t} = (\ldots, \underbrace{1}_i, \ldots,
			\underbrace{k\cdot \varepsilon}_j, \ldots),\:
			\exists {\ss}_{i,j}
			\text{ s.t. machine } j \text{ is allocated the task}}}
	$$
	be the maximum processing time of the slow machine, when the fast machine has
	index $i$, the slow machine has index $j$ and the slow machine can still receive
	the task for some equilibrium. To ensure that \cref{lemma:multi2single} can be
	applied, we need to discretize the processing times, in order to have a finite
	number of possible tasks. Let $\bar{a}_{i,j} = a_{i,j} + \varepsilon$. Clearly,
	if $a_{i,j}$ is replaced by $\bar{a}_{i,j}$ in its canonical cost vector then
	the \emph{only allocation} remaining will be to give the task to the fast
	machine. For convenience, we also set $a_{i,i} =
	\bar{a}_{i,i} = 0$. Let $\h_{i,j}$ be the permutation of the canonical cost vector 
	where 
	$i$ is the fast machine and $j$ the slow machine with cost $a_{i,j}$ and 
	$\bar{\h}_{i,j}$ the same but with cost $\bar{a}_{i,j}$ instead.

	Next, observe that $0 < a_{i,j} < (n-1)\alpha/\sqrt{2}+1$ for all $i \neq j$. The 
	upper
	bound
	is due to the assumption on the $\poa$, which would otherwise be violated for
	${\h}_{i,j}$. For the lower bound, the instance $\bar{{\h}}_{i,j}$
	would have $\poa
	\ge 1/\varepsilon$, leading to a contradiction, for some small enough
	$\varepsilon$.

	\bigskip
	For every machine $j$, we create the following profile of true processing 
	times:
	$$
	{\t}(j)
	= ( {\h}_{1,j}, \ldots , {\h}_{(i-1),j},  \vecc{f}^\star(j),
	{\h}_{(i+1),j}, \ldots, {\h}_{n,j})^\top,
	$$
	where $\vecc{f}^\star$ contains only one entry that is not $\infty$, at row $j$.
	Presented in a more convenient matrix form:
	$$
	\left.\begin{pmatrix}
	1 & \infty & \infty & \infty & \infty &\cdots & \infty & \infty & \infty \\
	\infty & 1 & \infty & \infty & \infty &\cdots & \infty  & \infty & \infty \\
	\vdots   & \ddots & \ddots &  \ddots & \ddots &\ddots & \ddots &  \ddots &
	\infty\\
	\infty  & \infty  & \cdots & 1 & \infty &\infty & \cdots  & \infty & \infty\\
	a_{1,j}  & a_{2,j}  & \cdots & a_{(j-1),j} & 1 & a_{(j+1),j} & \cdots & a_{(n-1),j} &
	a_{n,j} \\
	\infty  & \infty  & \cdots & \infty & \infty &1 & \infty & \cdots & \infty\\
	\vdots   & \vdots & \vdots & \vdots & \vdots &\ddots & \ddots & \ddots &
	\vdots\\
	\infty & \infty & \cdots & \infty & \infty &\infty & \infty & 1 & \infty \\
	\infty & \infty & \cdots & \infty & \infty &\infty & \infty & \infty & 1
	\end{pmatrix}\quad\right\}n,
	$$
	where machine $j$ is the slow machine for all the tasks and for each task there exists exactly one,
	distinct, fast machine. By task-independence, the profile of reported processing times
	$$( {\ss}_{1,j}, {\ss}_{2,j}, \ldots ,
	{\ss}_{(i-1),j},\vecc{f}^\star_{EQ}, {\ss}_{(i+1),j},
	\ldots, {\ss}_{n,j})^\top$$ is an equilibrium in which all tasks are allocated to
	the slow machine and $\vecc{f}^\star_{EQ}$ is any equilibrium for task
	$j$, which clearly has to select the $j$-th machine.  Since
	the optimal makespan is $1$, by the given bound on the $\poa$,
	we have that for all $j$:
	\begin{equation}\label{eq:poa_upper_bound}
		\sum_{i=1}^n a_{i,j} + 1 < \frac{(n-1)\cdot\alpha}{\sqrt{2}} + 1 \quad\then\quad
		\sum_{i=1}^n a_{i,j} < \frac{(n-1)\cdot\alpha}{\sqrt{2}}
	\end{equation}

	In the proof of~\cref{thm:task_independent_tradeoff}, in order to obtain the
	bound for the $\pos$, we swapped the positions of the slow and fast machines on
	each column, which was facilitated by the assumption that the mechanism in that
	case was anonymous. In this case however mechanism $\mech$ is not anonymous, so
	we have to make a slight adjustment and change $a_{i,j}$ to $\bar{a}_{i,j}$. To
	obtain a lower bound on the $\pos$, we instead fix $i$ and using the true
	processing times $$\bar{\t}(i) = ( \bar{\h}_{i,1}, \ldots , \bar{\h}_{i,(i-1)},
	\bar{\vecc{f}}^\star,
	\bar{\h}_{i,(i+1)}, \ldots, \bar{\h}_{i,n})^\top,$$
	we get a matrix similar to the previous one, i.e.,
	$$
	\begin{pmatrix}
	\bar{a}_{i,1} & \infty & \infty & \infty & \infty &\cdots & \infty & \infty & \infty \\
	\infty & \bar{a}_{i,2} & \infty & \infty & \infty &\cdots & \infty  & \infty & \infty \\
	\vdots   & \ddots & \ddots &  \ddots & \ddots &\ddots & \ddots &  \ddots &
	\infty\\
	\infty  & \infty  & \cdots & \bar{a}_{i,(i-1)} & \infty &\infty & \cdots  & \infty &
	\infty\\
	1  & 1  & \cdots & 1 & \max_j \bar{a}_{i,j} & 1 & \cdots & 1 & 1 \\
	\infty  & \infty  & \cdots & \infty & \infty &\bar{a}_{i,(i+1)} & \infty & \cdots &
	\infty\\
	\vdots   & \vdots & \vdots & \vdots & \vdots &\ddots & \ddots & \ddots &
	\vdots\\
	\infty & \infty & \cdots & \infty & \infty &\infty & \infty & \bar{a}_{i,(n-1)} & \infty
	\\
	\infty & \infty & \cdots & \infty & \infty &\infty & \infty & \infty & \bar{a}_{i,n}
	\end{pmatrix}
	$$

	Again, by task independence and the definition of $\bar{a}_{i,j}$, the mechanism
	has to allocate each task to the \emph{fast} machine, which for all tasks is
	machine $i$. The optimal allocation could potentially only allocate to slow
	machines and produce an allocation with makespan at most $\max_{j}
	\bar{a}_{i,j}$. The lower bound on the $\pos$ can be further improved by
	restricting the input to only contain a nonempty subset $I \subseteq [n]
	\setminus \sset{i}$ of the tasks, plus task $i$ which is special. This reduces
	the makespan of \emph{both} $\mech$ and the optimal, but the overall fraction
	could increase. Notice that task $i$ (which has only one entry which is not
	$\infty$, adjusted to be the maximum amongst the $|I|$ selected $\bar{a}_{i,j}$)
	is \emph{always} added, as it increases $\mech$'s makespan for free. In
	particular:
	\begin{equation}\label{eqn:pos_bound}
		\pos
		\ge
		\max_{I \subseteq [n]\setminus \sset{i}} \frac{|I| + \max_{j \in I}
			\bar{a}_{i,j}}{\max_{j \in I} \bar{a}_{i,j}}
		=
		\max_{I \subseteq [n]\setminus \sset{i}} \frac{|I|}{\max_{j \in I} \bar{a}_{i,j}} +
		1
	\end{equation}

	Notice that the $a_{i,j}$ satisfy the premises of \cref{lemma:combi}: the `if' 
	part holds because each $\t(j)$ generates one such inequality by the $\poa$ 
	bound. Similarly, any inequality generated by the `then' part can be captured 
	by an appropriate $\pos$ bound in $\bar{\t}(i)$.
	Applying the result of \cref{lemma:combi} to \cref{eqn:pos_bound} we get:
	$$
	\pos
	\ge
	\max_{I \subseteq [n]\setminus \sset{i}} \frac{|I|}{\max_{j \in I} \bar{a}_{i,j}} +
	1
	=
	\max_{I \subseteq [n]\setminus \sset{i}} \frac{|I|}{\max_{j \in I}
		a_{i,j} + \varepsilon} + 1
	> \frac{n-1}{\alpha\sqrt{2}} + 1,
	$$
	which completes the proof.

\

The proof of \cref{thm:sqrt-tradeoff} shows that foregoing anonymity cannot lead to
mechanisms with \emph{asymptotically} tighter Pareto frontiers. Moreover, it leads to the
following observations. Most likely, if there exists a stronger bound, it cannot be
obtained with our canonical cost vectors and would need to use a richer input. At
the same time, if a non-anonymous mechanism that achieves a better trade-off than
that given by \cref{thm:task_independent_tradeoff} does exist, the matrix at the end
of \cref{lemma:combi} could provide some insight on what that mechanism could look
like, at least when restricted to these tasks.

\begin{lemma}\label{lemma:multi2single}
	\cref{thm:sqrt-tradeoff} holds for all task-independent mechanisms
	$\mech$.
\end{lemma}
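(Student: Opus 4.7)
The plan is to reduce the general case---where the task-independent mechanism $\mech$ may employ a different single-task component $\AA_j$ for each task---to the simplified case treated in the proof of \cref{thm:sqrt-tradeoff}, in which a single common single-task mechanism handles every task. The reduction proceeds by a pigeonhole argument exploiting two finiteness facts about the inputs that the proof actually probes.

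First I would catalogue the finite set $\mathcal{P}$ of inputs that the proof queries. Every cost vector used is, up to permutation of machine indices, of the form $(1, k\varepsilon, \infty, \ldots, \infty)$ with $k \in \mathbb{N}$ satisfying $k\varepsilon \le (n-1)\alpha/\sqrt{2}+1$ (larger $k$'s would immediately contradict the assumed $\poa$ bound), together with the analogous $\bar{a}$-perturbations. Discretization by $\varepsilon$ combined with this cap makes $\mathcal{P}$ finite. I would then declare two single-task mechanisms \emph{allocation-equivalent} when, for every input in $\mathcal{P}$, the set of machines that can receive the task in some pure Nash equilibrium coincides. Since that set is a subset of $[n]$, there are at most $K := 2^{n|\mathcal{P}|}$ equivalence classes.

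Next, recalling that $\poa$ and $\pos$ are taken as suprema over the number of tasks $m$, it is legitimate to consider instances with $m \geq nK$. Pigeonhole then yields $n$ indices $\ell_1, \ldots, \ell_n$ whose components $\AA_{\ell_1}, \ldots, \AA_{\ell_n}$ all lie in the same equivalence class. Consequently, the values $a_{i,j}$ and $\bar{a}_{i,j}$ defined via any of these components agree, and may be treated as common, component-independent quantities. I would then replay the proof of \cref{thm:sqrt-tradeoff} using the $n$ task slots $\ell_1, \ldots, \ell_n$ as the ``active'' columns of the matrices $\t(j)$ and $\bar{\t}(i)$, while rendering the remaining $m - n$ tasks inert by setting their processing times to $0$ on every machine (which affects neither the optimum makespan nor the makespan at any equilibrium). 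The rest of the argument, including the application of \cref{lemma:combi}, then carries through verbatim and delivers the same inefficiency trade-off.

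The main obstacle I anticipate is the careful verification that the proof of \cref{thm:sqrt-tradeoff} truly depends on nothing beyond the equilibrium-allocation data over $\mathcal{P}$---i.e.~that no ``hidden'' query to a component outside this restricted pool slips in, and that the $\varepsilon$-discretization together with the $\poa$ cap really suffices to keep the grid finite. This is a routine but indispensable audit of each step; once carried out, it validates the ``without loss of generality'' assumption made at the start of the proof.
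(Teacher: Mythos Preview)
Your proposal is correct and follows essentially the same line of reasoning as the paper's own proof: both arguments identify a finite pool of relevant single-task inputs, classify component mechanisms by their equilibrium-accessible machine sets over that pool, invoke pigeonhole over sufficiently many tasks to extract $n$ components with identical behaviour, and zero out all other tasks to reduce to the simplified setting. The quantitative bounds differ only cosmetically (the paper uses $(2^n-1)^{|\mathcal{C}|}$ classes versus your $2^{n|\mathcal{P}|}$), and the audit you flag as the main obstacle is exactly what the paper handles by observing that only the accessible-allocation data are ever used.
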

\begin{proof}
	\newcommand{\tasks}{\mathcal{C}}
	Continuing from our note in the beginning of \cref{thm:sqrt-tradeoff},
	we show that given a task-independent mechanism $\mech$, it is always
	possible to find $n$ single-task component
	mechanisms that
	behave the same way for the very specific inputs needed for the lower
	bounds.
	Notice that in the proof of \cref{thm:sqrt-tradeoff} the tasks we used
	all come from a finite set of true processing times $\tasks$. In particular,
	$\tasks$ contains all the
	permutations of $(1, a, \infty, \infty, \ldots, \infty)$, where $a$ is of the form
	$k\cdot \varepsilon$ and $0 < a < (n-1)\cdot \alpha /\sqrt{2} + \varepsilon$.
	Therefore:
	$$
	\left|\tasks\right| = n(n-1) \frac{(n-1)\cdot \alpha /\sqrt{2}}{\varepsilon}.
	$$
	Fix a single-task mechanism $\mathcal{A}$.  For a true processing time vector
	${\t} \in \tasks$, call machine $j$ \emph{accessible} from ${\t}$ if there
	exists an equilibrium for which the task is assigned to that machine.
	Let $\mathit{ACC}({\t})$ be the set of accessible machines for component
	$\mathcal{A}$ for that ${\t}$ and $\mathcal{B}(\mathcal{A})$ be the
	\emph{behaviour} set of $\mathcal{A}$, defined as
	$$
	\mathcal{B}(\mathcal{A}) = \sset{({\t}, \mathit{ACC}({\t}))
		\fwh{{\t} \in \tasks}}.
	$$
	This fully characterizes $\mathcal{A}$ from the perspective of
	accessible allocations for tasks in $\tasks$. Note that for two different
	single-task mechanisms $\mathcal{A}_1$ and $\mathcal{A}_2$, it could be that
	$\mathcal{B}(\mathcal{A}_1) = \mathcal{B}(\mathcal{A}_2)$ without them being
	the same mechanism: they must however reach the same allocations given
	the same ${\t}$ (potentially through different equilibria).

	Clearly, $\mathit{ACC}({\t})$ can be one of $2^n-1$ possible sets. This is
	because
	$\mathit{ACC}({\t})$ can be any subset of the $n$ machines, except the empty
	subset, since each task has to be allocated
	to some machine at every equilibrium. This is however all we need in
	\cref{thm:sqrt-tradeoff}. Therefore, the total number of behaviour
	sets is at most:
	$$
	(2^{n}-1)^{|\tasks|},
	$$
	which is quite large, but bounded.

	Let $\mathcal{A}_j$ be the single-task mechanism used by mechanism 
	$\mech$ for the
	$j$'th
	task, given an input with $n(2^{n}-1)^{|\tasks|}$ tasks from $\tasks$ in total.
	By the pidgeonhole principle, there must be at least $n$ such single-task
	mechanisms with the same behaviour set. Setting all other tasks to have real
	processing times $0$ for all tasks, we have successfully extracted a
	mechanism $\mech'$ that behaves exactly as needed for
	\cref{thm:sqrt-tradeoff}.
\end{proof}

\end{document}